\documentclass[compsoc,conference,a4paper,10pt,times]{IEEEtran}
\usepackage{xspace}
\usepackage{xparse}
\usepackage{amsfonts,amsmath}
\usepackage{stackrel}
\usepackage{mathtools}
\usepackage{amsthm}
\usepackage[english]{babel}
\usepackage{amssymb}
\usepackage{txfonts}
\usepackage{enumitem}
\usepackage{xcolor}
\usepackage{array}
\usepackage{tikz}
\usetikzlibrary{arrows, positioning}
\usepackage{pgfplots, pgfplotstable}
\pgfplotsset{compat=1.9}
\usepgfplotslibrary{groupplots}
\usetikzlibrary{patterns}
\usepackage{pstricks}
\usepackage{subcaption}
\usepackage{wrapfig}
\usepackage[utf8]{inputenc}
\usepackage{booktabs}
\usepackage{expl3}
\usepackage{collcell}
\usepackage[hyphens]{url}
\usepackage{hyperref}
\usepackage{graphicx}
\usepackage{multirow}
\usepackage{rotating}
\usepackage{framed}
\usepackage{verbatim}
\usepackage[square,sort,comma,numbers]{natbib}
\usepackage{mdframed}
\usepackage{algorithm}
\usepackage{etoolbox}
\usepackage{titlesec}
\titlespacing*{\subsection}{0pt}{0.5\baselineskip}{0.5\baselineskip}
\mdfsetup{frametitlealignment=\center}
\usepackage[utf8]{inputenc}
\usepackage{xstring}
\usepackage{fp}
\usepackage{expl3}
\usepackage{collcell}
\usepackage{colortbl} 
\usepackage{siunitx}
\usepackage{pifont}
\newcommand{\yesIndicator}{\ding{51}\xspace}
\newcommand{\noIndicator}{\ding{55}\xspace}

\usepackage{pgfplotstable}
\usepackage{booktabs}
\usepackage[margin=1in]{geometry}
\pgfplotsset{compat=1.18}

\newcounter{note}[section]
\renewcommand{\thenote}{\thesection.\arabic{note}}
\newcommand{\fixmecolor}{red}
\newcommand{\notecolor}{blue}
\newcommand{\mkr}[1]{\refstepcounter{note}{\bf
\textcolor{\fixmecolor}{$\ll$MKR~\thenote: {\sf #1}$\gg$}}}

\newcommand{\xz}[1]{\refstepcounter{note}{\bf
\textcolor{\notecolor}{$\ll$XZ~\thenote: {\sf #1}$\gg$}}}

\newcommand{\assign}{\ensuremath{\gets}\xspace}
\newcommand{\addText}[1]{\textcolor{black}{#1}}

\NewDocumentCommand{\GCcost}{g o}%
{%
  \IfNoValueTF{#2}%
              {\ensuremath{\algNotation{GC}_{#1}}\xspace}%
              {\ensuremath{\algNotation{GC}_{#1}(#2)}\xspace}
}
\NewDocumentCommand{\PSIcost}{o}%
{%
  \IfNoValueTF{#1}%
              {\ensuremath{\algNotation{PSI}}\xspace}
              {\ensuremath{\algNotation{PSI}(#1)}\xspace}
}
\NewDocumentCommand{\FPSIcost}{o}%
{%
  \IfNoValueTF{#1}%
              {\ensuremath{\algNotation{FPSI}}\xspace}
              {\ensuremath{\algNotation{FPSI}(#1)}\xspace}
}
\NewDocumentCommand{\OLEcost}{o}%
{%
  \IfNoValueTF{#1}%
              {\ensuremath{\algNotation{OLE}}\xspace}
              {\ensuremath{\algNotation{OLE}(#1)}\xspace}
}
\NewDocumentCommand{\embedcost}{o}%
{%
  \IfNoValueTF{#1}%
              {\ensuremath{\algNotation{emb}}\xspace}
              {\ensuremath{\algNotation{emb}(#1)}\xspace}
}
\newcommand{\inputSize}{\ensuremath{\ell}\xspace}

\newcommand{\circuit}{\ensuremath{\mathsf{C}}\xspace}
\newcommand{\circuitAlt}{\ensuremath{\circuit'}\xspace}
\newcommand{\garblerInput}{\ensuremath{a}\xspace}
\newcommand{\garblerInputAlt}{\ensuremath{\garblerInput'}\xspace}
\newcommand{\evaluatorInput}{\ensuremath{b}\xspace}
\newcommand{\gate}{\ensuremath{\mathsf{g}}\xspace}
\newcommand{\gateAlt}{\ensuremath{\gate'}\xspace}
\NewDocumentCommand{\wireLabel}{o}
{\IfNoValueTF{#1}
	{\ensuremath{v}\xspace}
	{\ensuremath{v_{#1}}\xspace}
}
\NewDocumentCommand{\wireLabelAlt}{o}
{\IfNoValueTF{#1}
	{\ensuremath{v'}\xspace}
	{\ensuremath{v'_{#1}}\xspace}
}
\newcommand{\truthTableIn}{\ensuremath{w}\xspace}
\newcommand{\truthTableOut}{\ensuremath{u}\xspace}
\newcommand{\wireLabelMask}{\ensuremath{m}\xspace}

\newcommand{\secref}[1]{\mbox{Sec.~\ref{#1}}\xspace}

\newcommand{\figref}[1]{\mbox{Fig.~\ref{#1}}\xspace}

\newcommand{\tblref}[1]{\mbox{Table~\ref{#1}}\xspace}

\newcommand{\appref}[1]{\mbox{App.~\ref{#1}}\xspace}
\newcommand{\eqnref}[1]{\mbox{(\ref{#1})}\xspace}
\newcommand{\eqnsref}[2]{\mbox{(\ref{#1})--(\ref{#2})}\xspace}
\newcommand{\propref}[1]{\mbox{Prop.~\ref{#1}}\xspace}

\newcommand{\dfnref}[1]{\mbox{Definition~\ref{#1}}\xspace}

\newcommand{\lemmaref}[1]{\mbox{Lemma~\ref{#1}}\xspace}
\newcommand{\thmref}[1]{\mbox{Theorem~\ref{#1}}\xspace}
\newcommand{\stepref}[1]{\mbox{Step~\ref{#1}}\xspace}
\newcommand{\stepsref}[2]{\mbox{Steps~\ref{#1}--\ref{#2}}\xspace}

\newcommand{\myparagraph}[1]{\smallskip\noindent\textbf{#1}:\xspace}

\newtheorem{defn}{Definition}
\newtheorem{theorem}{Theorem}[section]
\newtheorem*{theorem*}{Theorem}
\newtheorem{lemma}[theorem]{Lemma}
\newtheorem{prop}[theorem]{Proposition}
\newtheorem*{prop*}{Proposition}
\newenvironment{sproof}{%
\proof}{\endproof}
\newenvironment{proofsketch}{\par\noindent\textsl{Proof (sketch):}\kern1ex}%
{\hfil\hskip2em\penalty250\parfillskip=0pt\finalhyphendemerits=0$\qed$%
	\par\smallskip\par}

\newcommand{\symmUR}{\ensuremath{\mathsf{SUR}}\xspace}
\newcommand{\symmURDist}[2]{\ensuremath{\vectorNorm{\symmUR}{{#1}-{#2}}}\xspace}
\newcommand{\symmURDefn}{\ensuremath{\vectorNorm{\symmUR}{\cdot}}\xspace}

\DeclareMathOperator{\symmDiff}{\triangle}

\NewDocumentCommand{\distance}{o}%
{%
	\IfNoValueTF{#1}%
	{\ensuremath{\fnNotation{d}}\xspace}
	{\ensuremath{\fnNotation{d}_{#1}}\xspace}
}

\NewDocumentCommand{\distanceNotation}{o o o}
{
\IfNoValueTF{#2}
{\ensuremath{\left\Vert . \right\Vert_{\mathsf{#1}}}\xspace}
{
\IfNoValueTF{#3}
{\ensuremath{\left\Vert #2 \right\Vert_{#1}}\xspace}
{\ensuremath{\left\Vert #2 - #3 \right\Vert_{#1}}\xspace}
}
}

\NewDocumentCommand{\hammingDist}{ g g }{\distanceNotation[H][#1][#2]}

\NewDocumentCommand{\embed}{o}%
{%
  \IfNoValueTF{#1}%
              {\ensuremath{\fnNotation{f}}\xspace}
              {\ensuremath{\fnNotation{f}_{#1}}\xspace}
}

\newcommand{\embedESP}{\ensuremath{\embed[\embedKey]}\xspace}

\NewDocumentCommand{\setToVecMap}{o}
{\IfNoValueTF{#1}
{\ensuremath{\mathsf{SetToVec}}\xspace}
{\ensuremath{\mathsf{SetToVec}_{#1}}\xspace}
}
\newcommand{\setOfEvalPts}{\setNotation{X}\xspace}

\NewDocumentCommand{\ESPNodeVec}{ g o
}{\ensuremath{\textsc{esp}({#1}).\mathsf{nvec}\IfNoValueF{#2}{_{#2}}}\xspace}
\NewDocumentCommand{\ESPCharVec}{ g o
}{\ensuremath{\textsc{esp}({#1}).\mathsf{cvec}\IfNoValueF{#2}{_{#2}}}\xspace}
\NewDocumentCommand{\ESPCharSet}{ g o
}{\ensuremath{\textsc{esp}({#1}).\mathsf{cset}\IfNoValueF{#2}{\mathopen{}({#2}
)\mathclose{}}}\xspace}

\NewDocumentCommand{\fileBlock}{o}{\ensuremath{\cInput\IfNoValueF{#1}{_{#1}}}\xspace}
\NewDocumentCommand{\fileBlockAlt}{o}{\ensuremath{\cInputAlt\IfNoValueF{#1}{_{
#1}}}\xspace}
\NewDocumentCommand{\fileBlockVecAlt}{o}
{\IfNoValueTF{#1}
	{\ensuremath{\vecNotation{\fileBlockAlt}}\xspace}
	{\ensuremath{\fileBlockAlt_{#1}}\xspace}
}
\newcommand{\maxFileLen}{\ensuremath{\{0,1\}^{\ast}}\xspace}

\NewDocumentCommand{\cInputVec}{o}
{\IfNoValueTF{#1}
	{\ensuremath{\vecNotation{\cInput}}\xspace}
	{\ensuremath{\vecNotation{\cInput}[#1]}\xspace}
}

\NewDocumentCommand{\cInputVecAlt}{o}
{\IfNoValueTF{#1}
	{\ensuremath{\vecNotation{\cInput}'}\xspace}
	{\ensuremath{\vecNotation{\cInput}'[#1]}\xspace}
}

\NewDocumentCommand{\simHash}{o}
{
	\IfNoValueTF{#1}
	{\ensuremath{\mathsf{SimHash}}\xspace}
	{\ensuremath{\mathsf{SimHash}_{#1}}\xspace}
}
\NewDocumentCommand{\TLSH}{o}
{
	\IfNoValueTF{#1}
	{\ensuremath{\mathsf{LSH}}\xspace}
	{\ensuremath{\mathsf{LSH}_{#1}}\xspace}
}

\newcommand{\numberOfPointsSim}{\ensuremath{\theta}\xspace}
\newcommand{\numberOfPointsSimDefn}{\ensuremath{2\degreeOfKey{1} -
\frac{\threshold}{2} + 2}\xspace}
\newcommand{\metricSpaceESP}{\ensuremath{\finiteField^{\numberOfPointsSim}
}\xspace}

\newcommand{\metricSpaceGen}{\ensuremath{\finiteField^{\metricSpaceDimGen}
}\xspace}
\newcommand{\metricSpaceDimGen}{\ensuremath{\theta}\xspace}

\makeatletter
\newcommand{\vast}{\bBigg@{4}}
\newcommand{\Vast}{\bBigg@{5}}
\makeatother

\newcommand{\oleFunc}{\ensuremath{\mathcal{F}_{\mathit{ole}}}\xspace}
\newcommand{\otFunc}{\ensuremath{\mathcal{F}_{\mathit{GC}}}\xspace}

\newcommand{\npaFunc}{\ensuremath{\mathcal{F}_{\mathit{npa}}}\xspace}

\NewDocumentCommand{\hash}{o}%
{
  \IfNoValueTF{#1}%
	{\ensuremath{\mathsf{H}}\xspace}
	{\ensuremath{\mathsf{H}_{#1}}\xspace}
}

\NewDocumentCommand{\Enc}{o}%
{
  \IfNoValueTF{#1}%
	{\ensuremath{\constNotation{AuthEnc}}\xspace}
	{\ensuremath{\constNotation{AuthEnc}_{#1}}\xspace}
}
\NewDocumentCommand{\PRP}{o}%
{%
	\IfNoValueTF{#1}%
	{\ensuremath{\mathsf{P}}\xspace}
	{\ensuremath{\mathsf{P}_{#1}}\xspace}
}

\NewDocumentCommand{\PRF}{o o}%
{%
	\IfNoValueTF{#1}%
	{\ensuremath{\mathsf{OPRF}\IfNoValueF{#2}{_{#2}}}\xspace}%
{\ensuremath{\ifblank{#1}{\mathsf{OPRF}\IfNoValueF{#2}{_{#2}}}{\mathsf{PRF
}\IfNoValueF{#2}{_{#2}}}}\xspace}%
}

\NewDocumentCommand{\aPAKEOPRF}{o}
{
\IfNoValueTF{#1}
{\ensuremath{\mathsf{PRF'}}\xspace}
{\ensuremath{\mathsf{PRF'_{#1}}}\xspace}
}

\NewDocumentCommand{\genericFnFamily}{ o
}{\ensuremath{\fnNotation{G}\IfNoValueF{#1}{_{#1}}}\xspace}
\newcommand{\genericFnKeyspace}{\ensuremath{\setNotation{K}}\xspace}
\newcommand{\genericFnKey}{\ensuremath{k}\xspace}
\newcommand{\genericFnDomain}{\ensuremath{\setNotation{D}}\xspace}
\newcommand{\genericFnDomainElmt}{\ensuremath{d}\xspace}
\newcommand{\genericFnRange}{\ensuremath{\setNotation{R}}\xspace}

\newcommand{\rootIdx}{\ensuremath{j}\xspace}
\newcommand{\ptIdx}{\ensuremath{k}\xspace}
\newcommand{\setIdx}{\ensuremath{i}\xspace}

\NewDocumentCommand{\nats}{o o}%
{%
	\IfNoValueTF{#1}%
	{\ensuremath{\mathbb{N}}\xspace}%
	{\IfNoValueTF{#2}%
		{\ensuremath{[{#1}]}\xspace}%
		{\ensuremath{[{#1},{#2}]}\xspace}%
	}%
}

\NewDocumentCommand{\cInput}{o}{%
\IfNoValueTF{#1}%
{\ensuremath{\valNotation{w}}\xspace}%
{\ensuremath{\valNotation{w}_{#1}}\xspace}%
}

\NewDocumentCommand{\cInputAlt}{o}%
{%
  \IfNoValueTF{#1}%
              {\ensuremath{\valNotation{w}'}\xspace}%
              {\ensuremath{\valNotation{w}'_{#1}}\xspace}%
}

\newcommand{\cInputEmbedding}{\ensuremath{\vecNotation{z}}\xspace}

\newcommand{\authToken}{\ensuremath{\vecNotation{y}}\xspace}
\newcommand{\authTokenAlt}{\ensuremath{\vecNotation{y}'}\xspace}

\NewDocumentCommand{\PRPKey}{o}
{
\IfNoValueTF{#1}
{\ensuremath{\vecNotation{{k}_{\PRP}}}\xspace}
{\ensuremath{\vecNotation{{k#1}_{\PRP}}}\xspace}
}

\newcommand{\PRPKeySpace}{\ensuremath{\setNotation{K}_{\PRP}}\xspace}
\NewDocumentCommand{\PRFKeySpace}{o}{\ensuremath{\setNotation{K}_{\PRF[{#1}]}
}\xspace}
\NewDocumentCommand{\PRFKey}{o}{\ensuremath{\valNotation{k}_{\PRF[{#1}]}
}\xspace}
\NewDocumentCommand{\PRFKeyAlt}{o}{\ensuremath{\valNotation{k}'_{\PRF[{#1}]}
}\xspace}

\newcommand{\embedKeySpace}{\ensuremath{\setNotation{K}_{\embed}}\xspace}
\newcommand{\embedKey}{\ensuremath{\valNotation{k}_{\embed}}\xspace}
\newcommand{\embedKeyAlt}{\ensuremath{\valNotation{k}_{\embed}'}\xspace}

\newcommand{\degreeOfKey}[1]{
\ifstrequal{#1}{1}
{\ensuremath{\delta}\xspace}
{\ensuremath{2\delta}\xspace}
}

\NewDocumentCommand{\PRPKeyVec}{o o}
{
	\ifstrequal{#1}{1}
	{
		\IfNoValueTF{#2}
		{\ensuremath{\vecNotation{\kappa}_{1}}\xspace}
		{\ifblank{#2}{\ensuremath{\kappa_{1}}\xspace}{\ensuremath{\kappa_{1,#2}}\xspace} }
	}
	{
		\IfNoValueTF{#2}
		{\ensuremath{\vecNotation{\kappa}_{2}}\xspace}
		{\ifblank{#2}{\ensuremath{\kappa_2}\xspace}{\ensuremath{\kappa_{2,#2}}\xspace} }
	}
}

\NewDocumentCommand{\PRPKeyEmbeddingVec}{o o}
{
	\ifstrequal{#1}{1}
	{
		\IfNoValueTF{#2}
		{\ensuremath{\vecNotation{{\kappa}_{1}}}\xspace}
		{\ifblank{#2}{\ensuremath{\kappa_{1}}\xspace}{\ensuremath{\kappa_{1,#2}}\xspace} }
	}
	{
		\IfNoValueTF{#2}
		{\ensuremath{\vecNotation{{\kappa}_{2}}}\xspace}
		{\ifblank{#2}{\ensuremath{\kappa_2}\xspace}{\ensuremath{\kappa_{2,#2}}\xspace} }
	}
}

\NewDocumentCommand{\PRPKeyHashVec}{o o}
{
	\ifstrequal{#1}{1}
	{
		\IfNoValueTF{#2}
		{\ensuremath{\vecNotation{{\kappa}_{1}}'}\xspace}
		{\ifblank{#2}{\ensuremath{\kappa_{1}'}\xspace}{\ensuremath{\kappa_{1,#2}'}\xspace} }
	}
	{
		\IfNoValueTF{#2}
		{\ensuremath{\vecNotation{{\kappa}_{2}}'}\xspace}
		{\ifblank{#2}{\ensuremath{\kappa_2'}\xspace}{\ensuremath{\kappa_{2,#2}'}\xspace} }
	}
}

\newcommand{\bigO}[1]{\ensuremath{\mathsf{O}(#1)}\xspace}

\newcommand{\reals}{\ensuremath{\mathbb{R}}\xspace}
\newcommand{\nonnegativeReals}{\ensuremath{\reals_{\ge 0}}\xspace}

\newcommand{\setSize}[1]{\ensuremath{\left|{#1}\right|}\xspace}

\newcommand{\valNotation}[1]{\ensuremath{\expandafter\MakeLowercase{#1}}\xspace}
\newcommand{\setNotation}[1]{\ensuremath{\expandafter\mathcal{#1}}\xspace}
\newcommand{\rvNotation}[1]{\ensuremath{\expandafter\varmathbb{#1}}\xspace}
\newcommand{\fnNotation}[1]{\ensuremath{\expandafter\MakeLowercase{#1}}\xspace}
\newcommand{\algNotation}[1]{\ensuremath{\expandafter\mathtt{\MakeUppercase{#1}
}}\xspace}
\newcommand{\constNotation}[1]{\ensuremath{\expandafter\mathsf{#1}}\xspace}
\newcommand{\polyNotation}[1]{\ensuremath{\expandafter\MakeLowercase{#1}
}\xspace}
\newcommand{\vecNotation}[1]{\ensuremath{\expandafter\vec{#1}}\xspace}
\newcommand{\getsr}{\ensuremath{\stackrel{\scriptsize\$}{\leftarrow}}\xspace}
\newcommand{\defeq}{\ensuremath{\mathbin{\stackrel{\scriptsize\mbox{def}}{=}}
}\xspace}
\newcommand{\tru}{\ensuremath{\mathsf{TRUE}}\xspace}
\newcommand{\fals}{\ensuremath{\mathsf{FALSE}}\xspace}
\newcommand{\equals}{\stackrel{?}{=}}
\newcommand{\hadamardProd}{\ensuremath{\times}\xspace}

\newcommand{\bit}{\ensuremath{b}\xspace}
\newcommand{\bitAlt}{\ensuremath{\bit'}\xspace}
\newcommand{\bitRange}{\ensuremath{\{0, 1\}}\xspace}

\newcommand{\OLEAliceInput}{\ensuremath{x}\xspace}
\newcommand{\OLEBobInputOne}{\ensuremath{u}\xspace}
\newcommand{\OLEBobInputTwo}{\ensuremath{v}\xspace}

\NewDocumentCommand{\genericVecComp}{o o}
{
	\IfNoValueTF{#2}
	{\ensuremath{v_{#1}}\xspace}
	{\ensuremath{v_{#1, #2}}\xspace}
}

\NewDocumentCommand{\genericVecAltComp}{o o}
{
	\IfNoValueTF{#2}
	{\ensuremath{v'_{#1}}\xspace}
	{\ensuremath{v'_{#1, #2}}\xspace}
}

\newcommand{\linearComb}{\ensuremath{+}\xspace}

\newcommand{\prpOutputEmbedding}{\ensuremath{\vecNotation{p_1}}\xspace}
\newcommand{\prpOutputEmbeddingVec}{\ensuremath{\vecNotation{p_1}}\xspace}
\newcommand{\prpOutputHash}{\ensuremath{\vecNotation{p_2}}\xspace}
\newcommand{\prpOutputEmbeddingAlt}{\ensuremath{\vecNotation{p'_1}}\xspace}
\newcommand{\prpOutputHashAlt}{\ensuremath{\vecNotation{p'_2}}\xspace}
\newcommand{\prpOutputHashVec}{\ensuremath{\vecNotation{p_2}}\xspace}

\newcommand{\prfOutput}{\ensuremath{\vecNotation{\gamma}}\xspace}
\newcommand{\prfOutputAlt}{\ensuremath{\prfOutput'}\xspace}

\newcommand{\prpOutputEmbeddingComp}[1]{\ensuremath{\valNotation{p_{1,#1}}
}\xspace}

\newcommand{\prpOutputHashComp}[1]{\ensuremath{\valNotation{p_{2,#1}}}\xspace}

\newcommand{\cICInputOneComp}[1]{\ensuremath{\valNotation{z_{1,#1}}
}\xspace}
\newcommand{\cICInputTwoComp}[1]{\ensuremath{\valNotation{z_{2,#1}}}\xspace}

\newcommand{\sMaskComp}[1]{\ensuremath{\valNotation{\sMask_{#1}}}\xspace}

\newcommand{\hashStar}{\ensuremath{\valNotation{h^{*}}}\xspace}

\newcommand{\mapFn}{\ensuremath{\fnNotation{M}}\xspace}
\NewDocumentCommand{\sInputPW}{o}{\ensuremath{\lanElmt\IfNoValueF{#1}{_{#1}}
}\xspace}
\newcommand{\vectorNorm}[2]{\ensuremath{\left\Vert {#2}
\right\Vert_{#1}}\xspace}

\NewDocumentCommand{\pw}{o}{\ensuremath{\cInput\IfNoValueF{#1}{_{#1}}}\xspace}
\NewDocumentCommand{\pwAlt}{o}{\ensuremath{\cInputAlt\IfNoValueF{#1}{_{#1}}
}\xspace}

\newcommand{\blockedElmts}{\ensuremath{\universe_{\mathsf{blk}}}\xspace}

\newcommand{\policyLan}{\ensuremath{\setNotation{L}}\xspace}

\newcommand{\blockList}{\policyLan}
\newcommand{\blockListSize}{\ensuremath{n}}
\NewDocumentCommand{\lanElmt}{o}
{\IfNoValueTF{#1}
	{\ensuremath{\vecNotation{l}}\xspace}
	{\ensuremath{\vecNotation{{l}_{#1}}}\xspace}
}
\NewDocumentCommand{\lanElmtU}{o}
{\IfNoValueTF{#1}
	{\ensuremath{\valNotation{l}}\xspace}
	{\ensuremath{\valNotation{l}_{#1}}\xspace}
}

\NewDocumentCommand{\lanElmtComp}{g
g}{\ensuremath{\valNotation{l}_{#1\IfValueT{#2}{,#2}}}\xspace}

\NewDocumentCommand{\lElmtVecComp}{g
g}{\ensuremath{\vecNotation{{l}_{#1\IfValueT{#2}{,#2}}}}\xspace}

\NewDocumentCommand{\lanElmtVec}{g}{\ensuremath{\vecNotation{\lanElmt\IfValueF{
#1}{_{#1}}}\ }\xspace}

\newcommand{\threshold}{\ensuremath{T}\xspace}

\newcommand{\universe}{\ensuremath{\mathcal{U}}\xspace}

\NewDocumentCommand{\commitVec}{o}
{\IfNoValueTF{#1}
	{\ensuremath{\vecNotation{C}}\xspace}
	{\ensuremath{\vecNotation{c}_{#1}}\xspace}
}

\newcommand{\finiteField}{\ensuremath{\mathbb{F}}\xspace}
\newcommand{\fieldOrder}{\ensuremath{q}\xspace}
\newcommand{\polyField}{\ensuremath{\mathbb{F}[x]}\xspace}

\NewDocumentCommand{\genericSetVar}{o}%
{%
	\IfNoValueTF{#1}%
	{\ensuremath{\setNotation{S}}\xspace}
	{\ensuremath{\setNotation{S}_{#1}}\xspace}
}

\NewDocumentCommand{\genericVec}{o}%
{%
	\IfNoValueTF{#1}%
	{\ensuremath{\vecNotation{v}}\xspace}
	{\ensuremath{\vecNotation{{v}_{#1}}}\xspace}
}

\NewDocumentCommand{\genericVecAlt}{o}%
{%
	\IfNoValueTF{#1}%
	{\ensuremath{\vecNotation{v}'}\xspace}
	{\ensuremath{\vecNotation{v}_{#1}'}\xspace}
}

\newcommand{\genericSetElmt}{\ensuremath{s}\xspace}

\newcommand{\genericFieldElmt}{\ensuremath{x}}
\newcommand{\genericFieldElmtAlt}{\ensuremath{y}}

\newcommand{\numberOfPoints}{\ensuremath{2\delta + 1}\xspace}
\newcommand{\numberOfMasks}{\ensuremath{\degreeOfKey{1}}\xspace}
\newcommand{\distInMaskedSets}{\ensuremath{t}\xspace}
\newcommand{\distInUniverse}{\ensuremath{t}\xspace}
\newcommand{\tmpVar}{\ensuremath{t}\xspace}

\NewDocumentCommand{\tmpVarVec}{o o}
{
\IfNoValueTF{#2}
{\ensuremath{\vecNotation{t_{#1}}}\xspace}
{\ensuremath{t_{#1, #2}}\xspace}
}

\NewDocumentCommand{\tmpVarAltVec}{o o}
{
\IfNoValueTF{#2}
{\ensuremath{\vecNotation{t'_{#1}}}\xspace}
{\ensuremath{t'_{#1, #2}}\xspace}
}

\NewDocumentCommand{\bitMask}{o}
{
	\IfNoValueTF{#1}{\ensuremath{M}\xspace}{\ensuremath{M_{#1}}\xspace}
}

\newcommand{\vecMap}{\ensuremath{M_{\setOfEvalPts}}\xspace}

\newcommand{\diffInComps}{\ensuremath{d}\xspace}

\newcommand{\cMask}{\ensuremath{\vecNotation{m}}\xspace}
\newcommand{\cMaskAlt}{\ensuremath{\cMask'}\xspace}
\newcommand{\sMask}{\ensuremath{\vecNotation{s}}\xspace}
\newcommand{\sMaskAlt}{\ensuremath{\sMask'}\xspace}
\NewDocumentCommand{\cICInput}{o}%
{%
	\IfNoValueTF{#1}%
	{\ensuremath{\vecNotation{z}}\xspace}
	{\ensuremath{\vecNotation{z_{#1}}}\xspace}
}
\NewDocumentCommand{\cICInputAlt}{o}%
{%
	\IfNoValueTF{#1}%
	{\ensuremath{\valNotation{z'}}\xspace}
	{\ensuremath{\valNotation{z'_{#1}}}\xspace}
}

\newcommand{\nonce}{\ensuremath{nonce}\xspace}
\newcommand{\nonceAlt}{\ensuremath{\nonce'}\xspace}

\NewDocumentCommand{\client}{o}{%
  \IfNoValueTF{#1}%
    {\ensuremath{\mathsf{C}}\xspace}
    {%
      \IfEqCase{#1}{%
        {1}{\ensuremath{\mathsf{C}_{\mathsf{snd}}}\xspace}%
        {2}{\ensuremath{\mathsf{C}_{\mathsf{rcv}}}\xspace}%
      }[\ensuremath{\mathsf{C}_{#1}}\xspace]
    }%
}

\newcommand{\server}{\ensuremath{\mathsf{S}}\xspace}

\newcommand{\degreeOfPoly}{\ensuremath{\mathsf{Deg}}\xspace}
\newcommand{\uniqueRoots}[1]{\ensuremath{\setNotation{R}_{#1}}\xspace}
\newcommand{\polyFromVec}[1]{\ensuremath{\polyNotation{p_{#1}}(x)}\xspace}

\NewDocumentCommand{\polyFromEmbedding}{o o}
{
\IfNoValueTF{#2}
{\ensuremath{\polyNotation{e_{#1}}(x)} \xspace}
{\ensuremath{\polyNotation{e_{#1}}(#2)} \xspace}
}

\NewDocumentCommand{\clientsBlindingPolyVOLENoSub}{o}%
{%
	\IfNoValueTF{#1}%
	{\ensuremath{\polyNotation{R'}(x)}\xspace}
	{\ensuremath{\polyNotation{R'}(#1)}\xspace}
}

\NewDocumentCommand{\clientsBlindingPolyVOLE}{o o}%
{%
	\IfNoValueTF{#2}%
		{\IfNoValueTF{#1}%
			{\ensuremath{\polyNotation{R'}(x)}\xspace}
			{\ensuremath{\polyNotation{R'}_{#1}(x)}\xspace}
		}
	{
		{\ensuremath{\polyNotation{R'}_{#1}(#2)}\xspace}
	}
}

\NewDocumentCommand{\genericRand}{o}
{\IfNoValueTF{#1}
	{\ensuremath{r}\xspace}
	{\ensuremath{r_{#1}}\xspace}
}

\NewDocumentCommand{\PRPKeyPolyHash}{o o}
{
	\ifstrequal{#1}{1}
	{
		\IfNoValueTF{#2}
		{\ensuremath{\polyNotation{A'}(x)}\xspace}
		{\ensuremath{\polyNotation{A'}(#2)}\xspace}
	}
	{
		\IfNoValueTF{#2}
		{\ensuremath{\polyNotation{B'}(x)}\xspace}
		{\ensuremath{\polyNotation{B'}(#2)}\xspace}
	}
}

\NewDocumentCommand{\clientsRandomPoly}{o o}
{
	\IfNoValueTF{#2}
	{\ensuremath{\polyNotation{R}_{#1}(x)}\xspace}
	{\ensuremath{\polyNotation{R}_{#1}(#2)}\xspace}
}

\NewDocumentCommand{\simClientsRandomPoly}{o o}
{
	\IfNoValueTF{#2}
	{\ensuremath{\overline{\polyNotation{R}}_{#1}(x)}\xspace}
	{\ensuremath{\overline{\polyNotation{R}}_{#1}(#2)}\xspace}
}

\NewDocumentCommand{\simServersRandomPoly}{o o}
{
	\IfNoValueTF{#2}
	{\ensuremath{\overline{\polyNotation{R}}_{#1}'(x)}\xspace}
	{\ensuremath{\overline{\polyNotation{R}}_{#1}(#2)}\xspace}
}

\NewDocumentCommand{\serversRandomPoly}{o o}
{
	\IfNoValueTF{#2}
	{\ensuremath{\polyNotation{R}_{#1}'(x)}\xspace}
	{\ensuremath{\polyNotation{R}_{#1}'(#2)}\xspace}
}

\NewDocumentCommand{\interpolatingPoly}{o}
{
	\IfNoValueTF{#1}
	{\ensuremath{\polyNotation{G}(x)}\xspace}
	{\ensuremath{\polyNotation{G}(#1)}\xspace}
}

\NewDocumentCommand{\evalPt}{o}
{
	\IfNoValueTF{#1}
	{\ensuremath{\genericFieldElmt}\xspace}
	{\ensuremath{\genericFieldElmt_{#1}}\xspace}
}

\NewDocumentCommand{\genericPoly}{o}%
{%
	\IfNoValueTF{#1}%
	{\ensuremath{\polyNotation{p}(x)}\xspace}
	{\ensuremath{\polyNotation{p}_{#1}(x)}\xspace}
}

\NewDocumentCommand{\genericPolyEval}{o o}%
{%
	\IfNoValueTF{#2}%
	{\ensuremath{\polyNotation{p}(#1)}\xspace}
	{\ensuremath{\polyNotation{p}_{#2}(#1)}\xspace}
}

\NewDocumentCommand{\genericPolyAlt}{o}%
{%
	\IfNoValueTF{#1}%
	{\ensuremath{\polyNotation{p'}(x)}\xspace}
	{\ensuremath{\polyNotation{p'}_{#1}(x)}\xspace}
}

\NewDocumentCommand{\genericPolyAltEval}{o o}%
{%
	\IfNoValueTF{#2}%
	{\ensuremath{\polyNotation{p'}(#1)}\xspace}
	{\ensuremath{\polyNotation{p'}_{#2}(#1)}\xspace}
}

\newcommand{\hashComp}[1]{\ensuremath{h_{#1}}\xspace}

\NewDocumentCommand{\randomPoly}{o}%
{%
	\IfNoValueTF{#1}%
	{\ensuremath{\polyNotation{r}(x)}\xspace}
	{\ensuremath{\polyNotation{r}_{#1}(x)}\xspace}
}

\NewDocumentCommand{\randomPolyComb}{o}%
{%
	\IfNoValueTF{#1}%
	{\ensuremath{\polyNotation{r_c}(x)}\xspace}
	{\ensuremath{\polyNotation{r_c}(#1)}\xspace}
}

\NewDocumentCommand{\randomPolyEval}{o o}%
{%
	\IfNoValueTF{#2}%
	{\ensuremath{\polyNotation{r}(#1)}\xspace}
	{\ensuremath{\polyNotation{r}_{#2}(#1)}\xspace}
}

\NewDocumentCommand{\randomVec}{o}{\ensuremath{\vecNotation{r\IfValueT{#1}{_{#1
}}}}\xspace}

\newcommand{\serversRandomPolyEvalVecDefn}[1]{\ensuremath
{\serversRandomPolyEvalVec[#1] \gets \langle
\serversRandomPoly[\setIdx]\rangle_{\evalPt \in \setOfEvalPts}}\xspace}
\newcommand{\clientsRandomPolyEvalVecDefn}[1]{\ensuremath
{\clientsRandomPolyEvalVec[#1] \gets \langle
\clientsRandomPoly[\setIdx]\rangle_{\evalPt \in \setOfEvalPts}}\xspace}
\newcommand{\randomPolyCombEvalVec}{\vecNotation{r_{c}}\xspace}

\NewDocumentCommand{\clientsESPPolyEvalVec}{o}
{\IfNoValueTF{#1}
	{\ensuremath{\vecNotation{W}}\xspace}
	{\ensuremath{w_{#1}}\xspace}
}

\NewDocumentCommand{\clientsSimPolyEvalVec}{o}
{\IfNoValueTF{#1}
	{\ensuremath{\vecNotation{W}}\xspace}
	{\ensuremath{w_{#1}}\xspace}
}

\NewDocumentCommand{\clientsInputPolyEvalVec}{o}
{\IfNoValueTF{#1}
	{\ensuremath{\vecNotation{p}}\xspace}
	{\ensuremath{p_{#1}}\xspace}
}

\NewDocumentCommand{\clientsRandomPolyEvalVec}{o o}
{\IfNoValueTF{#2}
	{\ensuremath{\vecNotation{r}_{#1}}\xspace}
	{\ensuremath{r_{#1, #2}}\xspace}
}

\NewDocumentCommand{\serversRandomPolyEvalVec}{o o}
{\IfNoValueTF{#2}
	{\ensuremath{\vecNotation{r}_{#1}'}\xspace}
	{\ensuremath{r_{#1, #2}'}\xspace}
}

\NewDocumentCommand{\serversRandomPolyEval}{o o}
{\IfNoValueTF{#2}
	{\ensuremath{\vecNotation{r_{#1}}}\xspace}
	{\ensuremath{\vecNotation{r_{#1, #2}}}\xspace}
}
\NewDocumentCommand{\serversRandomPolyEvalAlt}{o o}
{\IfNoValueTF{#2}
	{\ensuremath{\vecNotation{r_{#1}}'}\xspace}
	{\ensuremath{\vecNotation{r_{#1, #2}}'}\xspace}
}

\NewDocumentCommand{\serversInputPolyEvalVec}{o o}
{\IfNoValueTF{#2}
	{\ensuremath{\vecNotation{Q}_{#1}}\xspace}
	{\ensuremath{q_{#1, #2}}\xspace}
}

\NewDocumentCommand{\simRandomPoly}{o}%
{%
	\IfNoValueTF{#1}%
	{\ensuremath{\overline{\polyNotation{r}}(x)}\xspace}
	{\ensuremath{\overline{\polyNotation{r}}_{#1}(x)}\xspace}
}

\NewDocumentCommand{\simPolyFromEmbedding}{o o}
{
\IfNoValueTF{#2}
{\ensuremath{\overline{\polyNotation{e}}_{#1}}(x) \xspace}
{\ensuremath{\overline{\polyNotation{e}}_{#1}}(#2) \xspace}
}

\NewDocumentCommand{\simRandomPolyEval}{o o}
{
\IfNoValueTF{#2}{\ensuremath{\overline{\polyNotation{r}}(#1)}\xspace}
{\ensuremath{\overline{\polyNotation{r}}_{#2}(#1)}\xspace}
}

\NewDocumentCommand{\simRandomPolyAlt}{o}
{
	\IfNoValueTF{#1}
	{\ensuremath{\overline{\polyNotation{r'}}(x)}\xspace}
	{\ensuremath{\overline{\polyNotation{r'}}_{#1}(x)}\xspace}
}

\NewDocumentCommand{\simRandomPolyAltEval}{o o}
{
\IfNoValueTF{#2}{\ensuremath{\overline{\polyNotation{r'}}(#1)}\xspace}
{\ensuremath{\overline{\polyNotation{r'}}_{#2}(#1)}\xspace}
}

\newcommand{\combVector}{\ensuremath{\vecNotation{c}}\xspace}
\newcommand{\combPoly}{\ensuremath{\polyNotation{c}(x)}\xspace}

\newcommand{\prob}[1]{\ensuremath{\mathbb{P}\mathopen{}\left({#1}\right
)\mathclose{}}\xspace}
\newcommand{\cprob}[3]{\ensuremath{\mathbb{P}\mathopen{}{#1(}{#2} \; {#1|} \;
{#3}{#1)}\mathclose{}}\xspace}

\newcommand{\falseRejectRate}[3]{\ensuremath{\mathsf{FRR}^{#1}_{{#2},{#3}}
}\xspace}
\newcommand{\falseAcceptRate}[3]{\ensuremath{\mathsf{FAR}^{#1}_{{#2},{#3}}
}\xspace}
\newcommand{\trueRejectRate}[3]{\ensuremath{\mathsf{TRR}^{#1}_{{#2},{#3}}
}\xspace}
\newcommand{\trueAcceptRate}[3]{\ensuremath{\mathsf{TAR}^{#1}_{{#2},{#3}}
}\xspace}

\newcommand{\Adv}[1]{\ensuremath{\algNotation{A}_{#1}}\xspace}
\newcommand{\AdvS}[1]{\ensuremath{\algNotation{S}_{#1}}\xspace}
\newcommand{\genericAdv}{\ensuremath{\mathcal{A}}\xspace}

\NewDocumentCommand{\metricSpace}{o}%
{%
	\IfNoValueTF{#1}%
	{\ensuremath{\finiteField^{\numberOfPointsSim}}\xspace}%
	{\ensuremath{\setNotation{M}_{#1}}\xspace}
}

\newcommand{\PRPsecdef}{\textsf{\scriptsize opu}\xspace}

\newcommand{\WPRsecdef}{\textsf{\scriptsize wpr}\xspace}
\newcommand{\CRsecdef}{\textsf{\scriptsize cr}\xspace}
\newcommand{\PRFINDsecdef}{\textsf{\scriptsize wpr}\xspace}
\newcommand{\Ssecdef}{\textsf{\scriptsize S}\xspace}
\newcommand{\DCOPRFsecdef}{\scriptsize\textsf{\exptAcro}\xspace}

\NewDocumentCommand{\ExptCC}{o}%
{%
  \IfNoValueTF{#1}%
{\ensuremath{\mathbf{Expt}^{\mbox{\scriptsize\textsf{\exptAcro}}}}\xspace}%
{\ensuremath{\mathbf{Expt}^{\mbox{\scriptsize\textsf{\exptAcro}}}_{#1}}\xspace}%
}

\NewDocumentCommand{\ExptS}{o}%
{%
  \IfNoValueTF{#1}%
{\ensuremath{\mathbf{Expt}^{\mbox{\scriptsize\textsf{\algNotation{S}}}}}\xspace
}%
{\ensuremath{\mathbf{Expt}^{\mbox{\scriptsize\textsf{\algNotation{S}}}}_{#1}
}\xspace}%
}

\NewDocumentCommand{\blocked}{g g g}%
{%
	\IfNoValueTF{#2}
{\ensuremath{\fnNotation{\mathit{blocked}}^{{#1}}}\xspace}
{
\IfNoValueTF{#3}%
	{\ensuremath{\fnNotation{\mathit{blocked}}^{{#1},{#2}}}\xspace}
	{\ensuremath{\fnNotation{\mathit{blocked}}_{#1}^{{#2},{#3}}}\xspace}
}
}

\newcommand{\ExptWPR}[1]{\ensuremath{\mathbf{Expt}^{\mbox{\textsf{\WPRsecdef}}}
_{#1}}\xspace}
\newcommand{\ExptCR}[1]{\ensuremath{\mathbf{Expt}^{\mbox{\textsf{\CRsecdef}}}
_{#1}}\xspace}
\NewDocumentCommand{\ExptPRP}{o}%
{%
  \IfNoValueTF{#1}%
{\ensuremath{\mathbf{Expt}^{\mbox{\textsf{\PRPsecdef}}}_{\PRP}}\xspace}
{\ensuremath{\mathbf{Expt}^{\mbox{\textsf{\PRPsecdef}-{#1}}}_{\PRP}}\xspace}
}
\NewDocumentCommand{\ExptPRFIND}{o}%
{%
  \IfNoValueTF{#1}%
{\ensuremath{\mathbf{Expt}^{\mbox{\textsf{\PRFINDsecdef}}}_{\PRF[1]}}\xspace}
{\ensuremath{\mathbf{Expt}^{\mbox{\textsf{\PRFINDsecdef}-{#1}}}_{\PRF[1]}}\xspace}
}

\newcommand{\Advstate}[1]{\ensuremath{\phi_{#1}}\xspace}

\newcommand{\Advantage}[3]{\ensuremath{\mathsf{Adv}_{#2}^{\mbox{\textsf{#1}}
}\mathopen{}\left({#3}\right)\mathclose{}}\xspace}
\newcommand{\exptAcro}{CC}
\newcommand{\timeBound}{\ensuremath{t}\xspace}
\newcommand{\timeBoundAlt}{\ensuremath{t'}\xspace}

\newcommand{\forwardOracleQueries}{\ensuremath{q}\xspace}

\newcommand{\hashOracleQueries}{\ensuremath{\forwardOracleQueries_{\mathsf{H}}
}\xspace}

\newcommand{\simOp}[1]{\ensuremath{\overline{#1}}\xspace}

\newcommand{\prfOutputIC}{\ensuremath{\hat{\prfOutput}}\xspace}

\newcommand{\primitiveName}{\textit{Half-Moon Cookie}\xspace}
\newcommand{\PrimitiveName}{\textit{Half-Moon Cookie}\xspace}
\newcommand{\primitiveNameShort}{\textit{Half Moon}\xspace}
\newcommand{\PrimitiveNameShort}{\textit{Half Moon}\xspace}
\newcommand{\implicitCheckPhase}{implicit check\xspace}
\newcommand{\explicitCheckPhase}{explicit check\xspace}

\newcommand{\ImplicitCheckPhase}{Implicit check\xspace}
\newcommand{\ExplicitCheckPhase}{Explicit check\xspace}
\newcommand{\ImplicitCheckPhaseHeading}{Implicit Check\xspace}
\newcommand{\ExplicitCheckPhaseHeading}{Explicit Check\xspace}

\newcommand{\serverState}{\ensuremath{\phi}\xspace}

\newcommand{\genericProtocol}{\ensuremath{\Pi}\xspace}
\newcommand{\protocolEmbedESP}{\ensuremath{\genericProtocol_
{\mathit{EM}}}\xspace}
\newcommand{\protocolCommitESP}{\ensuremath{\genericProtocol_
{\mathit{TC}}}\xspace}
\newcommand{\protocolValidateESP}{\ensuremath{\genericProtocol_
{\mathit{IC}}}\xspace}

\newcommand{\protocolEmbed}{\protocolEmbedESP}
\newcommand{\protocolCommit}{\protocolCommitESP}
\newcommand{\protocolValidate}{\protocolValidateESP}

\NewDocumentCommand{\protocolOPRF}{o}
{
	\IfNoValueTF{#1}
	{\ensuremath{\genericProtocol_{\mathit{OPRF}}}\xspace}
	{\ensuremath{\genericProtocol_{\mathit{OPRF}#1}}\xspace}
}

\newcommand{\embedAndMap}{\ensuremath{\mathcal{F}_{\mathsf{EM}}}\xspace}
\newcommand{\testAndCommit}{\ensuremath{\mathcal{F}_{\mathsf{TC}}}\xspace}
\newcommand{\authentication}{\ensuremath{\mathcal{F}_{\mathsf{IC}}}\xspace}

\newcommand{\genericMetricSpaceDist}{\ensuremath{D}\xspace}

\NewDocumentCommand{\genericPRPKey}{g}{\ensuremath{\langle
\PRPKeyVec[1]\IfNoValueF{#1}{_{#1}},
\PRPKeyVec[2]\IfNoValueF{#1}{_{#1}}\rangle}\xspace}
	
\ExplSyntaxOn
\cs_new:Npn \intensity #1
   {\fp_eval:n {\MinIntensity + (1.0-\MinIntensity) * (((\MaxNumber-{#1})/(\MaxNumber-\MinNumber)))}
   }
\ExplSyntaxOff
\FPeval{\MinIntensity}{0.5}   
\FPeval{\MinNumber}{0.0}
\FPeval{\MaxNumber}{0.0}
\newcommand{\ApplyGradientX}[1]{\cellcolor[gray]{\intensity{#1}}{\makebox[\width][r]{#1}}}
\newcolumntype{X}{>{\collectcell\ApplyGradientX}r<{\endcollectcell}}

\begin{document}
\title{\PrimitiveName: Private, Similarity-Based Blocklisting\\ with
  TOCTOU-Attack Resilience}
\author{
Xinyuan Zhang \\ 
Duke University
\and
Anrin Chakraborti \\ 
University of Illinois at Chicago
\and
Michael K. Reiter \\ 
Duke University}
\maketitle

\iffalse
\begin{abstract}
Blocklisting is a widely deployed technique for preventing the 
distribution and execution of known malicious content. However, 
conventional blocklisting infrastructures either require clients to reveal 
their queries to the server or require the blocklist itself to be public.
In this work, we introduce a private blocklisting framework, \primitiveName, 
that leverages both a blacklist and a whitelist maintained by the 
policy-checking server. Our design embeds each client input into a metric 
space and adds a commitment for this input to a whitelist only if its 
embedding does not fall within a cluster of known malicious elements in 
the blacklist. The structural separation between the embedding step and 
the blocklist membership check enables each to be implemented efficiently 
using specialized techniques. We then cryptographically “stitch” these 
phases together to preserve soundness and privacy. The resulting whitelist 
commitments are both binding and hiding, allowing the server to support 
fast re-evaluation of previously seen inputs while simultaneously providing 
resilience against time-of-check-to-time-of-use (TOCTOU) attacks. We 
demonstrate how \primitiveName can be instantiated for similarity-based 
malware detection, enabling effective identification of malicious binaries 
without revealing client inputs or disclosing the underlying blocklist.
\end{abstract}
\else
\begin{abstract}
Blocklisting is a common technique for preventing the use of known
malicious content. However, conventional blocklisting infrastructures
require either the blocklist to be public or clients to reveal their
queries to the blocklist server.  We introduce a private blocklisting
framework, \primitiveName, by which a client can check an item against
a proprietary blocklist held by a server, to determine whether the
item is close to any blocklist element in a metric space.  Critically,
our design separates the embedding step from the blocklist check,
enabling independent choice of methods to compute them privately and
efficiently.
Still, this computation might be too costly to perform on the
critical path of using the item, and so our design also supports a
very efficient check that an item previously passed the blocklist
check.  In doing so, we support applications where one client can
perform the blocklist check on the item before sending it, and
recipients can more efficiently confirm the previous result before
using the item, thereby avoiding TOCTOU attacks.  We show how
\primitiveName can be instantiated for similarity-based malware
detection, to block malicious executables without revealing client
inputs or disclosing the blocklist.
\end{abstract}
\fi

\begin{IEEEkeywords}
Private blocklisting, TOCTOU attacks, Malware detection
\end{IEEEkeywords}


\section{Introduction}
\label{sec:intro}

Blocklisting is a core primitive of computer security that is relied
upon daily to interfere with proliferation or use of 
malware~\cite{URLHaus, MalwareBazaar}, unsolicited or fraudulent 
email~\cite{DNSBW}, revoked certificates~\cite{rfc5280},
malicious domains~\cite{SpamhausDBL}, malicious IP 
subnets~\cite{TeamCymru, SpamhausBL, SpamhausDROP}, child 
sexual abuse material 
(CSAM)~\cite{Microsoft:PhotoDNA, NCMEC:Google, NCMEC:report, IWF:CSAM}, 
and other malicious artifacts~\cite{Easylist}.
When the blocklist and/or the artifacts checked against it are private
or proprietary, verifying that an artifact is not on a blocklist might
require using a cryptographic protocol to hide the blocklist from the
client with the artifact and the artifact from the blocklist server.
Performing such a blocklist check immediately prior to artifact use
can be impractical in some cases, however, particularly if the
blocklist check requires a rich cryptographic functionality (e.g., an
\textit{approximate} set 
intersection~\cite{Garimella2022:structurePSI, Chakraborti2023:DAPSI, 
Garimella2024:structurePSI, Cho2024:approxPSI, Blass25:fuzzyPSI} 
using the client's artifact and the server's blocklist) that incurs 
nontrivial cost. Moving this blocklist check to an earlier, more 
manageable time can lead to time-of-check vs.\ time-of-use (TOCTOU)
vulnerabilities~\cite{cwe367}, if not paired with a confirmation that a
previous check is still valid right before the artifact's use.

In this paper we propose a framework by which a client can perform a
blocklist check---even an approximate one---at a server, while
ensuring privacy of the artifact from the server and privacy of the
blocklist from the client.  If this \textit{\explicitCheckPhase}
passes, then the server adds a hiding and binding token for the
artifact to an allowlist, which this or another client can later
confirm has not been revoked, via an \textit{\implicitCheckPhase} with
the server.  Like the \explicitCheckPhase, the \implicitCheckPhase
preserves the privacy of the artifact (and the blocklist).  However,
it is much faster and so more practical to perform immediately before
the artifact's use, rendering TOCTOU attacks less of a threat.  Due to
our design that allowlists an artifact after confirming it is not
similar to anything on a blocklist, we refer to our design as
``\primitiveName,'' or ``\primitiveNameShort,'' for short.\footnote{A
half-moon, or black-and-white, cookie is a cookie with vanilla
frosting on one half and chocolate frosting on the
other~\cite{bwcookie}.}

We illustrate the utility of \primitiveNameShort with a concrete
application to blocklisting of executable content (e.g., binaries,
scripts, or bytecode).  When receiving executable content, it is
prudent for recipients to confirm that the content has been checked
against a blocklist.  To this end, a variety of content hashing
techniques have been developed (e.g.,~\cite{Kornblum06:ssdeep,
  Roussev10:sdhash, Oliver13:TLSH}), so that executable content that
hashes to values close to blocklist elements are flagged as possibly
malicious.  However, checking proprietary executable content against a
proprietary blocklist mandates that this check be performed in a
privacy-preserving way---i.e., to hide the content from the blocklist
server and the blocklist from the client---which in turn requires a
rich cryptographic functionality to perform this check.  In the common
case that the content sender is outnumbered by the content receivers
and can perform this blocklist check off the critical path of
distributing the content, it is far more efficient for the sender to
perform this check.  \PrimitiveNameShort enables the sender to perform
this check once, using an \explicitCheckPhase, and then for each
receiver to more efficiently perform an \implicitCheckPhase, just
before consuming the content.

The approximate blocklist check required in the application above, and
that we support in \primitiveNameShort, determines whether the hash of
the sender's artifact (in the case above, executable content) lies
within a prespecified distance to the hash of some blocklist element,
i.e., in a metric space defined over the hash function range.  A core
challenge that we need to solve, then, is to guarantee that the sender
embeds (i.e., hashes) its artifact correctly, since without such
enforcement, a malicious sender could simply check a hash value
different from the hash of its artifact.  The most direct approach to
do so would be to perform the embedding and check together in a single
immutable computation, for which a monolithic garbled circuit is a
natural candidate (e.g.,~\cite{Katz2016:Input2PC}).  As we will show
in \secref{sec:eval}, however, this approach does not scale well in
the number of blocklist elements (or their clusters).  The
\explicitCheckPhase of \primitiveNameShort thus relies on a garbled
circuit only for the embedding phase and improves the circuit cost by
making it independent of the size of the blocklist.  Finally, the
\implicitCheckPhase does not leverage a garbled circuit at all, since
the receiver is trusted to embed the received artifact correctly,
presuming it desires to know whether this artifact was, in fact,
previously checked against the blocklist.

Even with making the garbled circuit independent of the size of the
blocklist, embedding an executable by hashing it within a garbled
circuit is itself costly, since executables tend to be large.  We
optimize the circuit cost further by decoupling it from the input file
size, as well, by leveraging techniques from \citet{CRGC}, while
maintaining full security. In \secref{sec:app:embed-and-map:reusable}, 
we present how to
split the embedding phase into reusable and non-reusable garbled
circuits to save the computation and communication cost, without
leaking the garbler's input. Without such techniques, the latency of a
monolithic garbled circuit is $231\times$ the cost of
\primitiveNameShort's embedding, while the network traffic is
$43{,}505\times$ that of \primitiveNameShort for an average size
email attachment (\SI{194}{\kilo\byte}).

To summarize, our contributions are as follows.
\begin{itemize}[nosep,leftmargin=1em,labelwidth=*,align=left]
\item We formally define the \primitiveNameShort
  primitive, and provide a general framework to design
  \primitiveNameShort protocols that support an
  \textit{\explicitCheckPhase} and a more efficient
  \textit{\implicitCheckPhase} that confirms that the provided
  artifact previously passed an \explicitCheckPhase.

\item We provide efficient \primitiveNameShort implementations for
  blocklists that can be embedded into a metric space based on Hamming
  distance.

\item We demonstrate how \primitiveNameShort can be applied to
  similarity-based malware defense. We show details for how to
  instantiate the application and empirically analyze the system
  performance.
\end{itemize}

\section{Related Work}
\label{sec:related}

Below we discuss protocols with similar design goals and tools that
are relevant to our task.

\myparagraph{Outsourced malware defense} 
A driving application for \primitiveNameShort is to enable a sender of
an executable to perform an \explicitCheckPhase with a policy server
so that recipients can more efficiently confirm the result via an
\implicitCheckPhase, enabling recipients to avoid TOCTOU attacks. Our
framework is best suited to malware detection by fuzzy
hashing~\cite{Roussev10:sdhash, Li15:FHReview} and provides the
content privacy necessitated by outsourcing this scanning to a
third-party service with a proprietary blocklist.
CloudAv~\cite{Oberheide2008:CloudAV} was the first work to outsource
malware scanning to a cloud that runs the anti-virus engines.
SplitScreen~\cite{Cha2010:SplitScreen} and RScam~\cite{Sun2015:RScam}
enhance client and malware vendor privacy by sending only compact
representations of the files/malware signatures over the network, but
provide no cryptographic protections to the data.  Perhaps closest to
our envisioned design, \citet{Sun17:PriMal} proposed a
privacy-preserving cloud-based malware detection scheme.  However, it
works by exactly matching client input files with malware signatures,
necessitating a very large blocklist---and resulting high protocol
cost---to achieve effective coverage.

More distantly related to our work are solutions to perform middlebox
functionalities over encrypted traffic and encrypted rules
(e.g.,~\cite{Sherry2015:BlindBox, Lan2016:Embark,
  Melis2016:Outsourced, Fan17:spabox}) and privacy-preserving spam
filtering that combines machine learning classifiers and MPC,
functional encryption, or homomorphic encryption~\cite{Pathak11:spam,
  Wang20:spam, Lee23:spam} to privately filter communication.  None of
these designs, however, offer a faster, \implicitCheckPhase to confirm
that an artifact previously passed a slower, \explicitCheckPhase.


\myparagraph{Private blocklist matching}
A line of research has focused on enabling a client to query whether
an item appears on a blocklist while preserving the confidentiality of
the client's input. This paradigm has been adapted to applications
such as malware detection~\cite{FHEmalware}, spam
filtering~\cite{Kim23:PCSF}, and credential breach
checks~\cite{Li19:C3}. \citet{Kogan21:Checklist} provide a generic
private blocklist lookup construction with two-server PIR. Private
Hash Matching~\cite{Kulshrestha2021} is a service designed for
end-to-end encrypted messaging, where a server matches a client's
media content against a set of images (e.g., child sexual abuse
material) while hiding the server's hash set or nonmatching content
from the client.  But none of the above schemes generate a
hiding and binding token after the match to allow for faster
verifications later on the same content (as in our
\implicitCheckPhase).

Some OPRF variations bear similarity to our \primitiveNameShort
primitive, as well.  Programmable OPRFs
(OPPRF)~\cite{Chandran2022:OPPRF} and oblivious key-value stores
(OKVS)~\cite{Garimella2021:OKVS} both ``program'' an OPRF so that the
server gets a PRF output for an input $\cInput$ iff $\cInput$ matches a
server list item, while the output is pseudorandom elsewhere.  All
solutions above require exact matching against the entire blocklist,
which might be too costly for large blocklists.  \citet{approxSchema}
consider approximate matching over database records but rely on a
trusted third party to facilitate the check.

\myparagraph{Enforcing policies on protocol inputs}
Several other protocols address the general problem of ensuring that
protocol inputs satisfy some predicate. Some
(e.g.,~\cite{Camenisch2009:CertSet, Blanton2015:genom,
  Zhang2017:Input}) outsource the predicate check to a trusted third
party, but in our applications, finding a trusted third party is
problematic. Others (e.g.,~\cite{Chang2021:stat, Prio, Zombie, Reef,
  Luo2023:rexMatching}) prove in zero knowledge certain properties of
an input. However, the proof complexity increases as a function of the
property that needs to be checked. Besides, zero-knowledge
policy-enforcement designs work only when the matching pattern and
policy circuit are public, while \primitiveNameShort aims at keeping
the server’s blocklist private.

Secure computation tools, e.g., garbled circuits, provide a viable
alternative for enforcing policies on protocol inputs. For example,
\citet{Katz2016:Input2PC} proposed a scheme where a garbled circuit is
preceded by a predicate-checking circuit to ensure that the input
follows certain properties.  A similar construction is proposed by
\citet{Baum2016:GC}. However, a garbled circuit may not be the most
efficient primitive for performing a certain task privately
\cite{Garg2018:pkGC}, especially when dealing with sets of items.
\citet{Agrawal2021:covert} proposed an MPC scheme by which a party can
commit to a value and use this value subsequently in a secure
computation, but it supports only covert security, where cheating can
be detected rather than fully prevented. \primitiveNameShort provides
stronger security and privacy guarantees.

\myparagraph{PSI-based mechanisms}
A part of our design relies on approximately matching an item against
a set privately (e.g., using fuzzy PSI mechanisms). Structure-aware
PSIs~\cite{Garimella2022:structurePSI, Garimella2024:structurePSI}
rely on spatial hashing, which guarantees reasonable compute costs
only under certain assumptions about the data in one (or both) of the
input sets. Approximate PSI~\cite{Cho2024:approxPSI} is a similarly
motivated concept that also works only under certain distribution
assumptions. As it is unclear if these distribution assumptions hold
across a variety of applications, we cannot generally use a
structure-aware PSI or an approximate PSI as a building block for a
\primitiveNameShort. 

textit{Distance-aware} PSI~\cite{Chakraborti2023:DAPSI} and
\textit{Fuzzy} PSI~\cite{Blass25:fuzzyPSI} perform
fuzzy matching over two sets \textit{without any data distribution
  assumptions}. While this incurs a higher cost than both
structure-aware PSI and approximate PSI, distance-aware PSI and fuzzy PSI
can be applied more broadly because it is agnostic to the data
distribution. We will use Chakraborti, et al.~\cite{Chakraborti2023:DAPSI} as a building block, 
as we detail in \appref{app:background}.
Unlike Chakraborti et al.,~\cite{Chakraborti2023:DAPSI} 
Blass and Noubir~\cite{Blass25:fuzzyPSI} operates directly on binary vector inputs
and is therefore not directly compatible with \primitiveNameShort.
However, all above solutions
either assume a semi-honest client or require non-trivial
modifications to ensure a consistent client input between the fuzzy
PSI matching and the computation of a hiding and binding token to
store at \server, to enable a subsequent \implicitCheckPhase.
\primitiveNameShort does not trust the client during an
\explicitCheckPhase and, in particular, detects when the client
submits inconsistent inputs from embedding to subsequent checking. 

Beyond general application-agnostic designs,
\citet{Wang2015:approx_editDist} realize an approximate edit-distance
check on genome data. Similar to a \primitiveNameShort, they utilize
an embedding phase and a test phase to compare entries at the client
and the server. Also related is similarity-aware compromised
credential checking (C3)~\cite{Pal2022:C3}.  These solutions work in
semi-honest models only, however. In addition, \primitiveNameShort
removes the assumptions on the structure of input data by
\citet{Wang2015:approx_editDist} that are tailored to genome data and
may not be applicable to other settings.

In \tblref{tab:comp_protocols}, we summarize the costs and key
guarantees of \primitiveNameShort and several cryptographic
alternatives.

\begin{table*}[t]
\centering
{\small
\begin{tabular}{@{}l@{\hspace{1em}}c@{\hspace{1em}}c@{\hspace{1em}}c@{\hspace{1em}}c@{}}
\toprule
\multicolumn{1}{c}{\multirow{2}{*}{Protocol}} 
& \multirow{2}{*}{Computation} 
& \multirow{2}{*}{Communication} 
& \multirow{2}{*}{Client}
& \multirow{2}{*}{\shortstack{TOCTOU \\ resilient}} \\
& & & \\
\midrule
\primitiveNameShort \\
\hspace{1em}\explicitCheckPhase & $\GCcost{\embed}[\inputSize]+\FPSIcost[\setSize{\policyLan}]$ & $\GCcost{\embed}[\inputSize]+\FPSIcost[\setSize{\policyLan}]$ & Mal & \yesIndicator \\
\hspace{1em}\implicitCheckPhase & $\embedcost[\inputSize]+\OLEcost$ & $\OLEcost$ & SH & \yesIndicator \\
PSI & $\PSIcost[\setSize{\blockedElmts}]$ & $\PSIcost[\setSize{\blockedElmts}]$ & Mal & \yesIndicator \\
Fuzzy PSI & $\embedcost[\inputSize]+\FPSIcost[\setSize{\policyLan}]$ & $\FPSIcost[\setSize{\policyLan}]$ & SH & \yesIndicator \\
Garbled Circuit & $\GCcost{\embed}[\inputSize] + \GCcost{\blocked{}}[\setSize{\policyLan}]$ & $\GCcost{\embed}[\inputSize] + \GCcost{\blocked{}}[\setSize{\policyLan}]$ & Mal & \yesIndicator \\
Blind Signature & $\PSIcost[\setSize{\blockedElmts}]$ & $\PSIcost[\setSize{\blockedElmts}]$ & Mal & \noIndicator \\
\bottomrule
\end{tabular}
}
\caption{Protocol comparison. $\GCcost{\embed}$ and
    $\GCcost{\blocked{}}$ denote the costs of executing a garbled
    circuit computing \embed or $\blocked{}$, respectively.  \PSIcost
    and \FPSIcost denote the costs of plain PSI and fuzzy PSI,
    respectively. \OLEcost denotes the cost of performing an OLE, and
    \embedcost denotes the cost of running the embedding algorithm
    \embed outside the garbled circuit. \inputSize denotes the size of
    the client input \cInput.  ``Mal'' and ``SH'' indicate security
    against malicious and only semi-honest clients, respectively.
\label{tab:comp_protocols}}
\vspace{-5pt}
\end{table*}

\section{Use-Case Scenario: TOCTOU Attacks in Email Attachments and URLs}
\label{sec:usecase}

\subsection{Background}
The class of vulnerabilities known as time-of-check vs.\ time-of-use
(TOCTOU) describes inconsistencies that occur when a resource is
validated as safe for use but, when actually used later, would have
been discovered as unsafe if checked at that time~\cite{cwe367}.  The
inconsistency between the earlier check and the later use can occur
because the used item was changed since it was checked or the item was
not changed but was discovered to be malicious in the interim.  TOCTOU
vulnerabilities abound, and indeed the Amazon AWS outage in October
2025 has been at least partly attributed to
one~\cite{Belotti2025:AWS}.

In the scenario we consider here, we are concerned with policy
enforcement on content, such as an email attachment, received from an
untrusted source.  We are agnostic to whether the checked content is
sent by value (e.g., itself attached to an email) or by reference
(e.g., retrieved using a URL in the email).  Empirical analyses from
major security vendors have confirmed that TOCTOU-style email threats
are widespread (e.g.,~\cite{microsoft:filehost,
  Deepwatch:ransomware}).  In a representative TOCTOU exploitation
scenario, an organizational user clicks a link that was previously
marked as referencing benign content.  Between the initial inspection
and user interaction, the adversary replaces the linked content with
ransomware, enabling credential theft and account takeover that
cascade into broader operational and financial damage.

A common model of TOCTOU-resistant malware defense (e.g., see
Microsoft Defender~\cite{Microsoft:Defender}, Symantec Endpoint
Protection~\cite{Broadcom:SEP}, and Cisco Secure
Endpoint~\cite{Cisco:Endpoint}) involves the receiving client querying
a remote (typically, cloud-based) server that hosts a curated,
proprietary blocklist of known malicious artifact descriptors (e.g.,
distance-preserving hashes~\cite{Charikar2002:SimHash, 
Kornblum06:ssdeep, Roussev10:sdhash, Oliver13:TLSH, 
Li15:FHReview}), immediately prior to the
artifact's execution.  Because the blocklist is maintained by the
vendor, it can update the blocklist frequently from threat feeds,
without demanding significant client-side resources (e.g., for local
signature databases).  However, this approach is known to introduce
privacy risks by exposing benign content descriptors to the server,
and additionally adds query latency on the critical path of execution.

The design we consider here improves on this popular model of malware
defense in two ways.  First, our design improves the interaction with
the server to be privacy-preserving for both the checking client's
content and the server's blocklist.  However, the costs of doing so
would add to the receiving client's latency, and so the second
improvement we explore is to move the blocklist check to the sending
client, so that this expensive check need not be incurred on the
critical path of the receiving client using it.

Because the receiving client does not trust the sending client in our
scenario, it requires evidence that the artifact was checked.  A
natural approach would be for the service to (blindly) sign the
content after verifying that it is not blocklisted, and for the sender
to forward this signature to the receiver. However, if the receiver
only locally verifies this signature, then the TOCTOU vulnerability is
reintroduced: the blocklist may have changed after the signature was
issued. Content that was signed while policy-compliant may later
become blocked. Therefore, the receiver must still interact with the
service at the time of use to ensure that the artifact complies with
the \textit{latest} policy.

\subsection{\addText{Definitions}}
We are concerned with building a three-party primitive, 
\primitiveName (\primitiveNameShort in short),
where a sending client ($\client[1]$) checks whether its input $\cInput$
is present on a blocklist held by a server (\server).  If not, the
server will add a token to an allowlist so that a receiving client
($\client[2]$) can confirm that the value $\cInputAlt$ it receives
from $\client[1]$ is present on the allowlist.  At the same time,
$\cInput$ is hidden from \server, while \server's blocklist is hidden
from $\client[1]$ and $\client[2]$. We start with defining
necessary notations.

\subsubsection{Notations}
Let $\genericFnFamily: \genericFnKeyspace \times \genericFnDomain
\rightarrow \genericFnRange$ denote a function family with
\textit{keyspace} \genericFnKeyspace, \textit{domain}
\genericFnDomain, and \textit{range} \genericFnRange.  For
$\genericFnKey \in \genericFnKeyspace$, we denote by
$\genericFnFamily[\genericFnKey]: \genericFnDomain \rightarrow
\genericFnRange$ the function with
$\genericFnFamily[\genericFnKey](\genericFnDomainElmt) =
\genericFnFamily(\genericFnKey, \genericFnDomainElmt)$ for each
$\genericFnDomainElmt \in \genericFnDomain$.  Let \nonnegativeReals
denote the non-negative real numbers. 
For the description of the cryptographic primitives, we 
use $\finiteField$ to denote a finite field, and $\polyNotation{p} \in
\polyField$ to denote a polynomial whose coefficients are drawn from
$\finiteField$. A random polynomial $\genericPoly \getsr \polyField$
is a polynomial with coefficients uniformly sampled from
$\finiteField$. Given two vectors $\genericVec[1], \genericVec[2] \in 
\finiteField^{\numberOfPointsSim}$, $\genericVec[1] \hadamardProd \genericVec[2] \in 
\finiteField^{\numberOfPointsSim}$, and ``\hadamardProd'' is the Hadamard product. 

\subsubsection{\primitiveNameShort Definition}

\primitiveNameShort takes an input $\cInput \in \universe$
from $\client[1]$, where \universe is the input universe, and a list
of items $\blockedElmts \subset \universe$ from \server. The primitive
should produce a binding and hiding mapping on $\cInput$
to add to \server's allowlist \textit{only if}
$\cInput \not\in \blockedElmts$; if $\cInput \in
\blockedElmts$, the protocol should abort. $\client[1]$ then forwards
$\cInput$ to $\client[2]$. $\client[2]$ will interact with \server to
privately test membership of $\cInput$ on \server's allowlist.

That said, there are many scenarios where the blocklist \blockedElmts is 
too large to be specified directly as an input to a \primitiveNameShort 
protocol (as we will show in \secref{sec:eval}). Therefore, we focus on 
settings where the set \blockedElmts can be embedded into a metric 
space \metricSpace with distance metric $\distance: \metricSpace 
\times \metricSpace \rightarrow \nonnegativeReals$, where 
$\finiteField$ denotes a finite field and $\nonnegativeReals$
denotes the non-negative real numbers. 
The embedded 
set can be summarized by a much smaller set $\policyLan \subset 
\metricSpace$.  That is, in lieu of specifying \blockedElmts to the protocol 
directly, \server instead specifies a blocklist $\policyLan \subset \metricSpace$, 
$\setSize{\policyLan} \ll \setSize{\blockedElmts}$, and a threshold \threshold 
such that elements of \blockedElmts, once embedded in $\metricSpace$, 
should fall within distance \threshold of some elements of \policyLan. 
Mathematically, for an output \cInputEmbedding embedded from an element 
in \blockedElmts,
\begin{align*}
	\blocked{}{\policyLan}{\threshold}(\cInputEmbedding)
	& \defeq \left(\exists \lanElmt \in \policyLan: \distance\left(\cInputEmbedding, \lanElmt\right) \le \threshold\right).
\end{align*}

Naturally, such an embedding will not be perfect, and will introduce non-zero false-accept and false-reject rates. Here we consider a randomized embedding function family of the form $\embed: \embedKeySpace \times \universe \rightarrow \metricSpace$.
For instance, the keyspace \embedKeySpace might include the random
vectors used for projections in locality-sensitive hashing algorithms
(e.g.,~\cite{datar2004locality}).  We then define true- and false-reject
rates as follows:
{\small
\begin{align*}
	\trueRejectRate{\blockedElmts}{\policyLan}{\threshold}
	& \defeq \min_{\cInput \in \blockedElmts} 
	\cprob{\Big}{\blocked{\policyLan}{\threshold}(\cInputEmbedding)}{\embedKey \getsr 
	\embedKeySpace, \cInputEmbedding \gets \embed[\embedKey](\cInput)} \\
	\falseRejectRate{\blockedElmts}{\policyLan}{\threshold}
	& \defeq \max_{\cInput \in \universe\setminus\blockedElmts} \cprob{\Big}{\blocked{}{\policyLan}{\threshold}(\cInputEmbedding)}{\embedKey \getsr \embedKeySpace, \cInputEmbedding \gets \embed[\embedKey](\cInput)}
\end{align*}
}

In words, \trueRejectRate{\blockedElmts}{\policyLan}{\threshold} is a
lower bound on the probability of blocking an input \cInput that
should be blocked, and
\falseRejectRate{\blockedElmts}{\policyLan}{\threshold} is an upper
bound on the probability of blocking an input \cInput that should not
be blocked.  As usual, false- and true-accept rates can be defined by
$\falseAcceptRate{\blockedElmts}{\policyLan}{\threshold} \defeq 1 -
\trueRejectRate{\blockedElmts}{\policyLan}{\threshold}$ and
$\trueAcceptRate{\blockedElmts}{\policyLan}{\threshold} \defeq 1 -
\falseRejectRate{\blockedElmts}{\policyLan}{\threshold}$,
respectively.  Naturally, we strive to maximize
\trueRejectRate{\blockedElmts}{\policyLan}{\threshold} and
\trueAcceptRate{\blockedElmts}{\policyLan}{\threshold}.
The definition implies that inputs are not uniformly distributed, while 
keys are. This corresponds to the applications, where inputs are user-chosen.

\begin{defn}[\PrimitiveName]
  \label{dfn:primitive}
  A \primitiveName for blocklist \blockedElmts is a three-party protocol
  among a sending client ($\client[1]$), a receiving client ($\client[2]$),
  and a server (\server), consisting of two phases:
  \explicitCheckPhase and \implicitCheckPhase.
  \begin{itemize}[nosep,leftmargin=1em,labelwidth=*,align=left]
  \item {\bf \ExplicitCheckPhaseHeading:} $\client[1]$ inputs $\cInput \in
    \universe$ and \server inputs $\blockedElmts \subset \universe$,
    $\policyLan \subset \metricSpace$, $\threshold \in \nonnegativeReals$ 
    (with both \policyLan and \threshold computed from \blockedElmts).  
    This phase \textit{succeeds} with probability at least
    \trueAcceptRate{\blockedElmts}{\policyLan}{\threshold} if $\cInput
    \not\in\blockedElmts$ and with probability at most
    \falseAcceptRate{\blockedElmts}{\policyLan}{\threshold} otherwise.
    If this phase succeeds, then $\client[1]$ outputs
    1 and $\server$ outputs state \serverState.  Otherwise, \server aborts
    and \client[1] outputs 0.

   \item {\bf \ImplicitCheckPhaseHeading:} A client $\client \in
     \{\client[1], \client[2]\}$ inputs $\cInputAlt \in \universe$ and
     the server \server inputs \serverState. This phase
     \textit{succeeds} if \serverState contains an output from a
     successful \explicitCheckPhase using $\cInputAlt$.  If so,
     \client outputs 1.  Otherwise, \client outputs 0.  \server has no
     output.
   \end{itemize}
\end{defn}
We will define the guarantees that we want to provide from
these steps depending on the following threat models.


\subsubsection{Threat model}
\label{sec:framework:threat-model}

The threat models we consider for our applications in 
\secref{sec:app} are:
\begin{enumerate}[label=\textbf{T\arabic*},nosep,align=left,labelwidth=*,leftmargin=1.7em]
\item \label{model:client-mal} The server \server and receiving client \client[2] are honest, but $\client[1]$ is malicious, 
  i.e., $\client[1]$ deviates from the
  protocol to circumvent the check. More specifically, a malicious
  $\client[1]$ should be prevented from causing \server to store a
  token for a blocklisted input during \explicitCheckPhase, which is successfully verified during 
an honest implicit check initiated by $\client[2]$. 
The malicious $\client[1]$ may also invoke an implicit check or collude with
some $\client[2]$ to learn more information 
about the allowlist, including the secrets contributed by \server in its computation.,
tokens stored at \server created by itself or by other users.
  We require that the allowlist token is binding, meaning that 
  $\client[1]$ cannot generate an alternative input $\cInputAlt$ that 
  is consistent with an existing allowlist token.
  In addition, the protocol should not leak more information about 
  \blockedElmts to $\client[1]$ than the decision ($\cInput \in 
  \blockedElmts$ or not) implies. We note that \server needs to limit 
  the number of \explicitCheckPhase attempts. Otherwise, $\client[1]$ 
  can online guess the blocklist entries.

\item $\client[1]$ and $\client[2]$ are honest, but \server is
  honest-but-curious.  In this case, our concern is still leaking
  information about $\cInput$ to \server from its interaction with
  $\client[1]$ and $\client[2]$. That is, the allowlist token is hiding, and 
  $\server$’s ability to distinguish between different values 
  of $\cInput$ is negligible.
  \label{model:server-hbc}
\end{enumerate}

We do not consider a threat model in which \server is malicious,
simply because we do not find that reasonable in our applications, 
where \server is contracted to, say, provide malware defense for 
\client[1] and \client[2]. In such cases, \server is incentivized to 
perform policy checking correctly.  In \appref{app:proof_framework}, 
we discuss how to adapt \primitiveNameShort against a server maliciously 
trying to learn $\cInput$. In addition, $\client[2]$ is considered honest 
since it is incentivized to perform the check correctly.  We discuss
more in \appref{app:proof_framework} how \primitiveNameShort protects 
against colluding $\client[1]$ and $\client[2]$.

\section{Preliminaries}
\label{sec:framework}


%

\myparagraph{Oblivious Linear Evaluation ($\oleFunc$)} 
OLE is a two-party
cryptographic primitive between a sender (e.g., a client) and a receiver (e.g., a server).  
The receiver's input is $\OLEAliceInput \in
\finiteField$; the sender's inputs are $\OLEBobInputOne,
\OLEBobInputTwo \in \finiteField$. The ideal function $\oleFunc$
returns to the receiver $\OLEBobInputOne \OLEAliceInput +
\OLEBobInputTwo$ \textit{without revealing \OLEBobInputOne and
	\OLEBobInputTwo}.  No information is revealed to the sender. In our
protocols, we will use the maliciously secure OLE construction by
\citet{Ghosh2017:OLE}.

\myparagraph{Noisy Polynomial Addition ($\npaFunc$)}
\citet{Ghosh2019communication} introduced a {\it noisy polynomial addition}
functionality \npaFunc that takes as input pairs of vectors $\randomVec[1], \cInputVec \in 
\finiteField^{\numberOfPointsSim}$ from one 
party, say a client, and another pair of vectors $\randomVec[2], \cInputVecAlt \in 
\finiteField^{\numberOfPointsSim}$ from another party, say a server, and computes a 
linear combination of these vectors. 
Specifically, the functionality returns to either one or both of the parties the vector 
$\randomVec[2] \hadamardProd \cInputVec + \randomVec[1] \hadamardProd \cInputVecAlt$, 
where ``\hadamardProd'' is the Hadamard product. The functionality does not reveal 
$\randomVec[1], \cInputVec$ to the server and $\randomVec[2], \cInputVecAlt $ to the client (beyond 
what can be learned from the output).

\myparagraph{One-time pairwise unpredictable function} 
We will require a function that is {\it one-time pairwise unpredictable}. Intuitively, a keyed-function $\genericFnFamily: \genericFnKeyspace \times \genericFnDomain \rightarrow \genericFnRange$
is one-time pairwise unpredictable if given the output at a specific point 
under a randomly chosen key, an adversary cannot guess the output at another distinct point. 
The function is one-time in the sense that each new invocation of the function 
requires a 
new random key selection to preserve unpredictable outputs.

\begin{defn} 
	
	A keyed function $\genericFnFamily: \genericFnKeyspace \times \genericFnDomain 
	\rightarrow 
	\genericFnRange$,
        $\setSize{\genericFnKeyspace} \ge \setSize{\genericFnRange}$,
        is one-time pairwise unpredictable if under a random choice of a key 
	$\genericFnKey \in \genericFnKeyspace$, and for any two distinct pairs of inputs $\cInput, 
	\cInputAlt 
	\in 
	\genericFnDomain$, and outputs $\genericFieldElmt, \genericFieldElmtAlt \in 
	\genericFnRange$
	
	\begin{eqnarray}
	 & \prob{\genericFnFamily[\genericFnKey](\cInput) = \genericFieldElmt} = \frac{1}{\setSize{\genericFnRange}}\\
		& \cprob{}{\genericFnFamily[\genericFnKey](\cInput) = \genericFieldElmt} 
		{\genericFnFamily[\genericFnKey](\cInputAlt) = \genericFieldElmtAlt} =
	\frac{\setSize{\genericFnRange}}{\setSize{\genericFnKeyspace}}  \label{eq:unpred_conditional}
	\end{eqnarray}

\noindent
where the probability is over the choice of the key. 
\end{defn}

The first condition in the definition implies that under a random choice of the key the function acts like a random function. 
The second condition implies that given the evaluation at a specific point, the value at a different point can be predicted with probability at most $\frac{\setSize{\genericFnRange}}{\setSize{\genericFnKeyspace}}$. If $\setSize{\genericFnKeyspace} \gg \setSize{\genericFnRange}$, then this probability is small and implies that the 
value is hard to predict.

We now extend this notion to a {\it one-time pairwise unpredictable permutation} 
over vectors in a finite field $\metricSpace$ We define the properties required from such a permutation below. 

\begin{defn}
\label{defn:unpredictable_metric}
	Let $\finiteField$ be a finite field and 
	$\numberOfPointsSim \ge 1$. A keyed-function $\genericFnFamily:
	\genericFnKeyspace^{\numberOfPointsSim} \times \metricSpace \rightarrow \metricSpace$ is
	one-time pairwise unpredictable over \metricSpace if
	for any pair of input $\cInputVec,\cInputVecAlt
	\in \metricSpace$ for which $| \cInputVec \Delta \cInputVecAlt | = \diffInComps$ , and outputs 
	$\vecNotation{\genericFieldElmt},
	\vecNotation{\genericFieldElmtAlt} \in \metricSpace$, 
	
	\begin{eqnarray}
	& \prob{\genericFnFamily[\genericFnKey](\cInputVec) = \vecNotation{\genericFieldElmt}} = \frac{1}{\setSize{\finiteField}^{\numberOfPointsSim}} \\
	&\cprob{}{\genericFnFamily[\genericFnKey](\cInputVec) = \vecNotation{\genericFieldElmt}}{
		\genericFnFamily[\genericFnKey](\cInputVecAlt) = \vecNotation{\genericFieldElmtAlt}}
	=
	\left(\frac{\setSize{\finiteField}}{\setSize{\genericFnKeyspace}}\right)^{\diffInComps}
	\end{eqnarray}

\noindent
where the probability is over the choice of the key $\genericFnKey
\in \genericFnKeyspace^{\numberOfPointsSim}$. Here $| \cInputVec
\Delta \cInputVecAlt | = \diffInComps$ means that \cInputVec and
\cInputVecAlt differ in at least $\diffInComps$ components.  

	
\end{defn}

Intuitively, this definition implies that if two vectors disagree on \diffInComps components, then the evaluations at these components remain unpredictable. If there is a one-time pairwise unpredictable function 
$\genericFnFamily: \genericFnKeyspace \times \finiteField \rightarrow \finiteField$, 
then 
we can construct a one-time pairwise unpredictable permutation over the metric space 
$\metricSpace$ by using \numberOfPointsSim independent 
instances of 
$\genericFnFamily$, one for each dimension in $\metricSpace$.

\section{\primitiveNameShort Framework}
\label{sec:framework:building-blocks}

\primitiveNameShort is realized with a general cryptographic 
framework. In this section, we will describe the individual 
building blocks of this framework as ideal functionalities, and 
then show how to combine them. 
In \secref{sec:app}, we will describe how to realize 
these ideal functionalities. 

Recall that
\primitiveNameShort has two phases that share states, an
\explicitCheckPhase and an \implicitCheckPhase. To
accommodate both threat models \ref{model:client-mal} and 
\ref{model:server-hbc}, we will further split the
\explicitCheckPhase functionality into two separate ideal
functionalities. Specifically, the \explicitCheckPhase can be
viewed as a two-step process: i) \textit{Embed-and-Map} 
$\embedAndMap$, that embeds $\client[1]$'s input into the
metric space $\metricSpace$, and ii) \textit{Test-and-Commit} 
$\testAndCommit$, that performs the predicate check
$\blocked{}{\policyLan}{\threshold}(\cdot)$ on the embedded input
against the embedded blocklist $\policyLan$ provided by \server.
Lastly, during \implicitCheckPhase, $\client[2]$ and \server
participate in iii) \textit{\ImplicitCheckPhase} $\authentication$, 
that allows fast verification on a previously checked input. 
We present the details for the three ideal functionalities,
which are summarized in \figref{fig:frameworkIF}.

\begin{figure}[t]
\begin{oframed}
\noindent\small

\smallskip\noindent \textbf{Parameters:~} An embedding function
$\embed: \embedKeySpace \times \universe \rightarrow \metricSpace$; 
a non-programmable random oracle $\hash: \{0,1\}^{\ast} \rightarrow 
\metricSpace$; and a one-time pairwise unpredictable 
permutation family over the metric space $\PRP: {\PRPKeySpace}^{\numberOfPointsSim} \times 
\metricSpace\rightarrow \metricSpace$.

\medskip\noindent\underline{$\embedAndMap$: Embed-and-Map functionality}

\smallskip\noindent \textbf{Inputs:~} $\client[1]$ inputs $\cInput
\in \universe$ and $\cMask \in \metricSpace$. \server inputs $\embedKey \in
\embedKeySpace$, $\sMask \in \metricSpace$, and $\PRPKey[1], \PRPKey[2]
\in {\PRPKeySpace}^{\numberOfPointsSim}$ .

\smallskip\noindent \textbf{Output:~} $\embedAndMap$ outputs 
$\prpOutputEmbedding \assign
\PRP[\PRPKey[1]](\embed[\embedKey](\cInput))$ and
$\prpOutputHash \assign \PRP[\PRPKey[2]](\authToken)$ where
$\authToken \assign \sMask \hadamardProd \hash(\cInput,
\embed[\embedKey](\cInput), \embedKey, \cMask)$
to $\client[1]$; and \server has no output.

\medskip\noindent\underline{$\testAndCommit$: Test-and-Commit Functionality}

\smallskip\noindent \textbf{Inputs:~} $\client[1]$ inputs
$\prpOutputEmbedding$, $\prpOutputHash \in \metricSpace$.
\server inputs $\policyLan \subset \metricSpace$, 
$\threshold \in \nonnegativeReals$, and keys $\PRPKey[1]$, $\PRPKey[2]
\in {\PRPKeySpace}^{\numberOfPointsSim}$.

\smallskip\noindent \textbf{Output:~} If
$\neg\blocked{\policyLan}{\threshold}(\PRP[\PRPKey[1]]^{-1}
(\prpOutputEmbedding))$, then $\testAndCommit$ outputs 1
and $\nonce$ to $\client[1]$ and $\prfOutput \assign \PRP[\PRPKey[1]]^{-1}(\prpOutputEmbedding) 
\linearComb \PRP[\PRPKey[2]]^{-1}(\prpOutputHash)$ to $\server$.
Otherwise, $\testAndCommit$ outputs 0
to $\client[1]$; \server aborts and learns $\{\lanElmt \in \blockList: \distance\left
(\PRP[\PRPKey[1]]^{-1}
(\prpOutputEmbedding), \lanElmt\right) \le \threshold\}$.\footnotemark

\medskip\noindent\underline{$\authentication$: \ImplicitCheckPhaseHeading Functionality}

\smallskip\noindent \textbf{Inputs:~} $\client[2]$ inputs
$\cICInput[1], \cICInput[2] \in \metricSpace$.  
\server inputs $\sMask, \prfOutput \in \metricSpace$.

\smallskip\noindent \textbf{Output:~} $\authentication$ outputs 
$\prfOutputAlt \gets \cICInput[1] \linearComb \sMask \hadamardProd
\cICInput[2]$  to $\server$. If $\prfOutput \equals \prfOutputAlt$, 
$\client[2]$ receives 1; otherwise, $\client[2]$ receives 0.

\end{oframed}
\captionof{figure}{Ideal Functionalities used in \primitiveNameShort }
\label{fig:frameworkIF}
\end{figure}

\footnotetext{Note that the server can implicitly learn this
  information even if the functionality did not return it, as an abort
  would imply \cInput is close to one of the server's inputs.}

\subsection{Embed-and-Map} 

Given $\client[1]$'s input $\cInput$ and a random mask $\cMask$, and 
\server's keys $\embedKey$, $\PRPKey[1]$, $\PRPKey[2]$, and 
a random mask $\sMask$, this functionality first computes 
$\prpOutputEmbedding \assign \PRP[\PRPKey[1]](\embed
[\embedKey](\cInput))$ and $\prpOutputHash \assign \PRP
[\PRPKey[2]](\sMask \hadamardProd
\hash(\cInput,\embed[\embedKey](\cInput), \embedKey, \cMask))$ where
$\embed$ is an embedding function, $\PRP:
{\PRPKeySpace}^{\numberOfPointsSim} \times \metricSpace\rightarrow \metricSpace$ is a
one-time pairwise unpredictable permutation family, and $\hash$ is a
hash function modeled as a non-programmable random oracle.
$\prpOutputEmbedding$ and $\prpOutputHash$ are expected to be
inputs to the next functionality (i.e., Test-and-Commit).
$\prpOutputEmbedding$ will be used (after inversion)
for testing against the blocklist
for approximate matching, and $\prpOutputHash$ will be used for
generating a hiding and binding token as an allowlist
entry for the \implicitCheckPhase. We highlight that though we split
the \explicitCheckPhase into two phases for concrete efficiency,
we maintain the security by carefully ``stitching'' the two
phases. 
\addText{
$\prpOutputEmbedding$ and $\prpOutputHash$ are
non-invertible to $\client[1]$ without the keys $\PRPKey[1], \PRPKey[2]$
under the unpredictable permutation $\PRP$.
} 
As we will see, if $\client[1]$ is malicious and uses
$\prpOutputEmbeddingAlt \ne \prpOutputEmbedding$ and/or
$\prpOutputHashAlt \ne \prpOutputHash$, it will be unable to find (with overwhelming probability) 
a $\cInputAlt \ne \cInput$ and a $\cMaskAlt \ne \cMask$ that matches an allowlist
token that will be stored on \server.
%
%
Therefore, $\client[1]$ is
``forced'' to submit a consistent input to $\testAndCommit$ that is generated from
$\embedAndMap$. This addresses security against a
malicious $\client[1]$ (threat model \ref{model:client-mal}).
\addText{
\server keeps $\PRPKey[1], \PRPKey[2]$ from $\embedAndMap$ 
for the next functionality $\testAndCommit$.
}


\subsection{Test-and-Commit}

During Test-and-Commit, the functionality \testAndCommit
inverts $\prpOutputEmbedding$ and  checks whether $\PRP[\PRPKey[1]]^{-1}(\prpOutputEmbedding)$ (which should
be equal to $\embed[\embedKey](\cInput)$ if $\client[1]$ is honest) is blocklisted. 
If not, \testAndCommit
inverts $\prpOutputHash$ and outputs an allowlist token $\prfOutput \assign 
\PRP[\PRPKey[1]]^{-1}(\prpOutputEmbedding) \linearComb \PRP[\PRPKey[2]]^{-1}
(\prpOutputHash)$ to \server for updating the allowlist. 
If $\client[1]$ inputs the same $\prpOutputEmbedding$ and $\prpOutputHash$ 
that it received from $\embedAndMap$ to \testAndCommit, 
then $\prfOutput = \embed[\embedKey](\cInput) + \sMask \times \hash(\cInput,\embed[\embedKey](\cInput), \embedKey, \cMask)$.

\addText{
Subsequently, $\client[1]$ forwards $\cInput$ and $\cMask$ to $\client[2]$ 
to use in an \implicitCheckPhase, while \server adds a new entry to the allowlist. 
For a successful check, \server discards $\PRPKey[1], \PRPKey[2]$ stored
from $\embedAndMap$, and adds a token $\langle \nonce, \embedKey, \sMask, 
\prfOutput \rangle$ to its allowlist.
We will explain the use of $\nonce$ when describing $\authentication$.
}
\client[1] only learns whether the \explicitCheckPhase succeeded in setting up 
an allowlist token, \emph{but does not learn the token itself}. 
As we will prove, $\prfOutput$ hides $\cInput$ due to the 
random oracle; this addresses the security requirements for 
a semi-honest \server (threat model \ref{model:server-hbc}).


\subsection{Implicit Check} 
To achieve TOCTOU resilience, parties $\client[2]$, who receive $\client[1]$'s 
input $\cInput$ and \cMask, must ensure that the content $\cInput$ remains
policy compliant with the blocklist at use time.  However, re-executing an
\explicitCheckPhase introduces non-trivial overhead since the cost must be at least linear 
to the size of $\policyLan$. \emph{In many applications (e.g., software 
distribution, email attachments, and CDN delivery), there will be 
more content receivers than the content sender and it is necessary 
for such receivers to repeat the check for the same input $\cInput$.}
To mitigate this, \server will retain an allowlist with $\langle \nonce, \embedKey, 
\sMask, \prfOutput \rangle$ from each successful \explicitCheckPhase.
Then, $\client[2]$ can verify that a content $\cInput$ is not blocklisted only 
if a previous \explicitCheckPhase succeeded for that input, achieving verification 
with a computation cost independent of the blocklist size.

During \implicitCheckPhase, $\client[2]$ can be provided the 
embedding key \embedKey by \server in the clear, because each 
\explicitCheckPhase requires a new embedding key, and once $\client[1]$ 
has set up \prfOutput during the \explicitCheckPhase, the embedding 
key does not provide any additional information/advantage to a malicious 
$\client[1]$ in circumventing the policy check. 
\addText{\embedKey is not an input to the one-time unpredictable function, 
its exposure has no impact on that property.}
$\client[2]$ uses the forwarded
input pair $\cInput, \cMask$ from $\client[1]$ to re-compute 
token $\cICInput[1] \assign \embed[\embedKey](\cInput)$ and $\cICInput[2]
\assign \hash(\cInput,\embed[\embedKey] (\cInput), \embedKey, \cMask)$ 
locally. 
%
%
%
Then $\server$ and $\client[2]$ interactively and privately 
evaluate $\prfOutputAlt \gets \cICInput[1] + \sMask \hadamardProd \cICInput[2]$. 
Only \server receives the output $\prfOutputAlt$ and no further 
information is given to either $\client[2]$ or \server. 
\server compares $\prfOutputAlt$ 
against the $\prfOutput$ stored from a previous \explicitCheckPhase. $\client[2]$ learns
whether the \implicitCheckPhase has succeeded, implying that $\prfOutputAlt$ has matched some
$\prfOutput$ from an allowlist token $\langle \nonce, \embedKey, \sMask, \prfOutput \rangle$. 
As we will show, $\prfOutput$ is computationally binding, and so it is infeasible for a malicious $\client[1]$ 
to find some pair of inputs $\cInputAlt, \cMaskAlt$ that match an allowlist token generated from different 
inputs $\cInput, \cMask$ during an \explicitCheckPhase. 
\addText{
We acknowledge that this design inevitably reveals the correspondence 
between $\client[1]$ and $\client[2]$ to \server. However, we emphasize 
that this disclosure stems from the underlying system architecture rather 
than from a weakness of the protocol itself. In many widely deployed 
systems—particularly email infrastructures~\cite{rfc5321} and policy enforcement 
services~\cite{rfc6409}—the server must observe both the source and the destination 
of each request. Eliminating such metadata would require substantial 
complexity and performance overhead and are outside the scope of our 
system design.
}

\myparagraph{Allowlist}
\addText{
So far we have asserted that the \implicitCheckPhase 
outperforms the \explicitCheckPhase because the allowlist is 
much smaller than the blocklist; however, this assumption has 
not yet been formally justified. In the actual implementation, 
each explicit check adds one entry to the allowlist.
\server keeps an identifier for each allowlist entry because 
when $\client[2]$ initiates \authentication, \server will need to 
forward $\embedKey$ associated with the allowlist entry 
$\client[2]$ is trying to query. Hence, the \implicitCheckPhase 
is indexed by a nonce $\nonce$ and is implemented with a fast 
hashtable lookup. $\nonce$ is generated from the 
\explicitCheckPhase timestamp and is independent of the file 
content to avoid potential leakage over $\cInput$. 
The storage overhead grows linearly with the number of explicit 
checks. We argue that this cost is inherent to our security model: 
each allowlist entry must incorporate a per-explicit-check server 
secret to prevent clients from probing existing entries, as well as a 
client secret to prevent the server from mounting offline attacks 
to recover w. 
}

\addText{
As was discussed in \secref{sec:usecase}, \primitiveNameShort 
enables resistance to TOCTOU attacks. Specifically, the allowlist 
is cleared whenever the blocklist is updated, at which point 
$\client[2]$ is redirected to the \explicitCheckPhase so that the 
updated policy can be applied. In practice, public blocklists 
are refreshed on timescales ranging from 
minutes~\cite{SpamhausSBL, GoogleSafeBrowsing}, 
to hours~\cite{proofpointET}, to days~\cite{Shadowserver}. Private 
blocklists are typically even harder for an attacker to learn or adapt 
to, and therefore may require updates less frequently. Consequently, 
the amortized benefit of the \implicitCheckPhase, which avoids 
repeated \explicitCheckPhase, will generally outweigh the occasional 
cost incurred by blocklist updates.
}

\addText{We note that TOCTOU vulnerabilities are primarily 
problematic when the inconsistency window is sufficiently large 
for an adversary to reliably exploit it. Therefore, we design
\primitiveNameShort to reduce the time window in which 
a TOCTOU inconsistency can be exploited rather than eliminating
the it entirely. First, even under optimal conditions, executing 
a query requires a non-zero amount of time, which inherently 
leaves a brief opportunity for a TOCTOU attack. By shrinking 
this window to the latency of a single query, \primitiveNameShort 
forces an attacker to act within a very narrow timeframe, 
substantially lowering the probability of successful exploitation. 
Second, there exists an unavoidable delay for newly emerged 
threats to be reflected in $\server$’s state, creating a potential 
zero-day exposure. We argue that this residual delay between 
server-side updates and client-side checks is inherent to any 
distributed system and does not represent a weakness specific to 
\primitiveNameShort. 
}

\myparagraph{Putting it all together}
We present the framework utilizing these functionalities in
\figref{fig:frameworkSH}, i.e., assuming that \embedAndMap,
\testAndCommit, and \authentication are secure in the threat models in
\secref{sec:framework:threat-model}. In \figref{fig:framework_interaction}, we visualize the interactions 
among the ideal
functionalities, $\client[1]$, $\client[2]$, and $\server$.  
 We present a formal proof of 
security of the framework in \appref{app:proof_framework}, relative to the
properties of $\PRP$ and $\hash$
\addText{
, and how to securely instantiate each ideal
functionality in \appref{app:proofs_pake}. 
}
Below we provide a sketch of the proof. 

\begin{figure}[t]
\centering
\includegraphics[width=0.8\textwidth]{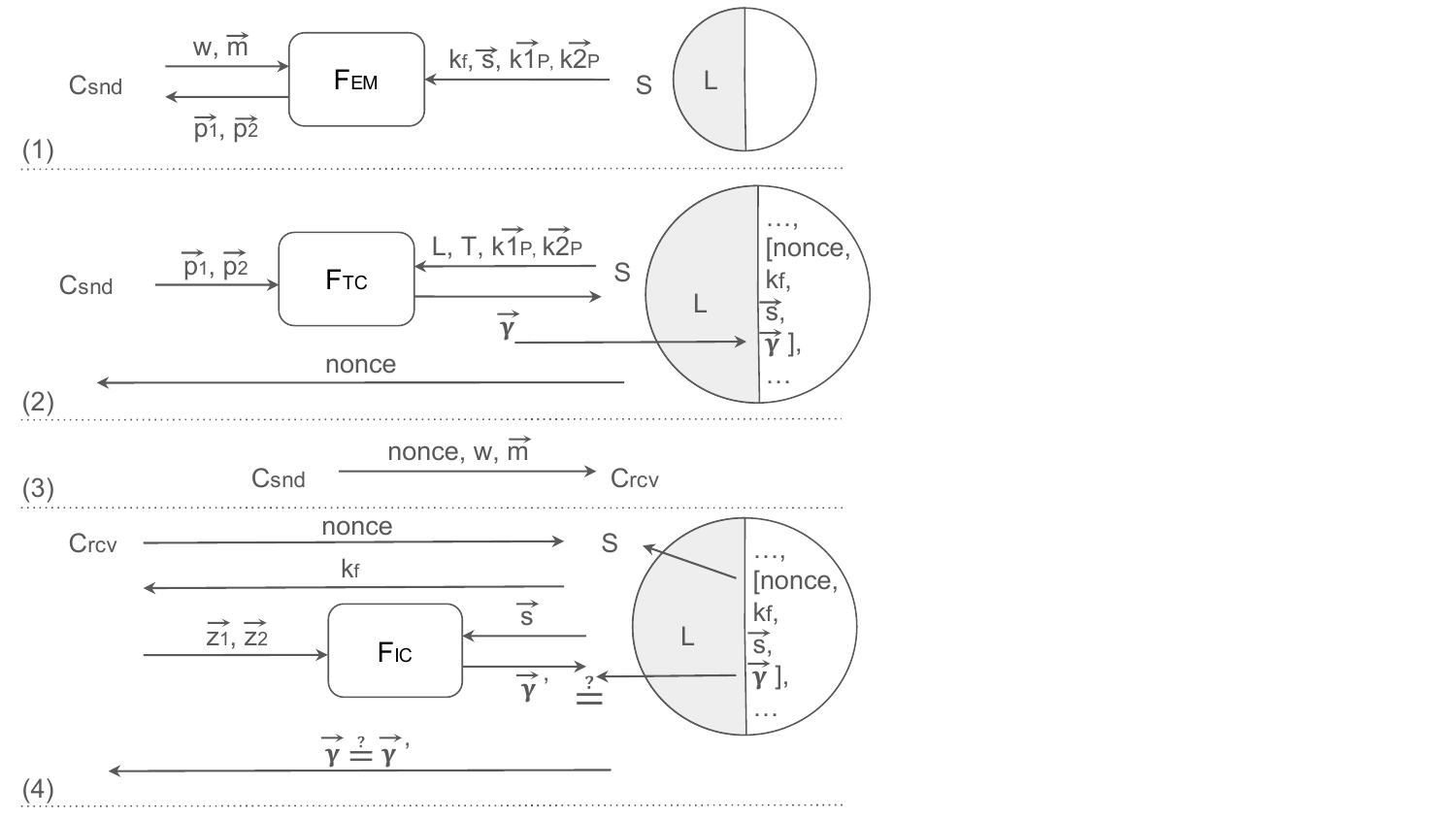}
\caption{Framework interaction\label{fig:framework_interaction}}
\end{figure}

\begin{figure}[t]
\begin{oframed}
{\small
\noindent\underline{\primitiveNameShort Framework}
		

\smallskip\noindent \textbf{Inputs:~} $\client[1]$ inputs $\cInput \in
\universe$.  \server inputs $\policyLan \subset \metricSpace$,
$\threshold \in \nonnegativeReals$, and $\PRPKey[1], \PRPKey[2]
\in {\PRPKeySpace}^{\numberOfPointsSim}$.

\smallskip\noindent
\textbf{\ExplicitCheckPhase:}
\begin{enumerate}[topsep=0.5em,leftmargin=1.75em,labelwidth=*,align=left]

\item $\client[1]$ selects $\cMask \getsr \metricSpace$. \server selects 
  $\sMask \getsr \metricSpace$ and
  $\embedKey \getsr \embedKeySpace$.

\item $\client[1]$ and \server call \embedAndMap. $\client[1]$ inputs
  $\cInput$ and $\cMask$. \server inputs \embedKey, $\sMask$, $\PRPKey[1]$, 
  $\PRPKey[2]$. $\client[1]$ learns $\prpOutputEmbedding \assign
  \PRP[\PRPKey[1]](\embed[\embedKey](\cInput))$ and $\prpOutputHash
  \assign \PRP[\PRPKey[2]](\authToken)$ where $\authToken \assign \sMask \hadamardProd
  \hash(\cInput, \embed[\embedKey](\cInput), \embedKey, \cMask)$. \server gets no output.
  \label{frameworkSH:embed}
			
\item $\client[1]$ sets
  $\prpOutputEmbeddingAlt \gets \prpOutputEmbedding$,
  and $\prpOutputHashAlt \assign \prpOutputHash$.
  \label{frameworkSH:glue}
  
\item $\client[1]$ and \server call \testAndCommit. $\client[1]$ inputs
  $\prpOutputEmbeddingAlt$ and $\prpOutputHashAlt$. \server inputs
  \policyLan, \threshold, $\PRPKey[1]$, and $\PRPKey[2]$.
  If
  $\neg\blocked{\policyLan}{\threshold}(\PRP[\PRPKey[1]]^{-1}(\prpOutputEmbeddingAlt))$
  then $\client[1]$ receives 1 and $\nonce$; \server receives $\prfOutput \assign 
  \PRP[\PRPKey[1]]^{-1}(\prpOutputEmbedding) 
  \linearComb \PRP[\PRPKey[2]]^{-1}(\prpOutputHash)$.
  Otherwise, $\client[1]$ receives 0; \server aborts and learns $\{\lanElmt \in \blockList: \distance\left
  (\PRP[\PRPKey[1]]^{-1}(\prpOutputEmbeddingAlt), \lanElmt\right) \le \threshold\}$.
  \label{frameworkSH:commit}

\end{enumerate}

\textbf{\ImplicitCheckPhase:}
\begin{enumerate}[topsep=0.5em,leftmargin=1.75em,labelwidth=*,align=left]
\setcounter{enumi}{4}

\item \client[1] sets $\nonceAlt \assign \nonce$, $\cInputAlt \assign \cInput$, 
  $\cMaskAlt \assign \cMask$
  and sends $\nonceAlt$, $\cInputAlt$ and $\cMaskAlt$ to $\client[2]$.

\item $\client[2]$ sends $\nonceAlt$ to \server. \server indexes the allowlist
  $\serverState$ with $\nonceAlt$ and obtains $\langle \embedKey, \sMask, \prfOutput \rangle$. 
  \server sets $\embedKeyAlt \assign \embedKey$, 
  $\sMaskAlt \assign \sMask$, and $\prfOutputAlt \assign \prfOutput$. \server sends
  \embedKeyAlt to $\client[2]$.

\item $\client[2]$ computes $\cICInput[1] \assign \embed[\embedKeyAlt](\cInputAlt)$
  and $\cICInput[2] \assign
  \hash(\cInputAlt, \embed[\embedKeyAlt](\cInputAlt), \embedKeyAlt, \cMaskAlt)$.
  \label{frameworkSH:client_input}

\item $\client[2]$ and \server call \authentication. $\client[2]$ inputs
  $\cICInput[1], \cICInput[2]$, and \server inputs $\sMaskAlt$. 
  $\server$ receives $\prfOutputIC \assign \cICInput[1] 
  \linearComb \sMaskAlt \hadamardProd
  \cICInput[2]$.
  \label{frameworkSH:authenticate}

\item \server checks whether $\prfOutputIC \equals \prfOutputAlt$. If so, 
  $\client[2]$ receives 1; otherwise, $\client[2]$ receives 0.

\end{enumerate}
}

\end{oframed}
\captionof{figure}{\primitiveNameShort framework}
\label{fig:frameworkSH}	
\end{figure}

\begin{theorem}
Assuming that there are secure protocolds realizing \embedAndMap, \testAndCommit and \authentication, 
the framework described in \figref{fig:frameworkSH} securely realizes \primitiveNameShort
against the adversaries described in \secref{sec:framework:threat-model}.
\end{theorem}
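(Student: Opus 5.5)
We work in the hybrid model where \embedAndMap, \testAndCommit and \authentication are ideal functionalities; by the composition theorem, replacing them with secure protocols preserves security. The proof then splits along the two threat models of \secref{sec:framework:threat-model}. For each we build a simulator interacting with the ideal \primitiveNameShort functionality of \dfnref{dfn:primitive}, or give a game-based reduction for the binding and hiding properties. Correctness is immediate: with honest parties, \testAndCommit stores $\prfOutput = \embed[\embedKey](\cInput) + \sMask \hadamardProd \hash(\cInput,\embed[\embedKey](\cInput),\embedKey,\cMask)$, and \client[2] recomputes exactly this value in \authentication. The success probabilities in the \explicitCheckPhase are therefore inherited from \trueAcceptRate{\blockedElmts}{\policyLan}{\threshold} and \falseAcceptRate{\blockedElmts}{\policyLan}{\threshold}.

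\textbf{Malicious \client[1] (\ref{model:client-mal}).} The simulator extracts $(\cInput,\cMask)$ from \client[1]'s call to \embedAndMap. It answers with uniformly random $\prpOutputEmbedding,\prpOutputHash$; this is justified by the first condition of \dfnref{defn:unpredictable_metric}, since fresh keys make each output uniform on \metricSpace. It then sees the values $\prpOutputEmbeddingAlt,\prpOutputHashAlt$ that \client[1] submits to \testAndCommit. If these equal what was issued, the simulator forwards \cInput to the ideal functionality and relays the accept/reject bit, so \client[1] learns only the decision about \blockedElmts. If they differ, we must show that the token \prfOutput that \server stores is, except with negligible probability, not matched by any $(\cInputAlt,\cMaskAlt)$ that \client[1] (or a colluding \client[2]) can later present in \authentication.

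This deviation case is the main obstacle. Suppose \client[1] changes $\diffInComps\ge1$ components. By the second condition of \dfnref{defn:unpredictable_metric}, each changed component inverts to a value that \client[1] predicts with probability at most $\setSize{\finiteField}/\setSize{\PRPKeySpace}$. To pass a later check, \client[1] would need an $(\cInputAlt,\cMaskAlt)$ with $\embed[\embedKey](\cInputAlt) + \sMask\hadamardProd\hash(\cInputAlt,\dots) = \prfOutput$, where \sMask is unknown to it. We bound this in a hybrid argument:
\begin{itemize}
\item first, replace \hash by a lazily sampled random oracle, so every candidate $(\cInputAlt,\cMaskAlt)$ yields a fresh uniform $\hash$ value;
\item second, use the uniform \sMask: each oracle query satisfies the equation with probability about $1/\setSize{\finiteField}^{\numberOfPointsSim}$ per attempted component, giving a bound of the form $\hashOracleQueries\cdot(\setSize{\finiteField}/\setSize{\PRPKeySpace})^{\diffInComps}$ plus $\hashOracleQueries/\setSize{\finiteField}$-type terms;
\item third, argue that exposing \embedKey in the \implicitCheckPhase does not help, since \embedKey is not a key of \PRP and the keys $\PRPKey[1],\PRPKey[2]$ are discarded.
\end{itemize}
The same random-oracle argument gives binding against forging a second input for an honest entry. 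Uniqueness of the nonces and per-entry \sMask stops cross-entry probing. Note that \client[1] sees $\sMask\hadamardProd$-blinded values only through \PRP, which fixes the correct use of the one-time property.

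\textbf{Honest-but-curious \server (\ref{model:server-hbc}).} \server's view consists of its own inputs, the abort or the \prfOutput from \testAndCommit, \nonce, and $\prfOutputIC$ from \authentication. With honest clients $\prfOutputIC=\prfOutput$, so the simulator outputs a single value for both. On success it samples $\prfOutput$ uniformly. This is indistinguishable because $\hash(\cInput,\cdot,\embedKey,\cMask)$ is uniform as long as \server never queries the oracle on the point containing the uniformly random $\cMask$, which happens with probability at most $\hashOracleQueries/\setSize{\finiteField}^{\numberOfPointsSim}$. Since \sMask has no zero components except with negligible probability, $\sMask\hadamardProd\hash(\cdot)$ is uniform and one-time-pads $\embed[\embedKey](\cInput)$. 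On abort, the leaked set of nearby blocklist entries is exactly the information the ideal functionality's decision implies, per the footnote to \figref{fig:frameworkIF}. Summing the hybrid distances gives the theorem.
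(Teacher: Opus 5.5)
The ingredients you use are the paper's: one-time pairwise unpredictability of $\PRP$ against a $\client[1]$ that changes its values between $\embedAndMap$ and $\testAndCommit$, the random oracle for binding of $\prfOutput$, and the pad $\sMask\hadamardProd\hash(\cdot,\cMask)$ for hiding. The gap is in the real/ideal packaging. The paper deliberately avoids it and proves game-based bounds for $\ExptCC$ and $\ExptS$ instead.

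In the deviating branch of \ref{model:client-mal} you specify no simulator, and one that sees only the ideal decision on $\cInput$ cannot exist. There $\client[1]$ still receives $\neg\blocked{\policyLan}{\threshold}(\PRP[\PRPKey[1]]^{-1}(\prpOutputEmbeddingAlt))$. This is a blocklist predicate evaluated at $\embed[\embedKey](\cInput)$ shifted by a key-dependent amount, not at $\embed[\embedKey](\cInput)$ itself.

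For blocked $\cInput$ the shifted point can be unblocked. For example, under a Hamming-type metric, suppose $\embed[\embedKey](\cInput)$ is at distance exactly $\threshold$ from its only nearby entry. Randomizing one coordinate on which they agree pushes it to $\threshold+1$. The real world then accepts and stores an unusable token where the ideal world rejects. Moreover, which coordinates flip the bit reveals blocklist structure, so ``$\client[1]$ learns only the decision'' is not established for deviating clients.

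Likewise, on abort $\testAndCommit$ hands $\server$ the set of nearby entries, which is more than the decision bit. The paper sidesteps this by having $\ExptS$ return 0 whenever a challenge input is blocked. To keep your route you would need a stronger ideal functionality than \dfnref{dfn:primitive}: masks and nonces, an accept-with-unusable-entry option for a corrupt sender, and explicit abort leakage. Otherwise what you actually prove is game-based binding and hiding, i.e.\ the paper's proof.

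Your bound is also mis-shaped in the paper's case (i), $\cInputAlt=\cInput$. The attack there runs as follows:
\begin{itemize}
\item take a blocked $\cInput$;
\item shift $\prpOutputEmbedding$ and $\prpOutputHash$ in the same coordinate, so the tested point is unblocked while the two shifts cancel in $\prfOutput$ (for the affine $\PRP$ this means guessing one ratio of key coordinates);
\item forward the original $(\cInput,\cMask)$.
\end{itemize}
This makes no $\hash$ query and never involves $\sMask$. Lazy sampling is irrelevant, because $\hash(\cInput,\embed[\embedKey](\cInput),\embedKey,\cMask)$ was already fixed inside $\embedAndMap$.

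Its probability is the unpredictability term alone (the paper's $2\setSize{\finiteField}/\setSize{\PRPKeySpace}$), not something proportional to $\hashOracleQueries$. Since $\client[1]$ picks $\diffInComps$, the exponent gives nothing beyond $\diffInComps=1$. Your random-oracle and $\sMask$ hybrids belong to the case $\cInputAlt\neq\cInput$, which is the paper's $(\hashOracleQueries+1)/\setSize{\metricSpace}$ term. Keep the paper's split on whether $\cInputAlt=\cInput$.

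In your favour: requiring $\sMask$ to have no zero coordinates is genuinely needed, and it is missing from the paper's $\ExptS$, where $\AdvS{1}$ chooses $\sMask$. With $\sMask=\langle 0,\dots,0\rangle$ the token is $\embed[\embedKey](\cInput[\bit])$. A server knowing $\embedKey$ compares it with $\embed[\embedKey](\cInput[0])$ and wins outright. Your honest sampling of $\sMask$ is what makes the hiding bound hold.
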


\begin{proofsketch}
We prove that the \primitiveNameShort framework is secure against threat models \ref{model:client-mal}
and \ref{model:server-hbc}. 

\smallskip\noindent\underline{Protection against threat model \ref{model:client-mal}:}
Consider an adversary that interacts with \server. 
During \explicitCheckPhase, the adversary may try to produce a file input $\cInput$ that should be blocked, but is not detected as such 
due to a false negative under the embedding function $\embed$. Since the embedding key $\embedKey$ 
is randomly selected and unknown to the adversary, we assume that it is hard for the adversary to produce such a false negative. 

Alternatively, the adversary may produce file inputs $\cInput \neq \cInputAlt$ and masks $\cMask \neq \cMaskAlt$, such that the \explicitCheckPhase succeeds with $\cInput, \cMask$ as inputs. However, the adversary 
sends $\cInputAlt, \cMaskAlt$ to an honest receiving client $\client[2]$ and the \implicitCheckPhase 
with $\cInputAlt, \cMaskAlt$ succeeds. This implies that the adversary has tricked \server into authenticating 
an input for an honest \client[2] that was not verified by a previous \explicitCheckPhase. 
We will show next that this happens with probability $\le |\finiteField|/|\PRPKeySpace|$. 

%
%
%

Suppose that the adversary provides inputs $\cInput, \cMask$ to $\embedAndMap$, and $\prpOutputEmbeddingAlt, \prpOutputHashAlt$ to \testAndCommit. Let $\prfOutput$ be the allowlist token 
stored at \server after the explicit check. Now consider that the adversary provides $\cInputAlt, \cMaskAlt$ to $\client[2]$. It must be that 
\begin{eqnarray}
\prfOutput = \embed[\embedKey](\cInputAlt) + \sMask \times \hash(\cInputAlt, \embed[\embedKey](\cInputAlt), \embedKey, \cMaskAlt) \label{eq:token_defn}
\end{eqnarray}
\noindent
Otherwise, the implicit check, where the honest $\client[2]$ provides inputs $\cInputAlt, \cMaskAlt$, and the server provides inputs $\embedKey, \prfOutput, \sMask$, will not verify.

Since, $\testAndCommit$ sets $\prfOutput = \PRP[\PRPKey[1]]^{-1}(\prpOutputEmbeddingAlt) + \PRP[\PRPKey[1]]^{-1}(\prpOutputHashAlt)$, we have
\[
\PRP[\PRPKey[1]]^{-1}(\prpOutputEmbeddingAlt) + \PRP[\PRPKey[1]]^{-1}(\prpOutputHashAlt) = \embed[\embedKey](\cInputAlt) + \sMask \times \hash(\cInputAlt, \embed[\embedKey](\cInputAlt), \embedKey, \cMaskAlt)
\]
\noindent
If $\prpOutputEmbeddingAlt \neq \prpOutputEmbedding$ or 
$\prpOutputHashAlt \neq \prpOutputHash$, then 
$\prfOutput$ is unpredictable without $\PRPKey[1], \PRPKey[2]$ 
(\dfnref{defn:unpredictable_metric}). In that case, the adversary 
produces $\cInputAlt, \cMaskAlt$ consistent with \eqref{eq:token_defn} 
only with probability $\le |\finiteField|/|\PRPKeySpace|$. So, it must be that 
$\embed[\embedKey](\cInputAlt) = \embed[\embedKey](\cInput)$ and 
$\hash(\cInputAlt, \embed[\embedKey](\cInputAlt), \embedKey, 
\cMaskAlt) = \hash(\cInput, \embed[\embedKey](\cInput), \embedKey, 
\cMask)$. Thus, an adversary that produces $\cInput \neq \cInputAlt$ 
such that $\hash(\cInputAlt, \embed[\embedKey](\cInputAlt), \embedKey, 
\cMaskAlt) = \hash(\cInput, \embed[\embedKey](\cInput), \embedKey, 
\cMask)$, also finds a collision in the random oracle 
$\hash$, which happens with negligible probability in $\bigO{\numberOfPointsSim \cdot \log |\finiteField|}$.  

\medskip\noindent\underline{Protection against threat model \ref{model:server-hbc}:}
The only information \server obtains from the protocol is whether
$\blocked{\blockedElmts}{\policyLan}{\threshold}(\embedESP(\cInput))$
and the value $\prfOutput$. 
As mentioned in \figref{fig:frameworkIF}, when 
$\blocked{\blockedElmts}{\policyLan}{\threshold}(\embedESP(\cInput))$,
there would be inevitable leakage to \server. So we ensure that the 
\server should learn strictly no information about $\cInput$ if 
$\neg\blocked{\blockedElmts}{\policyLan}{\threshold}(\embedESP(\cInput))$.
Since $\cMask$ is selected randomly and $\hash$ is a non-programmable 
random oracle, $\hash(\cInput, \embed[\embedKey](\cInput), \embedKey, \cMask)$ 
is uniformly distributed $\metricSpace$. 
Thus, $\prfOutput$ hides $\cInput$ from $\server$. 
\end{proofsketch}

We provide the full framework proof in \appref{app:proof_framework}.

\section{Protocol Description}
\label{sec:app}

In this section, we will describe how to efficiently realize the 
ideal functionalities leveraged in the \primitiveNameShort 
framework. \figref{fig:protocol_overview} provides a high-level 
visualization of the \explicitCheckPhase and 
\figref{fig:protocol_overview_ic} visualizes the \implicitCheckPhase.

\subsection{Protocol for Embed-and-Map}
\label{sec:app:embed-and-map}
Recall that \embedAndMap computes $\prpOutputEmbedding 
\assign \PRP[\PRPKey[1]](\embed[\embedKey](\cInput))$ and
$\prpOutputHash \assign \PRP[\PRPKey[2]](\authToken)$ where
$\authToken \assign \sMask 
\hadamardProd \hash(\cInput, \embed[\embedKey](\cInput), 
\embedKey, \cMask)$. Here, $\embed$ is a function mapping $\cInput$'s 
input file to a vector in $\metricSpace$, that generates close outputs 
for similar inputs. Ideally, we would like to use a fuzzy hashing 
scheme, commonly used for malware similarity 
tests~\cite{Kornblum06:ssdeep, Roussev10:sdhash, Oliver13:TLSH}. 
Unfortunately, none of these schemes produce outputs in 
$\metricSpace$. Instead, we will define $\embed$ as a composition 
of two functions as described below.

\subsubsection{Malware Hashing and Embedding Function}
\label{sec:app:embed-and-map:hash}
We begin by selecting hashing schemes that embed files into the 
Hamming space $\TLSH: \embedKeySpace \times \{0,1\}^{\ast} 
\rightarrow \{0,1\}^{\degreeOfKey{1}}$. We provide more details
on how the hashing scheme is selected in \secref{sec:eval}.
Then, we will define an injective map, $\vecMap$ from bit-vectors 
to $\metricSpace$. Finally, we have $\embed[\embedKey](\cInput) = 
\vecMap(\TLSH[\embedKey](\cInput))$.  

\begin{defn}[Hamming to \metricSpace map]
\label{defn:Hmap}
The mapping function $\vecMap:
\{0,1\}^{\degreeOfKey{1}} \rightarrow \finiteField^{\numberOfPointsSim}$ is parameterized by the set of arbitrary points  $\setOfEvalPts \assign \{\evalPt[1], \dots, \evalPt[\numberOfPointsSim]\} \subset \finiteField$ for $\numberOfPointsSim \ge \numberOfPoints$, and works as follows: 
\begin{enumerate}[nosep, leftmargin=1.6em,labelwidth=*,align=left]
	\item \label{step:hammingPSI:map} {\bf Map to set:} Given
	$\cInputVec \in \{0,1\}^{\degreeOfKey{1}}$, use an injective function $\mapFn:
	\{0,1\} \times \nats[\degreeOfKey{1}] \rightarrow \finiteField$ to create a
	set $\genericSetVar[\cInputVec] = \{\mapFn(\cInputVec[\rootIdx], \rootIdx)\}_{\rootIdx \in 
	  \nats[\degreeOfKey{1}]}$, where
        $\nats[\degreeOfKey{1}] = \{1, \ldots, \degreeOfKey{1}\}$ and
	$\cInputVec[\rootIdx]$ is the \rootIdx-th component of $\cInputVec$. 
	
	\item \label{step:hammingPSI:poly_create} {\bf Create a polynomial:}
	Compute a polynomial $\polyFromEmbedding[\cInputVec] \assign
	\prod\limits_{\genericSetElmt
		\in \genericSetVar[\cInputVec]} (\genericFieldElmt - \genericSetElmt)$, a polynomial with roots in $\genericSetVar[\cInputVec]$. 
	
	\item \label{step:hammingPSI:eval} {\bf Evaluate the polynomial:}
	Evaluate $\polyFromEmbedding[\cInputVec]$ at the points in \setOfEvalPts and output 
	a vector $\genericVec \assign \langle
	\polyFromEmbedding[\cInputVec][\evalPt[1]], \dots,
	\polyFromEmbedding[\cInputVec][\evalPt[\numberOfPointsSim]]\rangle$.

\end{enumerate}

\end{defn}

The function 
\vecMap is injective for any value $\numberOfPointsSim \ge \degreeOfKey{1} + 1$ because the 
polynomial $\polyFromEmbedding[\cInputVec]$ has $\degreeOfKey{1}$ roots, and such a 
polynomial can be uniquely defined by $\degreeOfKey{1} + 1$ points.
Also note that given two vectors $\cInputVec \in \{0,1\}^{\degreeOfKey{1}}$ 
and $\cInputVecAlt \in \{0,1\}^{\degreeOfKey{1}}$,
\begin{align}
	\hammingDist{\cInputVec}{\cInputVecAlt} \le \distInMaskedSets & \iff
	\setSize{\genericSetVar[\cInputVec] \cap \genericSetVar[\cInputVecAlt]} \ge
	(\degreeOfKey{1} - \distInMaskedSets) \label{eq:hamming_distance_diff}
\end{align}
\noindent
This follows directly from the definition of Hamming distance. 
It is immediately obvious from \eqref{eq:hamming_distance_diff}
that if $\cInputVec = \TLSH[\embedKey](\cInput)$ and $\cInputVecAlt = \TLSH[\embedKey](\cInputAlt)$, then    
$\hammingDist{\TLSH[\embedKey](\cInput)}{\TLSH[\embedKey](\cInputAlt)}$ can be calculated by computing  
$\setSize{\genericSetVar[\cInputVec] \cap \genericSetVar[\cInputVecAlt]}$. As we will see later, this is exactly 
how the distance check is performed in \testAndCommit.

\subsubsection{Embedding with Reusable Garbled Circuits}
\label{sec:app:embed-and-map:reusable}
Given the description of the embedding function $\embed$, 
a protocol realizing $\embedAndMap$ will interactively compute 
$\embed[\embedKey](\cInput) = \vecMap(\TLSH[\embedKey]
(\cInput))$. While doing so, it hides \server's inputs---the embedding 
key $\embedKey$, the keys for the permutation $\PRPKey[1], 
\PRPKey[2]$, and the mask $\sMask$---from the \client[1]. It also 
hides \client[1]'s inputs---the file $\cInput$ and the mask
$\cMask$---from \server. We realize this requirement using a garbled circuit. 
Specifically, upon being provided $\cInput \in \maxFileLen$ by 
$\client[1]$, and $\embedKey \in \embedKeySpace$ from \server, 
the garbled circuit computes $\embed[\embedKey](\cInput)$, and 
outputs $\embed[\embedKey](\cInput)$ to $\client[1]$.  However, a 
naive implementation will be impractical because the size of the 
garbled circuit will scale linearly in the size of the input file $\cInput$. 
Since we do not make any assumptions about the input size, the 
computation and communication will be prohibitive for large files. 

To address the scalability issues of the naive solution, we employ 
reusable garbled circuits. Unlike classical Yao 
garbling~\cite{Yao82:2PC, Yao86:GC}, which requires a fresh circuit 
for each execution, reusable garbled circuits allow the evaluator to 
run the same garbled circuit repeatedly with different inputs. In our 
case, the evaluator can reuse the same circuit to handle recurring 
operations over distinct segments of the input (e.g., iterative scans, 
loop-based processing, or sliding-window evaluations). These 
reusable circuits are then combined with non-reusable segments, 
thereby emulating executing a monolithic garbled circuit in a more 
efficient manner.

Many reusable garbled circuit 
schemes~\cite{Goldwasser13:RGC, Agrawal17:RGC} rely on expensive 
cryptographic primitives (e.g., FHE) that limit their practicality. In 
addition, the security guarantees provided by such schemes are stronger 
than our requirement in \secref{sec:framework:threat-model}. More 
specifically, we observe that our requirements are weaker along two axes: 

\begin{enumerate}
\item We require reusability only between the two parties:
  $\client[1]$ and \server.  In contrast, FHE-based reusable garbled
  circuits allow anyone that possesses the corresponding public key to
  evaluate.

\item The fuzzy hashing algorithm implemented by the garbled circuit 
need not be kept private within the garbled circuit. The security of 
embedding relies only on the input embedding key \embedKey of \server 
(the garbler) being kept secret from $\client[1]$ (the evaluator).
\end{enumerate}


Based on these observations, we employ optimizations grounded 
in the CRGC framework~\cite{CRGC}. Specifically, CRGC obfuscates a Boolean circuit 
$\circuit$ to $\circuitAlt$ and the garbler's input $\garblerInput$ to 
$\garblerInputAlt$, while maintaining the circuit functionality. In other 
words, it generates $\circuitAlt$ and $\garblerInput$ such that 
$\circuit(\garblerInput, \evaluatorInput) = \circuitAlt(\garblerInputAlt, 
\evaluatorInput)$ on any evaluator input $\evaluatorInput$. As a result, 
the evaluator can treat the garbled circuit as a ``black-box reusable 
program'', but cannot infer intermediate states across runs. This 
reusability, however, is highly restricted to the type of functionality
garbled. The security guarantees offered by CRGC is circuit-structure
dependent, and gates that do not fall into the categories for obfuscation 
may introduce leakage and expose correlations about input labels. 

We use the CRGC framework by carefully splitting the $\embed$
functionality into reusable and non-reusable sections according to
the specifications of CRGC. Ideally, operations with size linear to the input 
size will be encoded to a reusable garbled circuit so that $\client[1]$ can 
evaluate the same functionality repeatedly on its input locally. After this, 
the outputs from the reusable parts will be input to a non-reusable 
Yao's garbled circuit, which will be evaluated interactively between 
$\client[1]$ and \server through OT. More specifically, we rely on two main 
security guarantees offered by a CRGC circuit:

\begin{enumerate}
\item A balanced gate (i.e., an XOR or XNOR gate) in the reusable
  section obfuscates the garbler's input bit under the security
  assumptions of CRGC~\cite[Lemma~2]{CRGC}.

\item A gate in the non-reusable section does not leak garbler's input
  bit under the security assumptions of Yao's garbled circuit
  protocol~\cite[Lemma~4]{CRGC}.
\end{enumerate}

For the components of $\embed$ whose size scales linearly with the input, 
we first determine whether they can be compiled into CRGC form by checking 
whether they consist of the required balanced gates, or whether they can be 
augmented with additional balanced gates. In practice, this means that the 
$\server$ may insert extra XOR layers that preserve the original functionality 
while enabling CRGC compatibility. We then combine the resulting reusable 
CRGC segments with the remaining non-reusable parts. Under this transformation, 
$\client[1]$ is able to evaluate $\embed$ without learning the embedding key 
$\embedKey$ held by $\server$.

Since $\server$ receives no output from the reusable segments, the 
security under model~\ref{model:server-hbc} continues to rely on the standard 
guarantees of a conventional garbled circuit. For security model~\ref{model:client-mal},
$\client[1]$ can evaluate $\embed[\embedKey]$ on only a single input 
$\cInput$, as the non-reusable components are executed once per session.
In addition, if $\client[1]$ provides inconsistent inputs to different parts of the 
garbled circuit, $\client[2]$ will not be able to recompute $\prfOutput$ from 
$\cInput$. We argue that such a deviation by a malicious $\client[1]$ is equivalent 
to supplying an alternative input $\cInputAlt \neq \cInput$ to $\client[2]$, 
which is formally addressed in \appref{app:proof_framework}.

In the open-source CRGC 
implementation\footnote{\url{https://github.com/chart21/CRGC}}
an analysis tool enumerates all signal paths originating from the 
garbler’s input wires to provide a lower bound on potential leakage.
The library classifies a generator input as safe if and only if every such 
path contains at least two balanced gates and that the resulting circuit 
achieves the level of indistinguishability necessary for secure, 
reusable garbling. This tool supports our findings and confirms that our 
design prevents exposure of any bits of $\embedKey$. Additional background 
on how CRGC conceals the garbler’s input (i.e., $\server$’s $\embedKey$) 
is provided in \appref{app:CRGC}.

\subsubsection{$\hash$ as a non-programmable random oracle}
\label{sec:app:embed-and-map:npro}
In \appref{app:proof_framework}, we model $\hash$ as a random 
oracle for the purposes of analysis. However, when evaluating 
$\hash$ inside a garbled circuit, $\hash$ behaves as a 
non-programmable random oracle~\cite{Fischlin10:NPRO}, which 
can be represented by a circuit of size polynomial in the security parameter
and the input length. Our 
construction follows the standard methodology of treating hash 
functions as random oracles~\cite{Bellare1993:Random-oracles}, 
but we only rely on properties that hold in the non-programmable 
setting. In particular:
\begin{itemize}
\item The adversary may freely evaluate $\hash(\cdot)$ offline.

\item The simulator is not required to re-program or adapt the oracle output.
\end{itemize}
Therefore, $\hash$ can be safely instantiated with a fixed hash function 
embedded directly in the circuit, consistent with the non-programmable 
RO framework~\cite{Barnum23:NPRO}.

\subsubsection{Unpredictable permutation for \metricSpace}
\label{sec:app:embed-and-map:perm}
Recall that the output of \embedAndMap is an unpredictable permutation
applied to $\embed[\embedKey](\cInput)$ and $\authToken$, both vectors
in $\metricSpace$. We construct such an unpredictable permutation from
a random linear combination.\footnote{As it will be evident later,
using a linear combination means we can efficiently invert the
permutation using \npaFunc.} Specifically,

\begin{theorem}
	For a random selection of a key $\langle \PRPKeyVec[1], \PRPKeyVec[2]
	\rangle \in (\finiteField\setminus\{0\})^{\numberOfPointsSim} \times
	(\finiteField)^{\numberOfPointsSim}$, the function $\PRP[\langle
	\PRPKeyVec[1], \PRPKeyVec[2]\rangle](\genericVec) = \PRPKeyVec[1] \hadamardProd
	\genericVec  + \PRPKeyVec[2]$ is a one-time pairwise-unpredictable permutation
	over \metricSpaceESP.
\end{theorem}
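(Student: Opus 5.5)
The plan is to argue componentwise. The key $\langle \PRPKeyVec[1], \PRPKeyVec[2]\rangle$ is sampled with independent coordinates, and $\PRP$ acts on each coordinate separately as $\genericVecComp[i] \mapsto \kappa_{1,i}\genericVecComp[i] + \kappa_{2,i}$. This is exactly the ``$\numberOfPointsSim$ independent instances'' construction remarked after \dfnref{defn:unpredictable_metric}. So it suffices to establish the needed facts for the scalar affine map $g_{a,b}(v) = av+b$ with $(a,b)$ uniform in $(\finiteField\setminus\{0\})\times\finiteField$, and then multiply probabilities over coordinates.

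First, I show $\PRP[\langle \PRPKeyVec[1], \PRPKeyVec[2]\rangle]$ is a permutation of \metricSpaceESP. Since every $\kappa_{1,i}\neq 0$, the map has the explicit inverse $\genericVec \mapsto \PRPKeyVec[1]^{-1}\hadamardProd(\genericVec - \PRPKeyVec[2])$, with the inverse taken coordinatewise. The same formula is what \npaFunc will later compute.

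Second, I prove the marginal condition. For fixed $\genericVec$ and target $\vecNotation{\genericFieldElmt}$, coordinate $i$ satisfies $\kappa_{1,i}\genericVecComp[i]+\kappa_{2,i}=\genericFieldElmt_i$. For each value of $\kappa_{1,i}$ there is exactly one $\kappa_{2,i}$ solving this, and $\kappa_{2,i}$ is uniform and independent of $\kappa_{1,i}$. Hence the probability is $1/\setSize{\finiteField}$ per coordinate, and by independence it is $1/\setSize{\finiteField}^{\numberOfPointsSim}$ overall.

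Third, I prove the conditional condition. Condition on $\PRP(\genericVecAlt)=\vecNotation{\genericFieldElmtAlt}$, which fixes $\kappa_{2,i}=\genericFieldElmtAlt_i-\kappa_{1,i}\genericVecAltComp[i]$ while leaving each $\kappa_{1,i}$ uniform on $\finiteField\setminus\{0\}$, independently across $i$. Then $\PRP(\genericVec)_i=\kappa_{1,i}(\genericVecComp[i]-\genericVecAltComp[i])+\genericFieldElmtAlt_i$, and I split the coordinates into two cases.
\begin{itemize}
\item On coordinates where $\genericVecComp[i]=\genericVecAltComp[i]$, the output is determined.
\item On the $\diffInComps$ coordinates where they differ, $\kappa_{1,i}\mapsto\kappa_{1,i}(\genericVecComp[i]-\genericVecAltComp[i])$ is a bijection of $\finiteField\setminus\{0\}$. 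So the output hits any given $\genericFieldElmt_i\ne\genericFieldElmtAlt_i$ with probability $1/(\setSize{\finiteField}-1)$, and hits $\genericFieldElmtAlt_i$ with probability $0$.
\end{itemize}
Multiplying over coordinates gives conditional probability at most $(1/(\setSize{\finiteField}-1))^{\diffInComps}$. This equals $(\setSize{\finiteField}/\setSize{\PRPKeySpace})^{\diffInComps}$ with $\setSize{\PRPKeySpace}=\setSize{\finiteField}(\setSize{\finiteField}-1)$ the per-coordinate key space. The bound is exact for every target consistent with the conditioning, and $0$ otherwise.

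The main obstacle is bookkeeping against the definition rather than any mathematics. \dfnref{defn:unpredictable_metric} states an equality, whereas here the conditional probability is exactly $(\setSize{\finiteField}/\setSize{\PRPKeySpace})^{\diffInComps}$ only for consistent targets, those with $\genericFieldElmt_i\neq\genericFieldElmtAlt_i$ on the differing coordinates and $\genericFieldElmt_i=\genericFieldElmtAlt_i$ elsewhere, and is $0$ for the rest. I will state this explicitly and read the definition's equality as an upper bound. That bound, an unpredictability bound, is what the security argument in the framework's proof sketch actually uses. I will also identify \PRPKeySpace with the per-coordinate key space $(\finiteField\setminus\{0\})\times\finiteField$ so that the exponents match.
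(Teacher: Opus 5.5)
Your proposal is correct and follows essentially the same route as the paper: bijectivity via the coordinatewise inverse, then per-coordinate counting of key pairs multiplied across coordinates using independence of the key components, giving $1/(|\mathbb{F}|-1)^{d}$. The one difference is in how the conditional probability is computed. The paper computes the joint probability of both outputs and divides by the marginal, asserting that a satisfying key pair always exists on each differing coordinate. Your direct conditioning on $\kappa_{1,i}$ instead correctly shows that targets with $x_i = y_i$ on a differing coordinate (which would force $\kappa_{1,i}=0$), or $x_i \neq y_i$ on an agreeing one, have probability $0$. So your reading of the definition's equality as an upper bound is the accurate one.
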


The proof is in \appref{app:proof_prp}.

\myparagraph{Protocol}
The protocol \protocolEmbed realizing \embedAndMap is described in \appref{app:em}. The protocol computes using a reusable garbled circuit $\prpOutputEmbedding \assign
\PRP[\langle\PRPKeyEmbeddingVec[1], \PRPKeyEmbeddingVec[2]\rangle](\vecMap(\TLSH[\embedKey](\cInput))) = \PRPKeyEmbeddingVec[1] \hadamardProd \vecMap(\TLSH[\embedKey](\cInput)) + \PRPKeyEmbeddingVec[2]$, and $\prpOutputHash \assign  \PRP[\langle\PRPKeyHashVec[1], \PRPKeyHashVec[2]\rangle](\authToken) = \PRPKeyHashVec[1] \hadamardProd \authToken + \PRPKeyHashVec[2]$, where $\authToken \assign \sMask \hadamardProd \hash(\cInput,
\embed[\embedKey](\cInput), \embedKey, \cMask)$ and is also computed by the circuit. 

\subsection{Protocol for Test-and-Commit}
\label{sec:app:test-and-commit}
The functionality \testAndCommit privately compares $\client[1]$'s input 
$\embed[\embedKey](\cInput)$ to the embedded items in the blocklist, and 
produces the authentication token. Since we are using an embedding into 
the Hamming space, \testAndCommit must check that whether the Hamming 
distance between \client[1]'s input and any of the bit-vectors in the blocklist is 
greater than the threshold. To perform this check privately, we can apply an 
\emph{assumption-free}\footnote{An assumption-free fuzzy PSI scheme does 
not require any specific conditions on the data distribution. Since the output 
of malware hashes are not constrained to any specific distribution, we must 
leverage assumption-free schemes only.} fuzzy PSI scheme that compares 
vectors in the Hamming space, 
e.g., \cite{Chakraborti2023:DAPSI, Blass25:fuzzyPSI}. However, a naive 
application does not suffice because $\embed[\embedKey](\cInput)$ is not 
a bit-vector, but rather a vector in a field $\metricSpace$ generated using 
the map $\vecMap$. (We ignore the unpredictable permutation $\PRP$, 
because as we will see the permutation is inverted before performing the 
distance check.) 

We overcome this problem by leveraging the mapping function 
$\vecMap$ (\dfnref{defn:Hmap}), 
and defining a metric such that computing the distance between two vectors 
in $\metricSpace$ provides the Hamming distance between their preimages under $\vecMap$. Then, we modify an existing 
assumption-free fuzzy PSI protocol to compute Hamming distance using this 
metric. 

We define the distance metric {\it symmetric uncommon 
roots} (\symmUR) as follows. For polynomial 
$\genericPoly \in \polyField$, let $\uniqueRoots{\genericPoly}$ be 
its set of (unique) roots. Let $\genericSetVar[1] 
\symmDiff \genericSetVar[2]$ denote the symmetric difference 
between sets $\genericSetVar[1]$ and $\genericSetVar[2]$, i.e., 
$(\genericSetVar[1] \setminus \genericSetVar[2]) \cup
(\genericSetVar[2] \setminus \genericSetVar[1])$.

\begin{defn}[Symmetric Uncommon Roots \symmURDefn]
Let $\setOfEvalPts \assign \{\evalPt[1], \dots, \evalPt[\numberOfPoints] \}$ be 
a set of $\numberOfPoints$ unique, fixed elements in $\finiteField$.  
For vector $\genericVec = \langle \genericVecComp[1], \dots, 
\genericVecComp[\numberOfPoints] \rangle \in 
\finiteField^{\numberOfPoints}$, let $\polyFromVec{\genericVec} 
\in \polyField$ be the unique polynomial of degree $\le 2 
\degreeOfKey{1}$ that interpolates the set of points 
$\{ (\evalPt[\ptIdx], \genericVecComp[\ptIdx]) \}_{\ptIdx \in \nats[\numberOfPoints]}$.
Then, for $\genericVec[1], 
\genericVec[2] \in \finiteField^{\numberOfPoints}$,
\begin{align*}
    \symmURDist{\genericVec[1]}{\genericVec[2]}
    & ~\defeq~ \setSize{\uniqueRoots{\polyFromVec{\genericVec[1]}} 
    \symmDiff \uniqueRoots{\polyFromVec{\genericVec[2]}}} \\
    & ~=~ \setSize{\uniqueRoots{\polyFromVec{\genericVec[1]}}} + 
    \setSize{\uniqueRoots{\polyFromVec{\genericVec[2]}}} -
    2\cdot\setSize{\uniqueRoots{\genericPoly}}
\end{align*}    
where $\genericPoly = \gcd(\polyFromVec{\genericVec[1]}, 
\polyFromVec{\genericVec[2]})$ is the greatest common 
divisor of \polyFromVec{\genericVec[1]} and 
\polyFromVec{\genericVec[2]}.
\end{defn}

Since the cardinality of symmetric difference between two sets is a
metric, \symmUR is also a metric.

\myparagraph{Computing Hamming distance from \symmURDefn}
Given two bit vectors $\cInputVec \in \{0,1\}^{\degreeOfKey{1}}$ 
and $\cInputVecAlt \in \{0,1\}^{\degreeOfKey{1}}$,
\begin{align}
	\hammingDist{\cInputVec}{\cInputVecAlt} \le \distInMaskedSets & \iff
	\setSize{\genericSetVar[\cInputVec] \cap \genericSetVar[\cInputVecAlt]} \ge
	(\degreeOfKey{1} - \distInMaskedSets) \nonumber \\
	& \iff \frac{1}{2} \left(2 \numberOfMasks -
	\setSize{\genericSetVar[\cInputVec] \symmDiff
		\genericSetVar[\cInputVecAlt]}\right) \ge 	(\degreeOfKey{1} - \distInMaskedSets)  \label{eq:diff}\\
	& \iff \symmURDist{\vecMap(\cInputVec)}{\vecMap(\cInputVecAlt)} \le \threshold \label{eq:diff_sur}
\end{align}
where $\threshold =  2\distInMaskedSets$. The equivalences follow by definition of Hamming distance, symmetric difference \eqnref{eq:diff}, and \symmUR \eqnref{eq:diff_sur}. In our context, the above relation means 
\begin{eqnarray}
\hammingDist{\TLSH[\embedKey](\cInput)}{\TLSH[\embedKey](\cInputAlt)} \le \distInMaskedSets \nonumber \\
\iff \symmURDist{\embed[\embedKey](\cInput)}{\embed[\embedKey](\cInputAlt)} \le 2\distInMaskedSets \label{eq:LSH_to_SUR}
\end{eqnarray}

\myparagraph{\symmURDefn computation from \npaFunc}
One advantage of using $\symmURDefn$ is that we can
compute the distance between two vectors contributed by mutually
untrusting parties {\it without explicitly revealing the vectors to
  each other}. The process is as follows. Let $\genericVec[1] = \vecMap(\cInputVec)$
and $\genericVec[2] = \vecMap(\cInputVecAlt)$ 
be two vectors contributed by $\client$ and \server,
respectively. They sample random polynomials $\randomPoly[1] \getsr
\polyField$ and $\randomPoly[2] \getsr \polyField$ respectively, of
degree $\degreeOfKey{1}$. 
Then, they locally compute $\randomVec[1]
\gets \langle \randomPolyEval[\evalPt][1] \rangle_{\evalPt \in
  \setOfEvalPts}$ and $\randomVec[2] \gets \langle
\randomPolyEval[\evalPt][2] \rangle_{\evalPt \in \setOfEvalPts}$.
Finally, using calls to \npaFunc,
they compute $\randomVec[2] \hadamardProd \genericVec[1]
+ \randomVec[1] \hadamardProd \genericVec[2]$. Then,
\begin{align}
  \label{eq:sur_random}
& \symmURDist{\genericVec[1]}{(\randomVec[2] \hadamardProd \genericVec[1] +
\randomVec[1] \hadamardProd \genericVec[2])} \nonumber\\
& = \symmURDist{\genericVec[2]}{(\randomVec[2] \hadamardProd \genericVec[1] +
    \randomVec[1] \hadamardProd \genericVec[2])} \nonumber\\
& = \symmURDist{\genericVec[1]}{\genericVec[2]} & \mbox{(w.h.p.)} 
\end{align}
\eqnref{eq:sur_random} is due to the fact that $\randomPoly[1]$ and
$\randomPoly[2]$ are random polynomials, and so with overwhelming
probability, they do not share common roots with 
$\polyFromVec{\genericVec[1]}$ and $\polyFromVec{\genericVec[2]}$ that interpolate
$\genericVec[1]$ and $\genericVec[2]$, respectively. Thus,
$\gcd(\polyFromVec{\genericVec[1]},\combPoly) =
\gcd(\polyFromVec{\genericVec[2]},\combPoly)
=\gcd(\polyFromVec{\genericVec[1]},\polyFromVec{\genericVec[2]})$ with high
probability, where $\combPoly \gets
\randomPoly[2] \cdot \polyFromVec{\genericVec[1]} + \polyFromVec{\genericVec[1]}
\cdot \randomPoly[1]$. We prove this in \appref{app:background}.

Crucially, the value of $\numberOfPointsSim \in (\degreeOfKey{1}, 2\degreeOfKey{1} + 1)$ can be set as a function of $\degreeOfKey{1}$ and \threshold such that if
$\symmURDist{\genericVec}{\genericVecAlt} \le \threshold$, then $\server$
learns \genericVec  (and $\cInputVec$), otherwise {\it no} information is
revealed about $\cInputVec$ to the server~\cite{Chakraborti2023:DAPSI}. Intuitively, the idea is that with
less than $\numberOfPointsSim \le 2\degreeOfKey{1} + 1$ points, it is not
possible
to uniquely define and interpolate the polynomial $\randomPoly[2] \cdot
\polyFromEmbedding[\cInputVec] + \randomPoly[1] \cdot
\polyFromEmbedding[\cInputVecAlt]$
and thus $\randomVec[1], \randomVec[2]$,  unless
$\polyFromEmbedding[\cInputVec]$ and $\polyFromEmbedding[\cInputVecAlt]$ share
$2(\degreeOfKey{1} - \threshold)$ roots.

\myparagraph{Protocol}
The protocol \protocolCommitESP realizing \testAndCommit is presented in \appref{app:tc}. Intuitively, the protocol is an adaptation of the fuzzy PSI protocol of \citet{Chakraborti2023:DAPSI}, and checks the Hamming distance between \client[1]'s input and the embeddings in the blocklist by computing the \symmURDefn distance. Specifically, 
recall that one of the outputs of \protocolEmbed is $\prpOutputEmbedding = \PRP[\langle\PRPKeyVec[1], \PRPKeyVec[2]\rangle](\embed[\embedKey](\cInput)) = \PRPKeyVec[1] \hadamardProd \embed[\embedKey](\cInput)  + \PRPKeyVec[2]$. For each element $\lanElmtU$ in $\server$'s blocklist, the protocol computes 
$\randomVec[1] \hadamardProd \embed[\embedKey](\cInput) + \randomVec[2] \hadamardProd \embed[\embedKey](\lanElmtU)$ using a series of calls to $\npaFunc$. Then, using \eqnref{eq:sur_random} the protocol computes $\symmURDist{\embed[\embedKey](\cInput)}{\embed[\embedKey](\lanElmtU)}$, which allows it to check $\hammingDist{\TLSH[\embedKey](\cInput)}{\TLSH[\embedKey](\lanElmtU)} > \distInMaskedSets$ due to \eqnref{eq:LSH_to_SUR}. We set $\numberOfPointsSim \in (\degreeOfKey{1}, 2\degreeOfKey{1} + 1)$ using the parameters suggested in \cite{Chakraborti2023:DAPSI}
such that if $\hammingDist{\TLSH[\embedKey](\cInput)}{\TLSH[\embedKey](\lanElmtU)} > \distInMaskedSets$, then \server learns no information about the Hamming distance.

If the distance check is satisfied, the protocol proceeds to computing the authentication token $\prfOutput \assign \PRP[\PRPKey[1]]^{-1}(\prpOutputEmbedding) \linearComb \PRP[\PRPKey[2]]^{-1}(\prpOutputHash)$. Observe that $\prfOutput$ is a linear combination of two vectors in $\metricSpace$, and recall from \secref{sec:framework} that $\PRP$ (and so $\PRP^{-1})$ is a random linear combination of its input. Therefore, $\authToken$ is interactively computed by \client[1] and \server from $\prpOutputEmbedding$ and $\prpOutputHash$ using calls to $\npaFunc$, ensuring that \server does not learn $\PRP[\PRPKey[1]]^{-1}(\prpOutputEmbedding)$ or $\PRP[\PRPKey[2]]^{-1}(\prpOutputHash)$. 

\begin{theorem}
There is a secure protocol realizing \testAndCommit with $\bigO{\blockListSize \metricSpaceDimGen}$ communication and compute costs, assuming that there are secure protocols realizing \npaFunc and \oleFunc.  
\end{theorem}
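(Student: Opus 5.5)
We prove the statement by giving an explicit protocol \protocolCommitESP in the $(\npaFunc,\oleFunc)$-hybrid model and a simulator for each threat model. Cost follows directly from the construction: every step uses a constant number of calls to \npaFunc or \oleFunc on vectors of length \metricSpaceDimGen, repeated once per blocklist element.

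\textbf{Construction.} \server samples fresh random vectors $\randomVec[\setIdx]$ for each $\lanElmt[\setIdx]\in\policyLan$. Write $\PRP[\langle\PRPKeyVec[1],\PRPKeyVec[2]\rangle]^{-1}(\vecNotation{p}) = \PRPKeyVec[1]^{-1}\hadamardProd(\vecNotation{p}-\PRPKeyVec[2])$, which is an affine function of \client[1]'s input with coefficients held by \server. Hence for each $\lanElmt[\setIdx]$ a single \npaFunc call computes $\randomVec[2]\hadamardProd\embed[\embedKey](\cInput)+\randomVec[1]\hadamardProd\lanElmt[\setIdx]$.
\begin{itemize}
\item \client[1] inputs $\prpOutputEmbedding$ and a random evaluation vector $\randomVec[1]$ of a degree-$\degreeOfKey{1}$ random polynomial.
\item \server inputs the coefficients $\PRPKeyVec[1]^{-1}\hadamardProd\randomVec[2]$ and the constant offset $-\PRPKeyVec[1]^{-1}\hadamardProd\PRPKeyVec[2]\hadamardProd\randomVec[2]$ (folded into the affine term), together with $\lanElmt[\setIdx]$.
\end{itemize}
\server then runs the \symmUR test: it interpolates from the $\metricSpaceDimGen$ points and checks whether it can recover a polynomial sharing enough roots with $\lanElmt[\setIdx]$, as in \citet{Chakraborti2023:DAPSI}. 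By \eqnref{eq:sur_random} and \eqnref{eq:LSH_to_SUR}, this decides $\hammingDist{\cdot}{\cdot}\le\distInMaskedSets$ with overwhelming probability.

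If no element is blocked, the token $\prfOutput$ is an affine combination of $\prpOutputEmbedding$ and $\prpOutputHash$ with \server-known coefficients, namely $\PRPKeyVec[1]^{-1}$ and $\PRPKeyHashVec[1]^{-1}$. \client[1] supplies these two vectors componentwise, and \server evaluates via \oleFunc (or \npaFunc). To hide the individual inverses, \server masks each inversion with its own random additive shares, which cancel in the sum. \server thus receives only $\prfOutput$, and \client[1] receives $1$ and \nonce. This gives $\bigO{\blockListSize\metricSpaceDimGen}$ communication and computation.

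\textbf{Security against a malicious \client[1] (\ref{model:client-mal}).} The simulator extracts $\prpOutputEmbeddingAlt,\prpOutputHashAlt$ from the \npaFunc/\oleFunc calls, forwards them to \testAndCommit, and returns the bit. The main obstacle is input consistency: \client[1] might feed different $\prpOutputEmbedding$ values to the $\blockListSize$ \npaFunc calls, or a different value to the token-computing OLE. We handle this by having \server add a fresh random linear check. \server also computes, through the same calls, an independent masked copy of $\prpOutputEmbedding$ per call, and compares the difference of any two; equivalently, the token uses a random combination of all the per-call inputs. Any inconsistency then makes $\prfOutput$ uniformly random from \client[1]'s view. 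By \dfnref{defn:unpredictable_metric} it cannot be matched later, except with probability $|\finiteField|/|\PRPKeySpace|$, so this case is simulatable as using the first extracted value. The simulator's view is otherwise identical, because \client[1] receives only the output bit and \nonce.

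\textbf{Security against an honest-but-curious \server (\ref{model:server-hbc}).} The simulator, given \server's output, produces \npaFunc outputs that are evaluations of a random degree-$\le 2\degreeOfKey{1}$ polynomial restricted to $\metricSpaceDimGen<2\degreeOfKey{1}+1$ points. This is uniform by the parameter choice of \citet{Chakraborti2023:DAPSI} whenever the distance exceeds \threshold; otherwise it is the leaked close set, which \testAndCommit reveals anyway. The OLE outputs beyond $\prfOutput$ are uniformly masked. A hybrid argument over the $\blockListSize$ calls completes the proof.
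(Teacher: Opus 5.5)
Your protocol computes the token in a separate step where \client[1] supplies \prpOutputEmbedding and \prpOutputHash to \oleFunc again. Nothing ties the embedding that enters \prfOutput to the one tested in the $\blockListSize$ calls to \npaFunc. So the protocol does not realize \testAndCommit, whose token is by definition built from the same $\prpOutputEmbeddingAlt$ that is tested.

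Concretely, a malicious \client[1] holding a blocklisted $\cInput$ can do the following:
\begin{enumerate}
\item Feed random vectors to every \npaFunc call. Their preimages under $\PRP[\PRPKey[1]]$ are uniformly random, so the test passes with overwhelming probability.
\item Feed the genuine outputs of \embedAndMap to the commit step.
\end{enumerate}
\server then stores exactly $\embed[\embedKey](\cInput) + \sMask \hadamardProd \hash(\cInput, \embed[\embedKey](\cInput), \embedKey, \cMask)$, and an honest \client[2] later passes the \implicitCheckPhase on a blocklisted input.

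You correctly name consistency as the main obstacle, but the remedy is only asserted, and the mechanism you sketch is not available. Each \npaFunc call gives \server one combined vector, in which \client[1]'s input is entangled with the blinding term $\clientsRandomPolyEvalVec[\setIdx]\hadamardProd\lanElmt[\setIdx]$ (a client secret times a server secret). So \server cannot extract ``an independent masked copy per call through the same calls'', cannot compare such copies, and cannot form a random combination of the per-call inputs without a further sub-protocol. That sub-protocol, and its analysis against a deviating \client[1], is the real content of the theorem.

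The paper closes this hole by never letting \client[1] re-enter \prpOutputEmbedding; the token is assembled from the test-phase outputs themselves. Their sum is $\sum_{\setIdx}\commitVec[\setIdx] = \randomPolyCombEvalVec\hadamardProd\embed[\embedKey](\cInput) + \sum_{\setIdx}\clientsRandomPolyEvalVec[\setIdx]\hadamardProd\lanElmt[\setIdx]$, where $\randomPolyCombEvalVec$ is derived from \server's random polynomials; this is your ``random combination''. The commit phase then works as follows (\stepsref{esp:tc:remove_client_random}{esp:tc:output_1}):
\begin{itemize}
\item One batch of \oleFunc calls, where \client[1] inputs $-\clientsRandomPolyEvalVec[\setIdx]$ and \server masks with $\tmpVarAltVec[\setIdx]$, strips the client's blinding term.
\item A second batch, masked with $\tmpVarVec[\setIdx]$, adds $\PRP[\PRPKey[2]]^{-1}(\prpOutputHash)$ scaled by $\randomPolyCombEvalVec$.
\item \client[1] returns only an aggregate, which \server unblinds.
\end{itemize}
The malicious-client simulator compares the $\clientsRandomPolyEvalVec[\setIdx]$ submitted to these \oleFunc calls with those extracted from \npaFunc, and checks the aggregate. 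On any deviation it gives \testAndCommit randomized components of \prpOutputHash.

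This also shows your simulator is wrong in the inconsistent case. When the deviation randomizes the real token, the ideal token computed from ``the first extracted value'' is not random. Since \server's output is part of the joint distribution, the two worlds are distinguishable. \dfnref{defn:unpredictable_metric} bounds a later forgery, not the simulation.

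Your server-side argument (statistical hiding from \citet{Chakraborti2023:DAPSI} for far elements, the leaked close set otherwise) and your cost count match the paper.
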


\begin{figure}
  \begin{subfigure}{3in}
    \centering
    \scalebox{0.8}{
    \begin{tikzpicture}[
    >=latex,
    func/.style={draw, rectangle, thick, dashed},
    arr/.style={->, thick}
]

\coordinate (Rmid) at (1.35,0);
\coordinate (Lup)  at (-1.35,0.4);
\coordinate (Ldown)at ( -1.35,0);

\node[func, minimum width=2cm, minimum height=2cm, align=center] (F1) at (0,0) {Reusable \\ Garbled Circuit};
\node at ($(F1.north)+(0,0.3)$) {$\protocolEmbed$};

\node at (-2.5,1.5) {$\client[1]$};
\node at ( 2.5,1.5) {$\server$};

\draw[arr] (-4,0.4) -- node[above, font=\footnotesize] {$\cInput, \cMask$} (Lup);
\draw[arr] (Ldown) -- node[below, font=\footnotesize, yshift=1pt] {
$
\begin{aligned}
\prpOutputEmbedding & = \PRP[\PRPKey[1]] (\embed[\embedKey](\cInput)) \\
& = \PRPKeyEmbeddingVec[1] \hadamardProd \embed[\embedKey](\cInput) + \PRPKeyEmbeddingVec[2],  \\
\prpOutputHash & = \PRP[\PRPKey[2]](\sMask \hadamardProd \hashStar) \\
& = \PRPKeyHashVec[1] \hadamardProd (\sMask \hadamardProd \hashStar) + \PRPKeyHashVec[2]
\end{aligned}
$
} (-4.5,0);
\draw[arr] (4,0) -- node[above, font=\footnotesize, xshift=4pt] {$\embedKey, \PRPKey[1] = \langle \PRPKeyEmbeddingVec[1],
		\PRPKeyEmbeddingVec[2] \rangle$,}
node[below, font=\footnotesize, xshift=4pt] {$\PRPKey[2] = \langle \PRPKeyHashVec[1],
		\PRPKeyHashVec[2] \rangle, \sMask$} (Rmid);

\node[func, minimum width=2.7cm, minimum height=7.5cm] (F2) at ($(F1.south)+(0,-5)$) {};
\node at ($(F2.north)+(0,0.3)$) {$\protocolCommit$};

\node[func, minimum width=2cm, minimum height=1.2cm] (F3) at ($(F1.south)+(0,-2.2)$) {$\npaFunc$};

\draw[arr] (-4,-3.2) -- node[above, font=\footnotesize] {$\prpOutputEmbedding, \serversRandomPolyEval[1]$} (-1, -3.2);
\draw[arr] (4, -2.9) -- node[above, font=\footnotesize] {$\lanElmt[1], \PRPKey[1], \serversRandomPolyEvalAlt[1]$} (1, -2.9);
\draw[arr] (1,-3.5) -- node[below, font=\footnotesize, xshift=3pt] {$\serversRandomPolyEvalAlt[1] \hadamardProd \embed[\embedKey](\cInput) 
		+ \serversRandomPolyEval[1] \hadamardProd \lanElmt[1]$} (4,-3.5);

\node[func, minimum width=2cm, minimum height=1.2cm] (F4) at ($(F1.south)+(0,-4.5)$) {$\npaFunc$};
\node at ($(F4.north)+(0,0.3)$) {$\ldots$};

\draw[arr] (-4,-5.8) -- node[above, font=\footnotesize] {$\prpOutputEmbedding, \serversRandomPolyEval[\blockListSize]$} (-1, -5.8);
\draw[arr] (4, -5.5) -- node[above, font=\footnotesize] {$\lanElmt[\blockListSize], \PRPKey[1], \serversRandomPolyEvalAlt[\blockListSize]$} (1, -5.5);
\draw[arr] (1,-6.1) -- node[below, font=\footnotesize, xshift=3pt] {$\serversRandomPolyEvalAlt[\blockListSize] \hadamardProd \embed[\embedKey](\cInput) + 
		\serversRandomPolyEval[\blockListSize] \hadamardProd \lanElmt[\blockListSize]$} (4,-6.1);

\node[func, minimum width=2cm, minimum height=1.2cm] (F5) at ($(F1.south)+(0,-7.5)$) {$\npaFunc$};

\node[below=8pt of F4, align=left, font=\footnotesize] {
Test: $\forall \setIdx \in \nats[\blockListSize]$ \\
$\setSize{\embed[\embedKey](\cInput)-\lanElmt[\setIdx]} > 2\distInUniverse$ \\
};

\draw[arr] (-4,-8.3) -- node[above, font=\footnotesize] {$\prpOutputHash, \sum_{\setIdx=1}^{\blockListSize} \serversRandomPolyEval[\setIdx]$} (-1, -8.3);
\draw[arr] (-1,-8.9) -- node[above, font=\footnotesize] {$\nonce$} (-4, -8.9);
\draw[arr] (4.5, -8.2) -- node[above, font=\footnotesize, yshift=-1pt, xshift=3pt] {$\sum_{\setIdx=1}^{\blockListSize} \serversRandomPolyEvalAlt[\setIdx] \hadamardProd \embed[\embedKey](\cInput)
		+ \serversRandomPolyEval[\setIdx] \hadamardProd \lanElmt[\setIdx]$, } 
node[below, font=\footnotesize, yshift=1pt] {$\sum_{\setIdx=1}^{\blockListSize} \serversRandomPolyEvalAlt[\setIdx], \PRPKey[1], \PRPKey[2], \sMask$} (1, -8.2);
\draw[arr] (1,-8.9) -- node[below, font=\footnotesize] {$\prfOutput = \embed[\embedKey](\cInput) + \sMask \hadamardProd \hashStar$} 
node[below, font=\footnotesize, yshift=-20pt, xshift=5pt] {Store $\langle \nonce, \embedKey, \sMask, \prfOutput \rangle$} (4,-8.9);

\draw[arr] (-3.5, -11) -- node[above, font=\footnotesize] {Send $\cMask, \cInput, \nonce$ to $\client[2]$} (3.5,-11);

\end{tikzpicture}
    }
    \caption{Explicit check}
    \label{fig:protocol_overview}
  \end{subfigure}
  \\
  \begin{subfigure}{3in}
    \centering
    \scalebox{0.8}{
    \begin{tikzpicture}[
    >=latex,
    func/.style={draw, rectangle, thick, dashed},
    arr/.style={->, thick}
]

\coordinate (Lmid) at (-1,0);
\coordinate (Rup)  at ( 1,0.3);
\coordinate (Rdown)at ( 1,-0.3);

\node[func, minimum width=2cm, minimum height=1.2cm] (F) at (0,0) {$\oleFunc$};
\node at ($(F.north)+(0,0.3)$) {$\protocolValidate$};

\node at (-2.5,1.5) {$\client[2]$};
\node at ( 2.5,1.5) {$\server$};

\draw[arr] (-4,0) -- node[above, font=\footnotesize] {$\embed[\embedKey](\cInput), \hashStar$} (Lmid);
\draw[arr] ( 4,0.3) -- node[above, font=\footnotesize] {$\sMask, \prfOutput$} (Rup);
\draw[arr] (Rdown) -- node[below, font=\footnotesize] {$\embed[\embedKey](\cInput) + \sMask \hadamardProd \hashStar$} (4,-0.3);

\end{tikzpicture}
    }
    \caption{Implicit check}
    \label{fig:protocol_overview_ic}
  \end{subfigure}
  \caption{High-level description of \primitiveNameShort instantiation ($\hashStar \gets \hash(\cInput,\embed[\embedKey](\cInput), \embedKey, \cMask)$)}
  \label{fig:protocol}
\end{figure}

\subsection{\ImplicitCheckPhaseHeading}
\label{sec:app:implicit-check}
After receiving 
the file $\cInput$, the mask $\cMask$ and a nonce $\nonce$, the client $\client[2]$ 
forwards $\nonce$ to \server. $\nonce$ allows the server to identify the 
authentication token $\prfOutput$ for the file, the embedding key $\embedKey$,
and the mask $\sMask$. $\client[2]$ recomputes
$\embed[\embedKey](\cInput)$ and $\hash(\cInput,
\embed[\embedKey](\cInput), \embedKey, \cMask)$
after obtaining $\embedKey$ from $\server$. 
Then, $\client[2]$ and $\server$ invoke \oleFunc,
to compute $\prfOutput$, which is output to \server (see \figref{fig:protocol_overview_ic}). 
Finally, $\client[2]$ learns 
whether $\prfOutput$ was previously stored at $\server$, which implies 
$\cInput$ was checked. The full protocol is described in \appref{app:ic}. 


%

\begin{theorem}
There is a protocol realizing \authentication with $\bigO{\metricSpaceDimGen}$ communication and compute costs, assuming that there is a secure protocol realizing \oleFunc. 
\end{theorem}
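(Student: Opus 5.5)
The plan is to build the protocol \protocolValidate directly on \oleFunc, using one OLE call per coordinate of \metricSpaceGen. Then we argue correctness, cost, and simulation-based security in the \oleFunc-hybrid model against threat models \ref{model:client-mal} and \ref{model:server-hbc}.

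\textbf{Construction.} For each $\ptIdx \in \nats[\metricSpaceDimGen]$, \server plays the OLE receiver with input $\sMaskComp{\ptIdx}$, and $\client[2]$ plays the sender with inputs $\OLEBobInputOne = z_{2,\ptIdx}$ and $\OLEBobInputTwo = z_{1,\ptIdx}$. Here $z_{1,\ptIdx}$ and $z_{2,\ptIdx}$ are the components of $\cICInput[1]$ and $\cICInput[2]$. \server thus obtains $z_{1,\ptIdx} + \sMaskComp{\ptIdx} z_{2,\ptIdx}$, and collecting these gives $\prfOutputAlt = \cICInput[1] \linearComb \sMask \hadamardProd \cICInput[2]$. \server compares $\prfOutputAlt$ with the stored $\prfOutput$ and sends the one-bit result to $\client[2]$.

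\textbf{Correctness and cost.} Correctness is immediate from the OLE functionality. The cost is $\metricSpaceDimGen$ OLE invocations plus one equality test on $\metricSpaceDimGen$ field elements and a single returned bit. This gives $\bigO{\metricSpaceDimGen}$ communication and compute, since each OLE of \citet{Ghosh2017:OLE} costs $\bigO{1}$ amortized field operations and messages. The initial nonce lookup is a hashtable access and does not depend on \metricSpaceDimGen.

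\textbf{Security.}
\begin{itemize}
\item \emph{Honest-but-curious \server.} The simulator receives $\prfOutputAlt$ and the output bit from \authentication. It hands \server the coordinates of $\prfOutputAlt$ as the OLE outputs and forwards the bit. Because \oleFunc reveals nothing beyond $\OLEBobInputOne\OLEAliceInput + \OLEBobInputTwo$, the view is identical. \server's view contains nothing more than $\prfOutputAlt$, which is exactly what \authentication hands it.
\item \emph{Corrupt client side.} The client plays the OLE sender, and the sender learns nothing from \oleFunc. The simulator therefore extracts $(\OLEBobInputOne, \OLEBobInputTwo)$ from the adversary's calls to the \oleFunc interface. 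It sets $\cICInput[2]$ and $\cICInput[1]$ accordingly, submits them to \authentication, and returns the resulting bit.
\end{itemize}
A malicious client gains nothing about $\sMask$ beyond the single equality bit. That bit is exactly what the ideal functionality already leaks.

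\textbf{Main obstacle.} The main obstacle is composition. Composing $\metricSpaceDimGen$ independent OLE instances under the maliciously secure construction requires that the same $\sMask$ components are used consistently. Here consistency is automatic, since each $\sMaskComp{\ptIdx}$ appears in exactly one OLE. We must still invoke the UC (or sequential) composition theorem to replace \oleFunc with the protocol of \citet{Ghosh2017:OLE}. I would handle this by proving security in the \oleFunc-hybrid model and then citing composition.
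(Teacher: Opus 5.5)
Your proposal is correct and follows essentially the paper's route: \protocolValidate makes one \oleFunc call per coordinate, with \server as receiver on $s_k$ and the client as sender on the components of $\vec{z_1},\vec{z_2}$, giving $O(\theta)$ cost; security is argued in the \oleFunc-hybrid model by extracting the client's OLE inputs and forwarding them to \authentication. The one difference is your corrupt-server simulator, which sets the OLE outputs to the coordinates of $\vec{\gamma}'$ obtained from \authentication, whereas the paper samples fresh random values and appeals to the random oracle; yours is the cleaner choice, since it keeps the simulated view consistent with the server's ideal-world output.
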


\section{Evaluation}
\label{sec:eval}

We implemented the \primitiveNameShort-based malware 
detection\footnote{\url{https://github.com/zxinyuan/Half-Moon-Cookie}} in
C++11. We benchmarked these implementations on a Ubuntu 22.04 machine
with 12 cores and 32GB of RAM as the server, and multiple machines
that are each equipped with 4 cores and 16GB of RAM as the clients.
We used the emp-toolkit~\cite{emp-toolkit} for OT and garbled circuit
operations, and CRGC for reusable garbled circuits.  We used the
NTL~\cite{ntl} library for finite field arithmetic. The operations
were performed in a 128-bit prime-order field.  We implemented Ghosh
et al.'s OLE~\cite{Ghosh2017:OLE} and Ghosh et al.'s
PSI~\cite{Ghosh2019:PSI} as building blocks for
\primitiveNameShort. Our construction works for any metrics that can
be embedded into Hamming distance; see \appref{app:protocol}.


\subsection{Dataset}
\label{sec:eval:dataset}

As discussed in \secref{sec:usecase}, our \primitiveNameShort-based
malware detection can be applied to email attachments for ransomware
checking. We used the Enron dataset~\cite{Klimt04:Enron} for
collecting the general email attachment size distribution. In
\figref{fig:enron}, we show the cumulative distribution function (CDF)
calculated from $133{,}127$ files in
\url{https://trec-legal.umiacs.umd.edu/corpora/trec/legal10/},
including files uncompressed from .zip archives.  We ignored files
with size less than \SI{500}{\byte}.  The average size of an
attachment was \SI{193.6}{\kilo\byte} in the Enron dataset, while the
median was \SI{98.8}{\kilo\byte}. We separated out the executables
(files with extensions .exe, .dll, etc.), which had a mean size of
\SI{13.7}{\kilo\byte} and a median of \SI{3.8}{\kilo\byte}.  We
conducted our experiments based on such statistics.

For the malware blocklist, we used the Ember
dataset~\cite{Joyce2025:Ember}, embedding the $16{,}356{,}790$ malware
instances to clusters of different radii.

\begin{figure}[t]
\centering
    \resizebox{0.75\columnwidth}{!}{\begin{tikzpicture}[scale=0.9]

\begin{axis}[
    xmode=log,
    log basis x=10,
    xlabel={File size (\SI{}{\kilo\byte}, log scale)},
    ylabel={CDF},
    ymin=0, ymax=1,
    width=9cm,
    height=6cm,
    legend pos=south east,
]
\addplot+[thick, mark=*, color=black]
coordinates {
    (1, 0.02)
    (5, 0.09)
    (10, 0.13)
    (100, 0.74)
    (500,0.91)
    (1000,0.94)
    (2000,0.97)
    (5000, 1.00)
};
\addlegendentry{All attachments}

\addplot+[thick, dashed, mark=square*, color=black]
coordinates {
    (1, 0.12)
    (5, 0.56)
    (10, 0.72)
    (100, 1.00)
};
\addlegendentry{Executables}


\end{axis}
\end{tikzpicture}}
    \caption{Email attachment size distribution}
    \label{fig:enron}
\end{figure}

\subsection{\ExplicitCheckPhase}
\label{sec:eval:explicit}

We start in \secref{sec:eval:explicit:hash} by comparing our
\explicitCheckPhase protocol when implemented using the common malware
hashes.  Then, we turn to measuring throughput of our
\explicitCheckPhase protocol in \secref{sec:eval:explicit:throughput}.

\subsubsection{Choosing a hash function}
\label{sec:eval:explicit:hash}

Unlike cryptographic hashes, malware hashes (e.g., locality sensitive
hashes, context triggered piecewise hashes) can be used as file
identifiers that can be compared for estimating file similarity.
TLSH~\cite{Oliver13:TLSH}, ssdeep~\cite{Kornblum06:ssdeep},
sdhash~\cite{Roussev10:sdhash} are the most common options and are
used by VirusTotal and VirusShare.  We refer readers to the original
works on fuzzy hashing for detailed evaluations of embedding accuracy
and similarity detection performance. In general, the accuracy of
these fuzzy hashes can be tuned by adjusting the threshold parameter
that determines when two digests are considered
similar. \primitiveNameShort makes black-box use of the hash function
and the cost does not depend on the specific threshold value used to
classify matches; the computational and communication costs of our
scheme remain independent of this parameter choice. Thus,
practitioners may select the threshold that best fits their
accuracy-performance tradeoff without affecting the efficiency of
\primitiveNameShort.

We first illustrate in \tblref{tab:em_cost} the utility of our
optimizations described in \secref{sec:app:embed-and-map:reusable}
that leverage reusable garbled circuits (RGCs) to implement these hash
functions in the Embed-and-Map protocol (\protocolEmbed).  This table
compares the communication volumes and response times of
\protocolEmbed, where response time is measured at \client[1],
starting when it invokes \protocolEmbed and finishing it when it
completes that invocation with a lightly loaded \server.  For each
hash function, the table compares cases when the hash function
implementation does not (\noIndicator) or does (\yesIndicator)
leverage the RGC optimization described in
\secref{sec:app:embed-and-map:reusable}.  (In the case it does not,
regular garbled circuits are used, instead.)  The size of $\cInput$
was \SI{5}{\kilo\byte}, which is about the median size of an email
attachment executable.

\begin{table}[htb]
\centering
\begin{tabular}{@{}lc*{2}{S[table-format=1.1e1]}@{}}
\toprule
\multicolumn{1}{c}{hash}
& \multicolumn{1}{c}{RGCs}
& \multicolumn{1}{c}{\parbox[m]{4em}{\centering \textbf{Comm.} (\SI{}{\mega\byte})}}
& \multicolumn{1}{c}{\parbox[m]{5em}{\centering \textbf{Response Time} (\SI{}{\second})}} \\
\midrule
\multirow{2}{*}{TLSH}
& \noIndicator  & 6.8e4 & 1.0e3 \\
& \yesIndicator & 6.0e1 & 4.8e0 \\
\midrule
\multirow{2}{*}{ssdeep}
& \noIndicator & 5.5e3 & 2.0e2 \\
& \yesIndicator & 5.3e1 & 5.4e0 \\
\midrule
\multirow{2}{*}{sdhash}
& \noIndicator & 5.8e4 & 9.7e2 \\
& \yesIndicator & 1.3e2 & 2.4e1 \\
\bottomrule
\end{tabular}
\caption{Embed-and-map (\protocolEmbed) costs with (\yesIndicator) and
  without (\noIndicator) reusable garbled circuits (RGCs), with
  $\cInput$ of size \SI{5}{\kilo\byte}}
\label{tab:em_cost}
\end{table}

As shown in \tblref{tab:em_cost}, for all three hash functions, our RGC
optimization reduces communication volume of \protocolEmbed by over two
orders of magnitude and response time by at least one order of
magnitude.  Note that the no-RGC costs shown in \tblref{tab:em_cost}
serve as a very conservative lower bound for an implementation of a
full \explicitCheckPhase using a monolithic, conventional garbled
circuit.

Comparing across hash functions, TLSH provides the fastest response time 
and ssdeep provides the lowest communication volume.  The response time for
\protocolEmbed using sdhash (implemented using RGCs) is about $4\times$
that using ssdeep, and the response time using sdhash is $5\times$
more than that using TLSH. Using ssdeep or TLSH results in similar
communication volumes, though sdhash induces $\sim 2\times$ more
communication.

While ssdeep yields a comparably efficient implementation of
\protocolEmbed, unfortunately it embeds the digest to edit distance,
which is not directly compatible with our \primitiveNameShort design.
As a result, to use ssdeep for an \explicitCheckPhase, its
edit-distance embedding is further embedded into Hamming distance
within \protocolEmbed.  We do so through edit-sensitive
parsing~\cite{Cormode07:ESP}, which embeds edit-distance to L1
distance and is then compatible with \citet{Chakraborti2023:DAPSI}.
This extra embedding step results in an additional \SI{1.23}{\second}
in client response time and \SI{4.46}{\mega\byte} in total
communication in \protocolEmbed.  Moreover, the embedding vector size
must be increased to compensate for the approximation introduced by
the edit-sensitive parsing, in order to preserve the accuracy of
ssdeep, which results in greater cost in the Test-and-Commit protocol
implementation (\protocolCommitESP).  This extra cost causes the total
\explicitCheckPhase response time using ssdeep to exceed the response
time using TLSH by a more substantial margin.  This can be seen by
comparing \tblref{tbl:explicit-check:tlsh} and
\tblref{tbl:explicit-check:ssdeep}, which show the response times as
measured by \client[1] for a full \explicitCheckPhase, for various
sizes of $\cInput$ and $\policyLan$.

\begin{table}[htb]
\centering
\begin{subtable}[t]{0.75\columnwidth}
  \setlength{\tabcolsep}{1pt}
  \FPeval{\MinNumber}{14.2}
  \FPeval{\MaxNumber}{1773.4}
  \begin{tabular}{@{}cc|XXXXX}
    & \multicolumn{1}{c}{}
    & \multicolumn{5}{c}{$\setSize{\policyLan}$} \\
    \multirow{6}{*}{\rotatebox[origin=c]{90}{size of $\cInput$ (\SI{}{\kilo\byte})}}
    & \multicolumn{1}{c}{}
    & \multicolumn{1}{c}{$1e2$}
    & \multicolumn{1}{c}{$5e2$}
    & \multicolumn{1}{c}{$1e3$}
    & \multicolumn{1}{c}{$5e3$} 
    & \multicolumn{1}{c}{$1e4$} \\ \cline{3-7} \\[-8.5pt]
    & $5e0$ &   14.2 &   44.9 &   82.7 &  391.1 &  781.1 \\
    & $1e1$ &   19.2 &   49.9 &   87.7 &  396.5 &  786.2 \\
    & $1e2$ &  110.8 &  138.3 &  177.8 &  485.1 &  876.2 \\
    & $2e2$ &  207.8 &  240.2 &  282.5 &  586.5 &  978.3 \\
    & $1e3$ & 1012.8 & 1040.2 & 1078.3 & 1387.7 & 1773.4
  \end{tabular}
  \caption{TLSH}
  \label{tbl:explicit-check:tlsh}
\end{subtable}
\\[8pt]
\begin{subtable}[t]{0.75\columnwidth}
  \setlength{\tabcolsep}{1pt}
  \FPeval{\MinNumber}{44.7}
  \FPeval{\MaxNumber}{6100.7}
  \begin{tabular}{@{}cc|XXXXX}
    & \multicolumn{1}{c}{}
    & \multicolumn{5}{c}{$\setSize{\policyLan}$} \\
    \multirow{6}{*}{\rotatebox[origin=c]{90}{size of $\cInput$ (\SI{}{\kilo\byte})}}
    & \multicolumn{1}{c}{}
    & \multicolumn{1}{c}{$1e2$}
    & \multicolumn{1}{c}{$5e2$}
    & \multicolumn{1}{c}{$1e3$}
    & \multicolumn{1}{c}{$5e3$} 
    & \multicolumn{1}{c}{$1e4$} \\ \cline{3-7} \\[-8.5pt]
    & $5e0$ &   44.7 & 169.0 & 319.4 & 1548.4 & 3112.9 \\
    & $1e1$ &   61.8 & 186.9 & 336.7 & 1566.6 & 3129.8 \\
    & $1e2$ & 337.7 & 456.0 & 606.7 & 1839.3 & 3399.6 \\
    & $2e2$ &  631.8 & 756.5 & 904.7 & 2136.7 & 3709.7 \\
    & $1e3$ & 3033.2 & 3160.7 & 3319.0 & 4547.6 & 6100.7
  \end{tabular}
  \caption{ssdeep}
  \label{tbl:explicit-check:ssdeep}
\end{subtable}
\\[8pt]
\begin{subtable}[t]{0.75\columnwidth}
  \setlength{\tabcolsep}{1pt}
  \FPeval{\MinNumber}{45.5}
  \FPeval{\MaxNumber}{6819.4}
  \begin{tabular}{@{}cc|XXXXX}
    & \multicolumn{1}{c}{}
    & \multicolumn{5}{c}{$\setSize{\policyLan}$} \\
    \multirow{6}{*}{\rotatebox[origin=c]{90}{size of $\cInput$ (\SI{}{\kilo\byte})}}
    & \multicolumn{1}{c}{}
    & \multicolumn{1}{c}{$1e2$}
    & \multicolumn{1}{c}{$5e2$}
    & \multicolumn{1}{c}{$1e3$}
    & \multicolumn{1}{c}{$5e3$} 
    & \multicolumn{1}{c}{$1e4$} \\ \cline{3-7} \\[-8.5pt]
    & $5e0$ &   45.5 & 77.3 & 114.2 & 421.7 & 812.7 \\
    & $1e1$ &  101.1 & 132.0 & 170.0 & 477.2 & 868.3 \\
    & $1e2$ &  658.1 & 683.7 & 714.9 & 1031.5 & 1420.2 \\
    & $2e2$ & 1253.7 & 1282.4 & 1324.5 & 1631.1 & 2026.3 \\
    & $1e3$ & 6056.7 & 6084.5 & 6121.2 & 6428.2 & 6819.4
  \end{tabular}
  \caption{sdhash}
  \label{tbl:explicit-check:sdhash}
\end{subtable}
\caption{\ExplicitCheckPhase response time (s) when instantiated with
  different fuzzy hashes}
\label{tbl:explicit-check}
\end{table}

We also tested using sdhash, which transforms $\cInput$ to a Bloom
filter whose size scales linearly with $\cInput$. This expansion
significantly increases the communication and computation required
during \testAndCommit, ultimately slowing overall performance.  As
such, TLSH outperforms both ssdeep and sdhash in the the overall
\explicitCheckPhase for all parameter settings we tested
(\tblref{tbl:explicit-check:tlsh}
vs.\ \tblref{tbl:explicit-check:ssdeep}
vs.\ \tblref{tbl:explicit-check:sdhash}).  Therefore, we pick TLSH for
the rest of our \primitiveNameShort protocol benchmarking below.

\subsubsection{\ExplicitCheckPhase throughput}
\label{sec:eval:explicit:throughput}

To measure the throughput of the \explicitCheckPhase in our
\primitiveNameShort design, we conducted tests wherein we loaded the
server \server with increasingly many concurrent client requests until
it saturated.  \figref{fig:EC} shows the results of these tests.
The workload on the server is shown on the x-axis, representing the
number of client requests issued concurrently to the server. The
y-axis reports the response time per client request, averaged over all
clients' invocations.  The black solid line in both plots show the
$\client[1]$ response time with input size of \SI{10}{\kilo\byte} and
$\setSize{\policyLan} = 100$.  Response time remains low and grows
slowly with few concurrent client requests, with a single-client
request completing in \SI{19.11}{\second}.  However, as the
concurrency level increases, response time reaches a noticeable
inflection point around 250 simultaneous requests.  At this load, the
average response time spikes, suggesting that the server reached a
saturation threshold beyond which queuing and resource contention
dominate system performance.  The red solid line represents the same
functionality in a monolithic garbled circuit. It takes
\SI{247}{\second} for a single client to finish the task, while the
performance significantly degrades at around 5 clients.

In \figref{fig:EC:blocklist}, we show results of \primitiveNameShort
when varying the size of the blocklist, $\setSize{\policyLan}$. We
observe that the overall throughput degrades more quickly as the
number of entries increases, because a larger blocklist directly
increases the communication cost between $\client[1]$ and $\server$,
leading to greater network overhead and earlier saturation. As a
result, the turning point, where latency begins to rise sharply,
occurs at lower concurrency levels when the $\policyLan$ is
large. When $\setSize{\policyLan} = 1{,}000$, the saturation threshold
is at around 80 clients, while it is at around 20 clients with
$\setSize{\policyLan} = 10{,}000$. In \figref{fig:EC:input}, we vary
the $\client[1]$ input size. In contrast, the impact on throughput is
substantially smaller. Due to the reusable garbled circuit technique
described in \secref{sec:app:embed-and-map:reusable}, the server-side
workload remains independent of the input size, and only the
client-side local circuit computation scales with the input
length. Therefore, while latency does increase gradually with larger
inputs, the shape of the throughput curve remains largely consistent,
and the turning point does not shift significantly.

\begin{figure}[t!]
  \centering

\begin{subfigure}[t]{0.475\columnwidth}
\begin{tikzpicture}[scale=0.85]
\begin{axis}[
legend cell align={left},
legend style={
	fill opacity=0.8,
	draw opacity=1,
	text opacity=1,
	at={(0.03,0.97)},
	anchor=north west,
	draw=lightgray204,
	font=\small,
},
scaled x ticks=manual:{}{\pgfmathparse{#1}},
tick align=outside,
tick pos=left,
x grid style={darkgray176},
xtick style={color=black},
xlabel={clients},
ymax=1100,
y grid style={darkgray176},
ylabel={avg.\ response time (\SI{}{\second})},
ylabel style={yshift=-1ex},
ytick style={color=black},
height=4.5cm,
width=4cm,
]
\addplot [semithick, black]
table {%
1 19.11
3 19.38
5 19.62
10 21.08
20 27.08
40 28.92
50 37.5
60 41.99
70 47.76
80 55.96
100 62.49
150 94.01
200 121.69
250 187.65
300 487.07
400 935.43
};
\addplot [semithick, dashed, black]
table {%
1 75.84
3 81.25
5 83.13
10 92.8
20 94.68
30 98.09
40 109.64
50 129.63
80 285.49
100 695.42
120 1374.44
};
\addplot [semithick, dotted, black]
table {%
1 786.12
3 818.9
5 846.26
10 1058.94
20 1676.73
};
\addplot [semithick, red]
table {%
1 247.04
3 253.02
5 257.2
10 350.3
20 608.78
30 1796.5
};
\end{axis}
\end{tikzpicture}
\caption{$\cInput$ is \SI{10}{\kilo\byte}.
  $\setSize{\policyLan} = 1e2$ \mbox{(---)};
  $1e3$ (--\,--); or $1e4$ ($\cdot$\,$\cdot$).}
 \label{fig:EC:blocklist}
\end{subfigure}%
\hfill
\begin{subfigure}[t]{0.475\columnwidth}
\centering
\begin{tikzpicture}[scale=0.85]
\begin{axis}[
legend cell align={left},
legend style={
	fill opacity=0.8,
	draw opacity=1,
	text opacity=1,
	at={(0.03,0.97)},
	anchor=north west,
	draw=lightgray204,
	font=\small,
},
scaled x ticks=manual:{}{\pgfmathparse{#1}},
tick align=outside,
tick pos=left,
x grid style={darkgray176},
xtick style={color=black},
xlabel={clients},
xticklabels={$0$,$0$, $200$, $400$},
ymax=1100,
y grid style={darkgray176},
ylabel={avg.\ response time (\SI{}{\second})},
ylabel style={yshift=-1ex},
ytick style={color=black},
height=4.5cm,
width=4cm,
]
\addplot [semithick, black]
table {%
1 19.11
3 19.38
5 19.62
10 21.08
20 27.08
40 28.92
50 37.5
60 41.99
70 47.76
80 55.96
100 62.49
150 94.01
200 121.69
250 187.65
300 487.07
400 935.43
};
\addplot [semithick, dashed, black]
table {%
1 109.14
3 109.41
5 109.65
10 111.11
20 117.11
40 118.95
50 127.53
60 132.02
70 137.79
80 145.99
100 152.52
150 184.04
200 211.72
250 277.68
300 577.1
400 1025.46
};
\addplot [semithick, dotted, black]
table {%
1 211.26
3 211.52
5 211.77
10 213.32
20 219.32
40 221.06
50 231.74
60 234.23
70 240
80 247.3
100 254.74
150 286.25
200 313.93
250 379.89
300 679.31
400 1127.67
};
\addplot [semithick, red]
table {%
1 247.04
3 253.02
5 257.2
10 350.3
20 608.78
30 1796.5
};
\end{axis}
\end{tikzpicture}
\caption{$\setSize{\policyLan} = 1e2$.  $\cInput$ is $1e1$ (---);
  $1e2$ (--\,--); or $2e2$ ($\cdot\,\cdot$) \SI{}{\kilo\byte}.}
\label{fig:EC:input}
\end{subfigure}

\caption{\ExplicitCheckPhase throughput.  Red line
  (\textcolor{red}{---}) is the same parameter sizes as the black line
  (---) but with the \explicitCheckPhase functionality implemented in
  a monolithic garbled circuit.}
  \label{fig:EC}
\end{figure}

\subsection{\ImplicitCheckPhase}
\label{sec:eval:implicit}

\addText{
As discussed in \secref{sec:related}, the design of 
\primitiveNameShort is closely related to the functionality 
provided by an asymmetric fuzzy PSI. In particular, a fuzzy PSI 
enables one party to learn only the existence of approximate 
matches between its set and another party’s input under a 
given distance metric.}

\addText{
We compare \implicitCheckPhase to a one-shot fuzzy PSI 
instantiation for two important reasons. First, \implicitCheckPhase 
amortizes the cost of private matching by avoiding recomputing 
proximity against the blocklist for every subsequent check 
where the same client input must be validated multiple times. 
Second, \implicitCheckPhase provides resilience against TOCTOU 
attacks. In a naive fuzzy PSI-based approach, the proximity check 
and the subsequent use of the result are decoupled; unless it is 
performed at every use. Therefore, we use asymmetric fuzzy PSI 
as a natural baseline, as it captures the same level of functionality 
and security of \implicitCheckPhase.}

We compare to \citet{Chakraborti2023:DAPSI} and
\citet{Blass25:fuzzyPSI} since they are also Hamming-distance-based
designs and support sets of asymmetric sizes. The TLSH embedding
vector is of length 35 bytes, so
\citet{Chakraborti2023:DAPSI} outperformed \citet{Blass25:fuzzyPSI} as
its cost is independent of the vector length.  We also report the
result from \citet{Rindal2021:volepsi}, the state-of-the-art
semi-honest exact PSI, which compares against $\blockedElmts$ instead
of the embeddings $\policyLan$.

As $\protocolValidateESP$ requires only a few $\oleFunc$ executions
for reconstructing the token and then performs an exact match over the
token, it significantly outperforms either a fuzzy PSI or an exact PSI
by at least two orders of magnitude in response time and three orders
of magnitude in communication volume.  Note also that the cost of
$\protocolValidateESP$ does not depend on the blocklist size.

\begin{table}[htb]
\centering
\begin{tabular}{@{}l*{2}{S[table-format=1.1e1]}@{}}
\toprule
 & \multicolumn{1}{c}{\parbox[m]{4em}{\centering \textbf{Comm.} (\SI{}{\mega\byte})}} &  \multicolumn{1}{c}{\parbox[m]{5em}{\centering \textbf{Response Time} (\SI{}{\second})}} \\
\midrule
\ImplicitCheckPhase (\protocolValidateESP) & 7.2e-3 & 1.9e-1 \\
\citet{Chakraborti2023:DAPSI} & 3.4e1 & 1.6e2 \\
\citet{Blass25:fuzzyPSI} & 2.0e3 & 2.6e3 \\
\citet{Rindal2021:volepsi} & 6.3e2 & 6.1e1 \\
\bottomrule
\end{tabular}
\caption{\ImplicitCheckPhase cost comparison on $\cInput$ of size
  \SI{100}{\kilo\byte}}
  \label{tab:ic_cost}
\end{table}

\section{Conclusion}

This paper presents \primitiveName, which enables a client to
determine if its item approximately matches any item of a server's
blocklist, without requiring the client to disclose its item to the
server or the server to disclose its blocklist to the client.  If the
client's item passes this check, the server is provided a hiding
and binding token
to that item to store on its allowlist until the blocklist is updated,
so that future clients can more efficiently confirm that the item they
received was checked against the latest blocklist.  We apply
\primitiveNameShort to malware defense, wherein a sending client
performs the (relatively expensive) blocklist check before sending an
executable, permitting receiving clients to perform only the
(relatively inexpensive) allowlist check before using the executable,
enabling them to efficiently avoid TOCTOU attacks.  Through judicious
protocol construction, \primitiveNameShort scales efficiently in the
sizes of both the item to be checked and the blocklist.

\bibliographystyle{IEEEtranSN}
\bibliography{bib/bib.bib}

\appendices

\section{One-Time Pairwise Unpredictable-Permutation}
\label{app:proof_prp}

\begin{theorem*}
	For a random selection of a key $\langle \PRPKeyVec[1], \PRPKeyVec[2]
	\rangle \in (\finiteField\setminus\{0\})^{\numberOfPointsSim} \times
	(\finiteField)^{\numberOfPointsSim}$, the function $\PRP[\langle
	\PRPKeyVec[1], \PRPKeyVec[2]\rangle](\cInputVec) = \PRPKeyVec[1] \hadamardProd
	\cInputVec + \PRPKeyVec[2]$ is a one-time pairwise-unpredictable permutation
	over \metricSpaceESP.
\end{theorem*}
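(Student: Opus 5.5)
The argument works coordinate by coordinate. Each component of $\PRP[\langle\PRPKeyVec[1],\PRPKeyVec[2]\rangle]$ is an affine map $v \mapsto \kappa_{1,j} v + \kappa_{2,j}$ with $\kappa_{1,j} \ne 0$, and the key components $(\kappa_{1,j},\kappa_{2,j})$ are independent across $j \in \nats[\numberOfPointsSim]$. The plan is to first establish that the scalar map $g_{(a,b)}(v) = av+b$ on $\finiteField$, keyed by $(a,b)$ uniform in $(\finiteField\setminus\{0\})\times\finiteField$, is a one-time pairwise unpredictable permutation. Then I will lift this to $\metricSpaceESP$ by independence, as the remark after \dfnref{defn:unpredictable_metric} suggests.

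\textbf{Permutation.} Since $\kappa_{1,j}\neq 0$, each coordinate map is a bijection of $\finiteField$ with inverse $u\mapsto \kappa_{1,j}^{-1}(u-\kappa_{2,j})$. Hence $\PRP$ is a bijection on $\metricSpaceESP$. This inverse is also what \testAndCommit uses.

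\textbf{Scalar case.} Fix $v\in\finiteField$ and a target $x$. For each of the $|\finiteField|-1$ choices of $a$, exactly one $b = x-av$ works. So $\prob{g(v)=x} = (|\finiteField|-1)/((|\finiteField|-1)|\finiteField|) = 1/|\finiteField|$.

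For the conditional statement, take $v\ne v'$ and targets $x,y$. The pair of equations $av+b=x$, $av'+b=y$ has the unique solution $a=(x-y)/(v-v')$, $b=x-av$. If $x \neq y$ this $a$ is nonzero, and the joint probability is $1/((|\finiteField|-1)|\finiteField|)$. Dividing by $\prob{g(v')=y}=1/|\finiteField|$ gives $1/(|\finiteField|-1)$. This matches $|\finiteField|/|\genericFnKeyspace|$ with $\genericFnKeyspace=(\finiteField\setminus\{0\})\times\finiteField$, since $|\genericFnKeyspace| = |\finiteField|(|\finiteField|-1)$. If $x=y$ the probability is $0$, because $g$ is injective. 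So the bound holds as ``at most'', which is the reading the paper actually uses in the framework proof.

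\textbf{Lift to vectors.} For $\cInputVec,\cInputVecAlt$ differing in a set $D$ of coordinates with $|D|\ge\diffInComps$, use independence of the key coordinates to factor
\[
\cprob{}{\PRP(\cInputVec)=\vecNotation{x}}{\PRP(\cInputVecAlt)=\vecNotation{y}} = \prod_{j} \cprob{}{g_j(v_j)=x_j}{g_j(v'_j)=y_j}.
\]
On coordinates $j\notin D$ the factor is $1$ or $0$, determined by whether $x_j = y_j$. On coordinates $j\in D$ each factor is at most $|\finiteField|/|\genericFnKeyspace|$ by the scalar case. The product is therefore at most $(|\finiteField|/|\genericFnKeyspace|)^{\diffInComps}$. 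The unconditional equality follows the same way, as the product of $\numberOfPointsSim$ factors each equal to $1/|\finiteField|$.

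\textbf{Main obstacle.} The difficulty is not computational but interpretive. \dfnref{defn:unpredictable_metric} states the conditional probabilities as equalities, and its parameter $\genericFnKeyspace$ must be identified with the per-coordinate key space $(\finiteField\setminus\{0\})\times\finiteField$. Under this identification, exact equality can hold only for ``generic'' targets, namely $x_j\ne y_j$ on $D$ and $x_j=y_j$ off $D$. The unmatched targets give probability $0$. I would state this caveat explicitly and prove the statement in the upper-bound form that the security argument relies on.
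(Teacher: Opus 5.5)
Your proof is correct and follows essentially the same route as the paper's: bijectivity via the explicit coordinatewise inverse, a per-coordinate count for affine maps with $\kappa_{1,j}\neq 0$, and independence of key components across coordinates. The paper computes the joint probability and applies Bayes, which is equivalent to your direct factorization of the conditional. Your explicit handling of non-generic targets, where the probability is $0$, improves on the paper: its claimed equality silently assumes $x_j\neq y_j$ on differing coordinates and $x_j=y_j$ elsewhere, so the upper-bound form you prove is the accurate statement.
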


\begin{proof}

	First note that $\PRP$ is a permutation. For a fixed key 
$\PRPKey \assign \langle \PRPKeyVec[1], \PRPKeyVec[2] \rangle$, 
if there are two vectors $\genericVec[1] \assign \langle 
\genericVecComp[1][1], \dots, \genericVecComp[1][\numberOfPointsSim] 
\rangle$ and $\genericVec[2] \assign \langle \genericVecComp[2][1], \dots, 
\genericVecComp[2][\numberOfPointsSim] \rangle$ such that $\genericVec[1] 
\neq \genericVec[2]$ and $\PRP[\genericPRPKey](\genericVec[1]) = 
\PRP[\genericPRPKey](\genericVec[2])$, then there is at least one $\ptIdx 
\in [1, \metricSpaceDimGen]$ where  $\genericVecComp[1][\ptIdx] \neq 
\genericVecComp[2][\ptIdx]$ but $\PRPKeyVec[1][\ptIdx] \cdot 
\genericVecComp[1][\ptIdx] + \PRPKeyVec[2][\ptIdx] = \PRPKeyVec[1][\ptIdx] 
\cdot \genericVecComp[2][\ptIdx] + \PRPKeyVec[2][\ptIdx]$. This leads to a 
contradiction that \finiteField is a field. Also, $\PRP$ is surjective because 
for any vector $\genericVec$, $\PRP[\genericPRPKey]^{-1}(\genericVec) = 
\langle \frac{\genericVecComp[1] - \PRPKeyVec[2][1]}{\PRPKeyVec[1][1]}, 
\dots, \frac{\genericVecComp[\metricSpaceDimGen] - 
\PRPKeyVec[2][\metricSpaceDimGen]}{\PRPKeyVec[1][\metricSpaceDimGen]} 
\rangle \in \metricSpaceGen$. Note, $\PRPKeyVec[1] \neq \langle 0, \dots, 0 
\rangle$.

Now, it is clear that $\prob{\PRP[\langle \PRPKeyVec[1], \PRPKeyVec[2]\rangle](\cInputVec) = \PRPKeyVec[1] \hadamardProd
\cInputVec + \PRPKeyVec[2] = \genericVec} = \frac{1}{\setSize{\finiteField}^{\numberOfPointsSim}}$ due to the random choice 
of $\PRPKeyVec[1]$ and $\PRPKeyVec[2]$ from $\metricSpace$. 
Consider another arbitrary input $\cInputVecAlt \in \metricSpaceESP$ such 
that $\setSize{\cInputVec \Delta  \cInputVecAlt} = \diffInComps$ 
where $\diffInComps \in [1, \numberOfPointsSim]$, i.e., 
 $\setSize{\cInputVec \Delta  \cInputVecAlt}$ is the 
 number of dimensions where \cInputVec and \cInputVecAlt 
 differ. Consider any one such dimension $\rootIdx \in [1, \numberOfPointsSim]$, and 
 two arbitrary vectors $\genericVec, \genericVecAlt \in \finiteField$. Then,

	\begin{align*}
	& \prob{\PRPKeyVec[1][\rootIdx] \cdot \cInputVec[\rootIdx] + \PRPKeyVec[2][\rootIdx] = 
	\genericVecComp[\rootIdx]
	 \land \PRPKeyVec[1][\rootIdx] \cdot \cInputVecAlt[\rootIdx] + \PRPKeyVec[2][\rootIdx] = 
	 \genericVecAltComp[\rootIdx]}\\
	& = \prob{\PRPKeyVec[2][\rootIdx] = 
	\genericVecComp[\rootIdx] - \PRPKeyVec[1][\rootIdx] \cdot \cInputVec[\rootIdx]
	\land \PRPKeyVec[1][\rootIdx] = \frac{\genericVecAltComp[\rootIdx] - 
	\genericVecComp[\rootIdx]}{\cInputVecAlt[\rootIdx] - \cInputVec[\rootIdx]}} \\
	& = \frac{1}{\setSize{\finiteField \setminus 0} \cdot 
\setSize{\finiteField}}
	\end{align*}

The probability 
reflects the fact that it is always possible to produce one pair $\PRPKeyVec[1][\rootIdx], 
\PRPKeyVec[2][\rootIdx] \in \{\finiteField \setminus 0\} \times 
\finiteField$ to satisfy both equations, and since $\PRPKeyVec[1]$ and $\PRPKeyVec[2]$
are selected randomly, any such pair is equally likely. 
Also, for any other dimension $\setIdx \in [1, \numberOfPointsSim], \setIdx \neq \rootIdx$, 
where \cInputVec and \cInputVecAlt differ, $\prob{\PRPKeyVec[1][\rootIdx] \cdot 
\cInputVec[\rootIdx] + \PRPKeyVec[2][\rootIdx] = 
	\genericVecComp[\rootIdx]
	\land \PRPKeyVec[1][\rootIdx] \cdot \cInputVecAlt[\rootIdx] + \PRPKeyVec[2][\rootIdx] = 
	\genericVecAltComp[\rootIdx]}$ is independent of $\prob{\PRPKeyVec[1][\setIdx] \cdot 
	\cInputVec[\setIdx] + \PRPKeyVec[2][\setIdx] = 
	\genericVecComp[\setIdx]
	\land \PRPKeyVec[1][\setIdx] \cdot \cInputVecAlt[\setIdx] + \PRPKeyVec[2][\setIdx] = 
	\genericVecAltComp[\setIdx]} $ because each component of \PRPKeyVec[1]
	and \PRPKeyVec[2] are selected independently. Thus, we have 
	
\begin{align*}
	& \prob{\PRP[\genericPRPKey](\cInputVec) = \genericVec \land 
	\PRP[\genericPRPKey](\cInputVecAlt) = \genericVecAlt} \\
	& = \frac{1}{(\setSize{\finiteField \setminus 
	0} \cdot 
		\setSize{\finiteField})^{\setSize{\cInputVec \Delta \cInputVecAlt}}} \cdot \left(\frac{1}{\setSize{\finiteField}}\right)^{\numberOfPointsSim - \diffInComps} = \frac{1}{\setSize{\finiteField 
		\setminus 
			0}^{\diffInComps} \cdot 
		\setSize{\finiteField}^{\numberOfPointsSim}} \end{align*}

By applying Bayes' theorem, we get the result

\begin{align*}
\cprob{}{\PRP[\genericPRPKey](\cInputVec) = \genericVec}{\PRP[\genericPRPKey](\cInputVecAlt) = \genericVecAlt} = \frac{1}{\setSize{\finiteField 
		\setminus 
			0}^{\diffInComps}}
\end{align*}


\end{proof}

\begin{lemma}[\cite{Kissner2005:PSI}]
	\label{lemma:kissener_uniformly_random}
	Given two polynomials $\genericPoly[1]$ and $\genericPoly[2]$, and two
	uniformly random polynomials $\randomPoly[1]$ and $\randomPoly[2]$ of
	fixed degrees $\degreeOfPoly(\randomPoly[1]) \ge \degreeOfPoly
	(\genericPoly[2])$ and $\degreeOfPoly(\randomPoly[2]) \ge 
	\degreeOfPoly(\genericPoly[1])$, the polynomial \randomPoly[3] satisfying the following 
	is a random polynomial. 
	\[
	\randomPoly[1] \cdot \genericPoly[1] + \randomPoly[2] \cdot \genericPoly[2] 
	= \gcd(\genericPoly[1], \genericPoly[2]) \cdot \randomPoly[3]
	\]
\end{lemma}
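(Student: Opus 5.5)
We would prove the lemma by reducing it to a linear-algebra statement over $\finiteField$. As in the paper's convention for a random polynomial $\genericPoly \getsr \polyField$, we read a uniformly random polynomial of degree $d$ as one whose $d+1$ coefficients are sampled independently and uniformly from $\finiteField$, i.e.\ a uniform element of the vector space $\finiteField[x]_{\le d}$ of polynomials of degree at most $d$.

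Let $g = \gcd(\genericPoly[1], \genericPoly[2])$, and write $\genericPoly[1] = g\,q_1$ and $\genericPoly[2] = g\,q_2$ with $\gcd(q_1,q_2)=1$. Put $a=\degreeOfPoly(q_1)$, $b=\degreeOfPoly(q_2)$, $d_1=\degreeOfPoly(\randomPoly[1])$ and $d_2=\degreeOfPoly(\randomPoly[2])$. Then
\[
\randomPoly[1]\genericPoly[1]+\randomPoly[2]\genericPoly[2] = g\cdot(\randomPoly[1] q_1+\randomPoly[2] q_2),
\]
so $\randomPoly[3]=\randomPoly[1] q_1+\randomPoly[2] q_2$ is uniquely determined, because $\polyField$ is an integral domain. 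The claim thus becomes: the $\finiteField$-linear map
\[
\Phi:\finiteField[x]_{\le d_1}\times \finiteField[x]_{\le d_2}\to \finiteField[x]_{\le D},\qquad (r_1,r_2)\mapsto r_1q_1+r_2q_2,
\]
with $D=\max(d_1+a,\,d_2+b)$, sends the uniform distribution to the uniform distribution. For a linear map, the fibres over points of the image are cosets of $\ker\Phi$ and so all have the same size. Hence it suffices to show that $\Phi$ is surjective.

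We would establish surjectivity by counting dimensions.

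\textbf{Kernel.} Suppose $r_1q_1=-r_2q_2$. Since $q_1$ and $q_2$ are coprime, $q_2\mid r_1$, so $r_1=s\,q_2$ and $r_2=-s\,q_1$ for some $s$. Conversely, every $s$ with $\degreeOfPoly(s)\le m:=\min(d_1-b,\;d_2-a)$ gives an element of the kernel. The hypotheses $d_1\ge \degreeOfPoly(\genericPoly[2])\ge b$ and $d_2\ge \degreeOfPoly(\genericPoly[1])\ge a$ ensure $m\ge 0$, so $\dim\ker\Phi=m+1$.

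\textbf{Image.} By rank--nullity, $\dim\operatorname{im}\Phi=(d_1+1)+(d_2+1)-(m+1)=d_1+d_2-m+1$. Using the identity $d_1+d_2-\min(d_1-b,d_2-a)=\max(d_2+b,\,d_1+a)=D$, this equals $D+1=\dim\finiteField[x]_{\le D}$. Therefore $\Phi$ is onto, and $\randomPoly[3]$ is uniform on $\finiteField[x]_{\le D}$, which makes it a random polynomial.

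The step we expect to need the most care is the meaning of ``fixed degree.'' If a random polynomial of degree $d$ is required to have a nonzero leading coefficient, then the domain is no longer a vector space. We would handle this by conditioning. When $d_1+a\ne d_2+b$, the leading coefficient of $\randomPoly[3]$ is a nonzero multiple of the leading coefficient of the dominant term. Conditioning the uniform distribution from the argument above on that coefficient being nonzero, and noting that the fibres over each value of that coefficient have equal size, still yields a uniform random polynomial of exact degree $D$. In the case $d_1+a=d_2+b$, we would only claim uniformity over $\finiteField[x]_{\le D}$, which is the sense in which the lemma is used for $\symmURDefn$ in \eqnref{eq:sur_random}.
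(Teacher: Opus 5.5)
The paper gives no proof of this lemma; it simply cites Kissner and Song. There is therefore no in-paper argument to compare against, and your proposal works as a correct, self-contained proof.

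The core argument is sound:
\begin{itemize}
\item After dividing out $g=\gcd(p_1,p_2)$, coprimality of $q_1,q_2$ pins $\ker\Phi$ down to $\{(sq_2,-sq_1):\deg s\le m\}$.
\item The hypotheses $d_1\ge\deg p_2$ and $d_2\ge\deg p_1$ are exactly what make $m\ge 0$.
\item Rank--nullity, together with $d_1+d_2-\min(d_1-b,d_2-a)=D$, makes $\Phi$ onto $\mathbb{F}[x]_{\le D}$, and equal fibre sizes then give uniformity.
\end{itemize}
Using rank--nullity is legitimate because the paper works over a field. Your argument also covers arbitrary, possibly unequal, degrees of $p_1,p_2,r_1,r_2$ directly, which is the generality in which the lemma is stated here.

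One claim in your last paragraph needs tightening. In the case $d_1+a=d_2+b$ you cannot keep exact-degree sampling and still get uniformity on $\mathbb{F}[x]_{\le D}$. Suppose both leading coefficients are forced nonzero, and look at the fibre $\{(r_1^0+sq_2,\,r_2^0-sq_1)\}$ over $u$. The two constraints rule out two distinct values of the top coefficient $s_m$ when $u_D\neq 0$, but only one when $u_D=0$. So $r_3$ is merely statistically close to uniform. Already $p_1=p_2=1$, $d_1=d_2=0$ over $\mathbb{F}_3$ gives $\Pr[r_3=0]=1/2$ rather than $1/3$.

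Your fallback in that case must therefore be to the coefficient-uniform model. That model is the paper's stated convention for random polynomials, and it is what the paper's application needs. There $r_1,r_2,p_1,p_2$ all have degree $\delta$, so the paper is precisely in this equal case.

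In the unequal case there is one small addition to make if the non-dominant random polynomial must also have exact degree. Its leading coefficient equals a fixed value minus $s_m\,\mathrm{lc}(q_1)$ (resp.\ plus $s_m\,\mathrm{lc}(q_2)$), so it is equidistributed on every fibre. Conditioning on it being nonzero therefore preserves uniformity over polynomials of exact degree $D$.
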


\begin{prop}
	\label{prop:symmURFromRand}
	Let $\genericVec[1], \genericVec[2] \in \finiteField^{\numberOfPoints}$.
	Let $\randomPoly[1], \randomPoly[2] \in \polyField$ be random
	polynomials of fixed degrees $\degreeOfPoly(\randomPoly[1]) 
	\ge \degreeOfPoly(\polyFromVec{\genericVec[1]})$ and
	$\degreeOfPoly(\randomPoly[2]) \ge \degreeOfPoly(\polyFromVec
	{\genericVec[2]})$. Then,
	{\small
	\begin{align*}
		\prob{\symmURDist{\genericVec[1]}{\genericVec[2]} \neq
			\setSize{\uniqueRoots{\polyFromVec{\genericVec[1]}}} + 
			\setSize{\uniqueRoots{\polyFromVec{\genericVec[2]}}} -
			2\cdot\setSize{\uniqueRoots{\genericPoly}}}
		& \le \frac{1}{\setSize{\finiteField}}\\
		\mathllap{\mbox{for $\genericPoly = \gcd(\polyFromVec
				{\genericVec[1]}, \randomPoly[1] \polyFromVec{\genericVec[2]})$}} \\
		\prob{\symmURDist{\genericVec[1]}{\genericVec[2]} \neq
			\setSize{\uniqueRoots{\polyFromVec{\genericVec[1]}}} + 
			\setSize{\uniqueRoots{\polyFromVec{\genericVec[2]}}} -
			2\cdot\setSize{\uniqueRoots{\genericPoly}}}
		& \le \frac{1}{\setSize{\finiteField}}\\
		\mathllap{\mbox{for $\genericPoly = \gcd(\polyFromVec
				{\genericVec[1]} \randomPoly[2], \polyFromVec{\genericVec[2]})$}} \\
		\prob{\symmURDist{\genericVec[1]}{\genericVec[2]} \neq
			\setSize{\uniqueRoots{\polyFromVec{\genericVec[1]}}} + 
			\setSize{\uniqueRoots{\polyFromVec{\genericVec[2]}}} -
			2\cdot\setSize{\uniqueRoots{\genericPoly}}}
		& \le \frac{1}{\setSize{\finiteField}} \\
		\mathllap{\mbox{for $\genericPoly = \gcd(\polyFromVec{\genericVec[1]} 
				\randomPoly[2], \randomPoly[1]\polyFromVec{\genericVec[2]})$}} \\
	\end{align*}
	}
\end{prop}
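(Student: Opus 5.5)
The plan is to reduce each of the three inequalities to a statement about the random polynomials $\randomPoly[1]$, $\randomPoly[2]$ vanishing on the roots of the fixed polynomials, and then apply \lemmaref{lemma:kissener_uniformly_random} to control that event. Write $\genericPoly[1] = \polyFromVec{\genericVec[1]}$, $\genericPoly[2] = \polyFromVec{\genericVec[2]}$, and $g = \gcd(\genericPoly[1],\genericPoly[2])$.

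\textbf{Step 1 (deterministic inclusion).} By definition, $\symmURDist{\genericVec[1]}{\genericVec[2]} = \setSize{\uniqueRoots{\genericPoly[1]}} + \setSize{\uniqueRoots{\genericPoly[2]}} - 2\setSize{\uniqueRoots{g}}$. So in each case it suffices to show that, except with probability at most $1/\setSize{\finiteField}$, the gcd in question has exactly the root set $\uniqueRoots{g} = \uniqueRoots{\genericPoly[1]} \cap \uniqueRoots{\genericPoly[2]}$. Since $g$ divides each of $\gcd(\genericPoly[1], \randomPoly[1]\genericPoly[2])$, $\gcd(\genericPoly[1]\randomPoly[2], \genericPoly[2])$ and $\gcd(\genericPoly[1]\randomPoly[2], \randomPoly[1]\genericPoly[2])$, the root set of each contains $\uniqueRoots{g}$. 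Only \emph{unique} roots are counted, so extra multiplicity at a common root is harmless. The bad event is therefore that the gcd acquires a root outside $\uniqueRoots{g}$.

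\textbf{Step 2 (characterise the bad event).} Let $A_1 = \uniqueRoots{\genericPoly[1]} \setminus \uniqueRoots{\genericPoly[2]}$ and $A_2 = \uniqueRoots{\genericPoly[2]} \setminus \uniqueRoots{\genericPoly[1]}$.
\begin{itemize}
\item For the first gcd, a new root must be some $a \in A_1$ with $\randomPoly[1](a) = 0$.
\item The second case is symmetric: some $a \in A_2$ with $\randomPoly[2](a) = 0$.
\item For the third gcd, a new root is either of these, or a common root of $\randomPoly[1]$ and $\randomPoly[2]$ lying outside $\uniqueRoots{\genericPoly[1]} \cup \uniqueRoots{\genericPoly[2]}$.
\end{itemize}

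\textbf{Step 3 (apply the lemma).} Rather than union-bounding over the elements of $A_1$, I would rewrite each bad event through \lemmaref{lemma:kissener_uniformly_random}. Write $\randomPoly[1]\genericPoly[2] + \randomPoly[2]\genericPoly[1] = g\cdot\randomPoly[3]$ with $\randomPoly[3]$ uniformly random; the degree hypotheses of the proposition are exactly those of the lemma. In the third case, the gcd equals $g$ times $\gcd(\genericPoly[1]/g\cdot\randomPoly[2],\ \genericPoly[2]/g\cdot\randomPoly[1])$. Any root of this cofactor is also a root of $\randomPoly[3]$ that does not lie in $\uniqueRoots{g}$. The bad event is then contained in the event that the single uniformly random polynomial $\randomPoly[3]$ has a root in a prescribed position relative to the fixed data. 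I would try to bound this by conditioning on all coefficients of $\randomPoly[3]$ except its constant term. That constant term is uniform over $\finiteField$, and the event should then pin it to one specific value, giving probability $1/\setSize{\finiteField}$. The first two cases are the degenerate instances of the same argument: take $\randomPoly[2]=0$, respectively $\randomPoly[1]=0$, on the other side, so that $\randomPoly[3]$ reduces to a random multiple of $\randomPoly[1]$ or $\randomPoly[2]$.

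\textbf{Main obstacle.} The difficulty is Step 3's claim that the bad event is governed by a \emph{single} uniform field element. A naive route counts roots: evaluating $\randomPoly[1]$ at the points of $A_1$ gives independent uniform values, and the chance that some value is zero is $1-(1-1/\setSize{\finiteField})^{\setSize{A_1}}$. This exceeds $1/\setSize{\finiteField}$ whenever $\setSize{A_1}\ge 2$. So the argument must genuinely use the structure from \lemmaref{lemma:kissener_uniformly_random}, namely a gcd-level event rather than a per-root event.

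What I would try first is to show that, conditioned on the other coefficients, the constant term of $\randomPoly[3]$ must take one specific value for the cofactor gcd to be nontrivial. If that fails for $\setSize{A_1}\ge 2$, I would check whether the intended reading counts the event ``the cofactor gcd is nonconstant'' once, via a resultant that is linear in one uniform coefficient. Linearity in a single uniform coefficient would give the exact $1/\setSize{\finiteField}$ bound.
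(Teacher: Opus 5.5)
Your Steps 1 and 2 are correct. They are essentially everything the paper's proof does before it invokes a cited fact ($\Pr[\gcd(p,r)\neq 1]\le 1/|\mathbb{F}|$ for a fixed $p$ and a random $r$) together with the Kissner lemma.

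\textbf{The gap is Step 3: the bound it aims for is false.} The computation you call naive is not a lossy union bound; it is the exact probability of the bad event. Since $\deg r_1\ge \deg p_1\ge |A_1|$, the values $(r_1(a))_{a\in A_1}$ are independent and uniform. So in the first case the failure probability equals $1-(1-1/|\mathbb{F}|)^{|A_1|}$. For instance, take $p_1=(x-a)(x-b)$ with $a\neq b$, and $p_2$ sharing no root with $p_1$; then it is $2/|\mathbb{F}|-1/|\mathbb{F}|^2>1/|\mathbb{F}|$.

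Re-expressing the event through the Kissner lemma only reparametrizes the same randomness and cannot change this number. That is why both of your rescue ideas break:
\begin{itemize}
\item Conditioning on all coefficients but the constant term $c_0$ pins $c_0$ to the set $\{-\tilde r(a): a\in A_1\}$. That set generically has $|A_1|$ elements, not one.
\item The polynomial in the coefficients that detects the event, $\prod_{a\in A_1} r_1(a)$, has degree $|A_1|$, not one.
\item Setting $r_2=0$ for the first two cases is also outside the lemma's hypotheses.
\end{itemize}
The paper's proof has the same defect. The fact it cites fails by this very computation once $p$ has two distinct roots in $\mathbb{F}$. An exact $1/|\mathbb{F}|$ is the classical value for two \emph{independent} random monic polynomials of positive degree, not for a fixed polynomial and a random one.

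\textbf{What Steps 1--2 do prove.} With a union bound in place of Step 3, you get a corrected statement. The failure probability is at most $|A_1|/|\mathbb{F}|\le\deg(p_1)/|\mathbb{F}|$ in the first case, $|A_2|/|\mathbb{F}|$ in the second, and $(|A_1|+|A_2|+1)/|\mathbb{F}|$ in the third. In the third case a new root $a$ falls into one of three sub-cases:
\begin{itemize}
\item $a\in A_1$, which forces $r_1(a)=0$;
\item $a\in A_2$, which forces $r_2(a)=0$;
\item $a\notin\mathcal{R}(p_1)\cup\mathcal{R}(p_2)$, which forces $r_1(a)=r_2(a)=0$. This has probability $1/|\mathbb{F}|^2$ per point, hence at most $1/|\mathbb{F}|$ in total.
\end{itemize}
Your reduction of the third case to the single polynomial $r_3$ would in any case lose this last sub-case. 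Such a common root is not at a position prescribed by the fixed data, and ``$r_3$ vanishes somewhere in $\mathbb{F}$'' is not a small event. The joint vanishing of $r_1$ and $r_2$ has to be used directly.

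Since $\deg(p_1)\le 2\delta$ and the field has 128-bit order, an $O(\delta/|\mathbb{F}|)$ bound is all the downstream ``with overwhelming probability'' uses need. The right conclusion from your obstacle paragraph is that the proposition's $1/|\mathbb{F}|$ must be weakened, not that a cleverer argument will recover it.
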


\begin{proof}
The proposition follows from the result that given some fixed
polynomial \genericPoly and a random polynomial
$\randomPoly$, we have $\prob{\gcd(\genericPoly, \randomPoly) \neq 1} \le
1/\setSize{\finiteField}$~\cite{Ghosh2019communication}.

Due to \lemmaref{lemma:kissener_uniformly_random}, we can compute $\symmURDist{\genericVec[1]}{\genericVec[2]}$ from $\randomPoly[1] \cdot \polyFromVec{\genericVec[1]}
+ \randomPoly[2] \cdot \polyFromVec{\genericVec[2]}$, knowing either
$\polyFromVec{\genericVec[1]}$ or $\polyFromVec{\genericVec[2]}$, 
and the degrees of $\polyFromVec{\genericVec[1]}$,
$\polyFromVec{\genericVec[2]}$. This is because
$\gcd(\polyFromVec{\genericVec[1]}, \randomPoly[1] \cdot
\polyFromVec{\genericVec[1]} + \randomPoly[2] \cdot
\polyFromVec{\genericVec[2]}) = \gcd(\polyFromVec{\genericVec[1]}, 
\polyFromVec{\genericVec[2]})$ with overwhelming probability.
\end{proof}

\section{Background}
\label{app:background}

\subsection{CRGC Background}
\label{app:CRGC}

CRGC~\cite{CRGC} compiles a Boolean circuit $\circuit$ with garbler
input $\garblerInput$ into an obfuscated circuit $\circuitAlt$ and
obfuscated input $\garblerInputAlt$ such that $\circuit(\garblerInput, 
\evaluatorInput) = \circuitAlt(\garblerInputAlt, \evaluatorInput)$ for any
evaluator input $\evaluatorInput$. 
\addText{
In other words, $\circuitAlt$ does
not leak garbler's input to the evaluator while preserving the original 
circuit functionality. In this appendix, we summarize the obfuscation
techniques and argue about the input privacy provided by CRGC.}
In short, obfuscation proceeds via bit flipping
and selective gate transformations. Gates that admit indistinguishability
are placed in the reusable section, while any remaining gates can be 
handled by a non-reusable, normal garbled circuit.

\begin{figure}[h]
\centering
\includegraphics[width=0.5\textwidth]{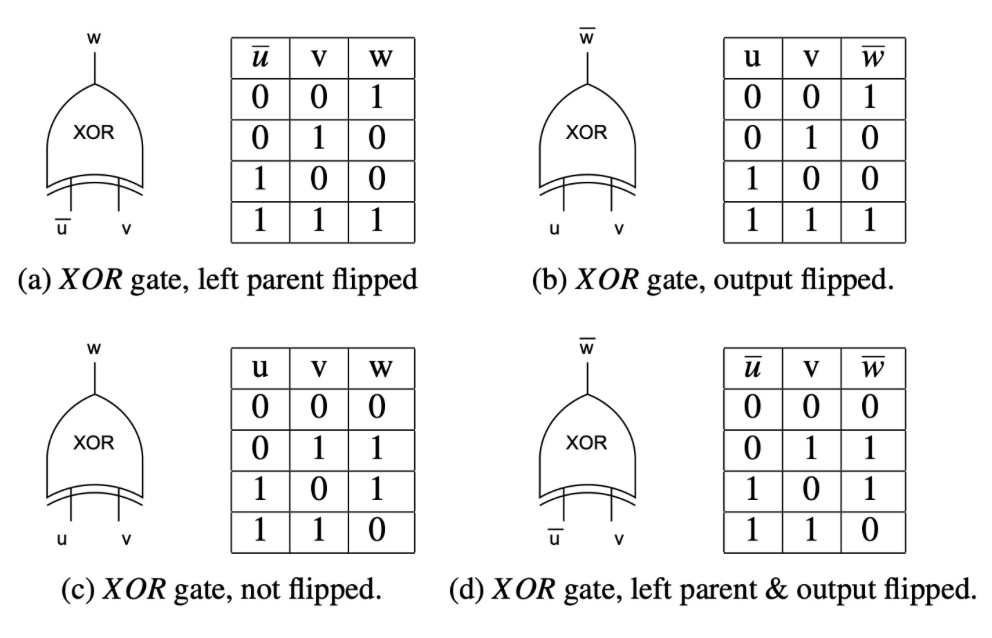}
\caption{Truth Table for XOR Gates}
\label{fig:RGCXOR}
\end{figure}

We start with bit flipping. In CRGC, bit flipping transforms a garbled 
circuit to be reusable, which is the foundation of Lemma 2. Bit flipping 
refers to applying a one-time pad over $\garblerInput$ and all wires 
in the circuit $\circuit$ to obtain $\circuitAlt$ and $\garblerInputAlt$. 
Evaluator input $\evaluatorInput$ and final output wires do not get 
flipped. A flipped wire needs to be modified so that the truth table 
of its child gates maintains the integrity of $\circuit$. 
\figref{fig:RGCXOR} shows such an example of XOR gates. Note that 
the truth table of (a) is identical to (b), but the garbler's input in (a) is 
flipped while the gate output wire in (b) is flipped. The evaluator 
cannot distinguish between the two; thus, $\garblerInput$ is hidden 
from the evaluator. Similarly, the evaluator cannot tell (c) apart from 
(d). Concretely, CRGC Lemma 2 shows that flipping inputs/outputs of 
balanced gates yields pairs of indistinguishable truth tables, so the 
evaluator has advantage 0 in telling the garbler’s original input bit. 
Hence each balanced gate provides indistinguishability obfuscation 
over the garbler's input.

However, bit flipping alone provides privacy only for balanced gates, 
whose truth tables have symmetric output distributions (i.e. have the 
same number of 0's and 1's). Imbalanced gates do not enjoy this 
symmetry: even under input flips, the evaluator can still distinguish 
which inputs yield the gate’s unique outputs. Thus, leakage originating 
from imbalanced gates may propagate backward toward garbler 
inputs. To prevent such leakage, CRGC must ensure that every path 
from a potentially revealing (imbalanced) gate to any garbler input 
contains \emph{at least two balanced gates} through intermediary gate 
transforms. This structural requirement forms the basis of leakage 
prediction in CRGC frameworks: the library analyzes all paths from 
imbalanced gates and classifies a garbler input as achieving 
indistinguishability if and only if every such path contains at least two 
balanced gates. 
\addText{
In \lemmaref{lem:CRGCt}, we argue indistinguishability obfuscation 
of transformed by CRGC bit flipping and intermediary gate obfuscation.}

\addText{
\begin{lemma}
Let $\circuitAlt$ be the transformed circuit obtained from $\circuit$ by 
Bit Flipping together with the fixed-gate and intermediary-gate 
transformations. Suppose a gate $\gate$ lies in the reusable section and, 
after these transformations, $\gateAlt$ is represented by a balanced truth 
table (XOR/XNOR), or more generally by a passive gate (i.e. a gate that does 
not introduce new information about the garbler’s input beyond what is 
already encoded in its input wire labels) modified so as to provide 
indistinguishability obfuscation. Then $\gate$ does not leak the garbler's 
wire value $\wireLabel$ to the evaluator under the standard 
cryptographic assumptions of garbling.
\label{lem:CRGCt}
\end{lemma}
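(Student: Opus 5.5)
The proof is a simulation (hybrid) argument that treats the transformed gate $\gateAlt$ locally. Fix a gate $\gate$ in the reusable section with garbler-influenced input wire carrying value $\wireLabel$. Let $\wireLabelMask \in \bitRange$ be the CRGC bit-flip mask on that wire. Bit flipping publishes $\wireLabel \oplus \wireLabelMask$ on the wire, and it rewrites the truth table of $\gate$ to $\gateAlt$, so that $\gateAlt$ on the flipped inputs yields the (possibly flipped) output. I would show that the evaluator's view $(\gateAlt, \wireLabel\oplus\wireLabelMask, \text{labels})$ has the same distribution for $\wireLabel=0$ and $\wireLabel=1$. It follows that the evaluator's advantage in guessing $\wireLabel$ is $0$ up to the garbling assumptions.

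\textbf{Step 1: balanced case.} For XOR/XNOR, use \cite[Lemma~2]{CRGC}, as illustrated in \figref{fig:RGCXOR}. Flipping an input of a balanced gate is equivalent to flipping its output, and an XOR table is transformed into an XNOR table and vice versa. So the pair $(\gateAlt,\wireLabel\oplus\wireLabelMask)$ under $\wireLabel=0$ and mask $\wireLabelMask$ is identical to the pair under $\wireLabel=1$ and mask $1\oplus\wireLabelMask$, with the output flip absorbed into the child's mask. Since $\wireLabelMask$ is uniform and independent, the two distributions coincide.

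\textbf{Step 2: passive gates.} A passive gate is one whose observable truth table, after the fixed-gate and intermediary-gate transformations, is a function only of already-masked input labels. I would argue by composition. Such a gate's table can be simulated from the masked input values alone, so the evaluator's view of $\gateAlt$ is a deterministic function of a view already shown indistinguishable. The intermediary transformations insert balanced gates so that every path from an imbalanced gate back to $\wireLabel$ contains at least two balanced gates. Each balanced gate on the path re-randomizes via a fresh independent mask, as in Step 1. Hence any correlation exposed by the imbalanced gate concerns only a masked intermediate value, which is independent of $\wireLabel$. I would formalize this with a hybrid sequence that replaces masks one balanced gate at a time, walking backward along each path.

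\textbf{Step 3: label-level security.} The bit-level argument covers the plaintext tables. The actual wire labels are protected by the standard garbling assumptions (a PRF/circular-correlation-robust hash, as in \cite[Lemma~4]{CRGC}), and these make the evaluator's view computationally close to the simulated bit-level view.

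\textbf{Main obstacle.} The difficult step is Step 2. Reusability means the same masks are reused across many evaluator inputs, so independence must hold jointly over all evaluations, not just one. I would handle this by conditioning on the full vector of evaluator inputs and showing that the two-balanced-gate condition gives a mask bijection ($\wireLabelMask\mapsto\wireLabelMask\oplus1$ propagated along the path) that maps the $\wireLabel=0$ view exactly onto the $\wireLabel=1$ view for every evaluation simultaneously.
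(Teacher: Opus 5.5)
Your Step~1 is the paper's argument. Under bit flipping the evaluator sees only $\wireLabel\oplus\wireLabelMask$ and whether $\gateAlt$ is XOR or XNOR, i.e.\ the bit $\wireLabelMask\oplus m_{\mathrm{in},2}\oplus m_{\mathrm{out}}$. With fresh uniform masks this pair has the same distribution for $\wireLabel=0$ and $\wireLabel=1$ \cite[Lemma~2]{CRGC}. The paper's passive case is just as local. The hypothesis already says that the fixed-gate and intermediary-gate transformations have put $\gateAlt$ into a balanced/passive form that provides indistinguishability. So the same mask argument is applied to $\gate$'s own table, and nothing more is needed. Reusability does not change this, since that table and $\wireLabel\oplus\wireLabelMask$ are fixed once for all runs.

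The gap is in the path-level machinery you add in Step~2 and in the ``main obstacle'' paragraph. It is not needed for this lemma, and as stated it is false.
\begin{itemize}
\item[(i)] An imbalanced gate does not expose `only a masked intermediate value'. Its obfuscated table reveals the masks on its input wires (for AND, the position of the lone distinguished row gives them), and hence the \emph{unmasked} values. That is precisely why imbalanced gates are the leakage source.
\item[(ii)] Balanced gates do not re-randomize with fresh independent masks. The type of each XOR/XNOR publishes $m_{\mathrm{in},1}\oplus m_{\mathrm{in},2}\oplus m_{\mathrm{out}}$, so masks along a path are linked, and a mask revealed downstream chains back. For example, take $y=\wireLabel\oplus b_1$ and $c=y\oplus b_2$, with $c$ feeding an AND gate and $b_1,b_2$ unflipped evaluator inputs. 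The AND table gives $m_c$, the second XOR's type then gives $m_y$, and the first gives $\wireLabelMask$, hence $\wireLabel$. This happens even though the path contains two balanced gates.
\item[(iii)] Consequently your bijection $\wireLabelMask\mapsto\wireLabelMask\oplus 1$ cannot be propagated so as to map the $\wireLabel=0$ view onto the $\wireLabel=1$ view. Each balanced gate can only pass the flip on to another mask. The chain must end either at an imbalanced gate, whose table then changes, or at an unflipped output wire, whose visible value changes.
\end{itemize}
The two-balanced-gate rule is the criterion used by CRGC's analysis tool, which the paper itself describes only as giving a lower bound on potential leakage. It is not a theorem that yields such a bijection.

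Step~3 also misplaces \cite[Lemma~4]{CRGC}. The paper uses that lemma only for the non-reusable Yao sections (in \lemmaref{lem:CRGCf}), whereas the reusable-section claim here rests solely on the masks. Drop the path argument and Step~3, and keep the local argument about $\gate$'s own table.
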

}

\begin{sproof}
\addText{
Bit Flipping applies a one-time-pad-style mask to garbler wires and all 
internal wires, and the paper states that under Bit Flipping all balanced 
gates (XOR/XNOR) achieve indistinguishability obfuscation. In particular, 
different garbler inputs can yield identical balanced-gate truth tables 
after flipping, so the evaluator cannot infer the underlying garbler bit 
by inspection.}

\addText{
The paper’s Lemma 2 formalizes this for balanced reusable gates: 
for XOR/XNOR,
$$\truthTableOut = \wireLabelAlt \oplus \truthTableIn \ or \ \truthTableOut = \neg(\wireLabelAlt \oplus \truthTableIn)$$
with each wire label masked as $\wireLabelAlt = \wireLabel \oplus \wireLabelMask$ 
where $\wireLabelMask \getsr \{0, 1\}$. Since the evaluator does not know 
the hidden masks on garbler-side/internal wires, it cannot distinguish 
the cases $\wireLabel=0$ and $\wireLabel=1$ by inspecting the truth table 
of $\gate$. Hence $\gate$ does not leak $\wireLabel$.}

\addText{
Moreover, the fixed-gate and intermediary-gate transformations are 
specifically used to turn certain level-1 (i.e. gates where one of the 
input wire corresponds to a garbler's input wire) or passive gates into 
balanced/passive forms that provide indistinguishability obfuscation 
without breaking correctness. Therefore, once a gate has been 
transformed into such a reusable balanced/passive form, the same 
indistinguishability argument applies.}
\end{sproof}

To integrate CRGC into \embedAndMap, $\server$ first decomposes 
$\embed$ into reusable and non-reusable components. This decomposition 
is guided by whether a subcircuit already satisfies the CRGC non-leakage 
condition, or can be transformed to do so. To make a circuit compatible with 
CRGC requirements, $\server$ may insert additional XOR layers parameterized 
by the embedding key $\embedKey$. The purpose of these inserted balanced 
gates is to ensure that every path from an output wire back to any garbler input 
wire contains at least two balanced gates, thereby meeting the structural 
privacy guarantees required by CRGC. The rest parts that are incompatible
with such transforms will be executed via a standard Yao's garbled circuit.
The resulting combination of reusable and non-reusable sections is referred to 
as a PRGC. In \lemmaref{lem:CRGCf}, we argue such construction of PRGC 
provides the same level of garbler input privacy as a standard Yao's garbled 
circuit. We prove security of \embedAndMap instantiation in \appref{app:proofs_pake}.

\addText{
\begin{lemma}
Let $\circuitAlt$ be a PRGC obtained from $\circuit$ such that:
\begin{enumerate}
\item
Every gate in the reusable section satisfies the reusable-section privacy condition, 
and in particular every reusable gate is covered by \lemmaref{lem:CRGCt} above.
\item
Every gate not satisfying that condition is placed in a non-reusable section evaluated 
with Yao’s garbled-circuit protocol.
\end{enumerate}
Then $\circuitAlt$ provides the same level of input privacy against a semi-honest evaluator 
as Yao’s garbled-circuit protocol.
\label{lem:CRGCf}
\end{lemma}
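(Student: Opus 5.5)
The plan is a simulation-based (hybrid) argument against a semi-honest evaluator. I will build a simulator $\mathcal{S}$ for $\circuitAlt$ out of two pieces: the simulator for the reusable section supplied by \lemmaref{lem:CRGCt}, and the standard Yao simulator for the non-reusable section. I will then show that the combined view is indistinguishable from the evaluator's real view. The evaluator's view has four parts:
\begin{itemize}
\item the transformed truth tables of the reusable gates;
\item the masked garbler input $\garblerInputAlt$;
\item the garbled tables and active wire labels of the non-reusable section, together with the OT messages;
\item the final output $\circuit(\garblerInput,\evaluatorInput)$.
\end{itemize}
The aim is to show that this view depends on $\garblerInput$ only through $\circuit(\garblerInput,\evaluatorInput)$.

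\textbf{Step 1: topological decomposition.} Fix a topological order on the gates of $\circuitAlt$ and partition the wires into three classes:
\begin{itemize}
\item wires internal to the reusable section;
\item \emph{boundary} wires, carrying values from the reusable section into the non-reusable section;
\item wires internal to the non-reusable section.
\end{itemize}
Bit Flipping masks every garbler-side and internal wire in the reusable section with an independent bit $\wireLabelMask\getsr\{0,1\}$. The boundary wires therefore enter the Yao section as masked values. The Yao garbler can absorb these masks into its garbled tables, since it knows them.

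\textbf{Step 2: hybrids over reusable gates.} Going through the reusable gates in topological order, I will replace each real truth table by the one obtained from the all-zero garbler input, with fresh masks. \lemmaref{lem:CRGCt} covers each such gate: its masked balanced or passive table is identically distributed for $\wireLabel=0$ and $\wireLabel=1$. The delicate point is independence across gates. A mask on a wire is reused by every child gate of that wire, so the per-gate argument does not compose automatically. I will handle this using the structural condition that every path from a garbler input to an imbalanced gate passes through at least two balanced gates. The idea is to exhibit, for each pair of garbler inputs $\garblerInput,\garblerInput''$, a bijection on the mask vectors that maps the real view under $\garblerInput$ to the real view under $\garblerInput''$: flip the masks along the differing wires, and let the two balanced gates absorb the change. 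This gives perfect indistinguishability of the reusable-section view. I expect this to be the main obstacle, because \lemmaref{lem:CRGCt} is stated per gate and the global bijection must be checked against fan-out.

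\textbf{Step 3: the non-reusable section.} Once the reusable section has been replaced by a simulated view independent of $\garblerInput$, the non-reusable section is an ordinary Yao circuit. Its inputs are the masked boundary values (uniformly masked, hence independent of $\garblerInput$) and the evaluator's own bits obtained via OT. I will invoke the standard Yao simulation theorem, under the garbling assumptions of \cite[Lemma~4]{CRGC} and the security of OT. This replaces the garbled tables and active labels with a simulation produced only from $\circuit(\garblerInput,\evaluatorInput)$, at a cost of the Yao distinguishing advantage.

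\textbf{Conclusion.} Chaining the hybrids bounds the total distinguishing advantage by the advantage against Yao's protocol alone, since the reusable-section hybrids contribute zero. This is exactly the claim that $\circuitAlt$ gives the same level of input privacy as Yao's protocol against a semi-honest evaluator.
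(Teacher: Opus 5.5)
Your overall decomposition is the same as the paper's: reusable-section views handled by the bit-flipping argument, non-reusable sections by Yao simulation, then composition. However, the step you flag as the main obstacle does not go through as planned.

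The paper never derives section-level privacy from per-gate statements. It reads hypothesis~1 as CRGC's equation~(3): the evaluator-visible wire values of a reusable section are \emph{jointly} distributed independently of the garbler's input. It then simply composes. You instead try to obtain that joint independence from \lemmaref{lem:CRGCt} plus the ``two balanced gates on every path'' condition, and that condition is not sufficient. The view (all tables and $a\oplus m_a$) determines every masked wire value for every $b$. So any mask change that preserves the view must flip $m_w$ exactly when the true value of $w$ changes. Evaluator wires are never flipped, so an XOR whose other input is an evaluator wire just passes the flip on; there is nothing for it to be absorbed into.

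Concretely, take $w_1=a\oplus b_1$, $w_2=w_1\oplus b_2$, and an internal gate $w_3=w_2\wedge b_3$. The path from $a$ to the AND gate contains two balanced gates, yet:
the AND table reveals $m_{w_3}$ (rows with $b_3=0$) and $m_{w_2}\oplus m_{w_3}$ (rows with $b_3=1$);
the XOR/XNOR types of the two balanced gates reveal $m_{w_1}\oplus m_{w_2}$ and $m_a\oplus m_{w_1}$;
so the evaluator recovers $m_a$ and hence $a$. No bijection exists, and the reusable hybrids do not ``contribute zero.''

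To repair this you must either assume the section-level independence directly, as the paper does, or use a correct structural criterion. One such criterion: the true value on every input of a mask-revealing reusable gate is independent of $a$, with $a$-dependent outputs routed through Yao sections. Under that criterion, $m\mapsto m\oplus\Delta$, with $\Delta_w$ the change in $w$'s true value, fixes every table and masked value and is exactly the bijection you want.

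There is a second, smaller gap. Your Step~1 allows only reusable-to-Yao boundary wires, and Step~3 simulates the Yao part from $\circuit(\garblerInput,\evaluatorInput)$ alone. The paper's argument explicitly covers non-reusable outputs that \emph{re-enter} a reusable section. Those values become plaintext (masked) wire values in the evaluator's view, so the Yao simulator must also produce them. This works only if the garbler re-masks them with fresh bits that are not revealed by the downstream reusable gates; this is the paper's ``re-enter at gates that again satisfy equation~(3).'' Your hybrids must then alternate over the sections in topological order, rather than treating the circuit as one reusable block followed by one Yao block.
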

}

\begin{sproof}
\addText{
In \citet{CRGC}, the appendix reduces PRGC input privacy to showing that, 
in the evaluator’s view, the wire labels in reusable sections are distributed 
independently of the garbler’s actual input, namely that the relevant wire 
values satisfy equation (3) in \citet{CRGC}. Lemmas 1-3 states that all gates
in a reusable section are handled by the balanced/passive 
indistinguishability argument, so the reusable section contributes no 
information about the garbler input beyond what is already hidden by 
the masked wire-label distribution.}

\addText{
For every gate outside the reusable section, the construction places it in 
a non-reusable section. Lemma 4 then applies: such a section is evaluated 
using OT plus Yao garbling, and therefore does not leak $\wireLabel$ 
under the security assumptions of Yao’s garbled-circuit protocol. The outputs 
of each such non-reusable section are either final outputs, which are allowed 
by functionality, or they re-enter the reusable section at gates that again satisfy 
equation (3).}

\addText{
Therefore, the evaluator’s full view decomposes into:
\begin{itemize}
\item reusable-section views that are indistinguishable under equation (3), and
\item non-reusable-section views protected by Yao security.
\end{itemize}}

\addText{
Composing these two kinds of sections yields exactly the PRGC privacy claim 
stated by the paper: PRGCs do not leak garbler inputs and guarantee the same 
level of input privacy as Yao’s garbled-circuit protocol.}
\end{sproof}

\subsection{Hamming-distance-aware private queries}
We will use the two-party Hamming-distance-aware query protocol by 
\citet{Chakraborti2023:DAPSI}. 
Given two bit vectors $\cInputVec \in
\{0,1\}^{\degreeOfKey{1}}$ and $\cInputVecAlt \in
\{0,1\}^{\degreeOfKey{1}}$, the protocol checks whether
$\hammingDist{\cInputVec}{\cInputVecAlt} \le \distInMaskedSets$. If
$\hammingDist{\cInputVec}{\cInputVecAlt} > \distInMaskedSets$, then
neither of the parties learn any information about $\cInputVec$ and
$\cInputVecAlt$. Here, we will discuss the high level steps, which is
enough to understand its use in a \primitiveNameShort
construction. The protocol has two steps:

\smallskip
{\it \underline{Mapping step}:}
The parties first locally map their respective inputs, $\cInputVec$ and
$\cInputVecAlt$, to the metric space $(\finiteField^{\numberOfPointsSim}, 
\symmURDefn)$ using the injective map $\vecMap:\{0,1\}^{\degreeOfKey{1}} 
\rightarrow \finiteField^{\numberOfPointsSim}$.  $\vecMap$ takes a 
set $\setOfEvalPts$ as a parameter, which both parties first agree 
on. $\genericVec \assign \vecMap(\cInputVec)$ and $\genericVecAlt 
\assign \vecMap(\cInputVecAlt)$ are the mapped embeddings.

\smallskip
{\it \underline{Matching step}:}
To check whether $\hammingDist{\cInputVec}{\cInputVecAlt} \le
\distInMaskedSets$, the parties interactively and privately check
that $\symmURDist{\genericVec}{\genericVecAlt} \le \threshold = 
2\distInMaskedSets$ using 
\eqnsref{eq:hamming_distance_diff}{eq:diff_sur}. 
As pointed out in \citet{Chakraborti2023:DAPSI},
$\hammingDist{\cInputVec}{\cInputVecAlt} \le
\distInMaskedSets$ is equivalent to $\symmURDist{\genericVec}{\genericVecAlt} \le \threshold$
since $\vecMap$ is injective.
For this, the protocol leverages the fact that \symmURDefn can be computed using 
\npaFunc due to \eqnref{eq:sur_random}. 

Crucially, the protocol sets $\numberOfPointsSim \in (\degreeOfKey{1}, 2\degreeOfKey{1} + 1)$ as a
function of $\degreeOfKey{1}$ and \threshold such that if
$\symmURDist{\genericVec}{\genericVecAlt} \le \threshold$, then the server
learns \genericVec  (and $\cInputVec$), otherwise {\it no} information is
revealed about $\cInputVec$ to the server. Intuitively, the idea is that with
less than $\numberOfPointsSim \le 2\degreeOfKey{1} + 1$ points, it is not
possible
to uniquely define and interpolate the polynomial $\randomPoly[2] \cdot
\polyFromEmbedding[\cInputVec] + \randomPoly[1] \cdot
\polyFromEmbedding[\cInputVecAlt]$
and thus $\randomVec[1], \randomVec[2]$,  unless
$\polyFromEmbedding[\cInputVec]$ and $\polyFromEmbedding[\cInputVecAlt]$ share
$2(\degreeOfKey{1} - \threshold)$ roots.

\section{\PrimitiveName for Hamming Distance}
\label{app:protocol}

\subsection{Embed-and-Map}
\label{app:em}

The protocol \protocolEmbed (\figref{fig:esp_em}) realizing \embedAndMap
perform two tasks: embedding and permutation. \server selects keys for both
embedding and permutation, produces a garbled circuit with the keys as its
inputs, and sends the circuit to $\client[1]$ (evaluator). The circuit first 
embeds $\client[1]$'s input into $\metricSpaceESP$ (\stepref{esp:em:embed}).
The embedding has to be executed in the garbled circuit as the malicious client 
is not trusted to honestly embed its inputs.  Otherwise, a malicious $\client[1]$ 
can set the $\falseAcceptRate{\blockedElmts}{\policyLan}{\threshold}$ to 1 through 
iterating through the input universe. 
After that, the circuit computes the one-time pairwise-independent permutation 
\PRP on the mapped input (\stepref{esp:em:map}). In addition, the circuit also
produces a collision-resistant mapping on $\client[1]$'s input, randomized by 
both $\client[1]$ and $\server$. We explain the necessity of the two blinding
factors $\cMask$ and $\sMask$ in the next functionality.

\begin{oframed}
{\small

	\smallskip\noindent\textbf{Parameters:} A keyed one-time pairwise-unpredictable permutation 
	scheme	$\PRP: {\PRPKeySpace}^{\numberOfPointsSim} \times \metricSpaceESP \rightarrow
	\metricSpaceESP$; a non-programmable random oracle $\hash:
	\{0,1\}^{\ast} \rightarrow \metricSpaceESP$; an embedding function 
	$\embed:\embedKeySpace\times \universe
	\rightarrow \metricSpaceESP$.
	
	\smallskip\noindent\textbf{Inputs:} $\client[1]$'s input is
	$\cInput \in \universe$ and $\cMask \in \metricSpace$. \server's input is a $\PRPKey[1],
	\PRPKey[2] \in {\PRPKeySpace}^{\numberOfPointsSim}$, $\embedKey \in \embedKeySpace$, and $\sMask \in \metricSpace$.
	
	\smallskip\noindent\textbf{Protocol:}
	\server generates a garbled circuit and $\client[1]$ executes this circuit
	as the evaluator. The circuit does the following:
	
	\begin{enumerate}
		\item \label{esp:em:embed} Compute $\cInputVec \assign \embedESP(\cInput)$ 
		
		\item \label{esp:em:map} Output $\prpOutputEmbedding \assign
		\PRP[\PRPKey[1]] (\cInputVec)$ and
		$\prpOutputHash \assign
		\PRP[\PRPKey[2]](\sMask \hadamardProd 
		\hash(\cInput, \embedESP(\cInput), \embedKey, \cMask)$ to $\client[1]$.
		\label{em:output}
	\end{enumerate}
	
}
\end{oframed}
\captionof{figure}{\protocolEmbed: Embed-and-Map protocol  \label{fig:esp_em}}

\subsection{Test-and-Commit}
\label{app:tc}
The protocol \protocolCommit (\figref{fig:esp_tc}) realizes \testAndCommit, where
\server first locally computes its input set \policyLan by mapping each element
in \blockedElmts to a vector in $\finiteField^{\numberOfPointsSim}$ with $\embed$. 
Then,  the protocol checks $\symmURDist{\cInputVec}{\lElmtVecComp{\setIdx}} \le
\threshold$ for all $\lElmtVecComp{\setIdx} \in \policyLan$ 
(\stepsref{esp:tc:lan_S}{esp:tc:interpolate}).
\server can learn the result from \npaFunc (\stepref{esp:tc:compute_comb_poly}) 
iff \server is able to reverse $\client[1]$'s input $\prpOutputEmbeddingVec$ with 
$\PRPKey[1]$, which indicates that the output from \protocolEmbed and the 
input to \protocolCommit are consistent. 

During commit phase, $\client[1]$ and \server interactively compute the summation of 
$\commitVec[\setIdx] \gets \serversRandomPolyEvalVec[\setIdx] \hadamardProd
\cInputVec + \clientsRandomPolyEvalVec[\setIdx] \hadamardProd
\lElmtVecComp{\setIdx}$ and $\sMask \hadamardProd \hash(\cInput,
\embed[\embedKey](\cInput), \embedKey, \cMask)$ using a series of calls to \oleFunc.
They start by eliminating $\clientsRandomPolyEvalVec[]$ ($\client[1]$'s input, in 
\stepsref{esp:tc:remove_client_random}{esp:tc:client_comb}) and
$\serversRandomPolyEvalVec[],
\lanElmtVec$ (\server's inputs, in
\stepsref{esp:tc:server_unblind_1}{esp:tc:output}), without
revealing either of $\cInputVec$ and 
$\sMask \hadamardProd \hash(\cInput, \embed[\embedKey](\cInput), \embedKey, \cMask)$ individually to
\server. $\client[1]$ and \server proceed with computing the {\it authentication token}
$\prfOutput \assign \cInputVec
+ \sMask \hadamardProd \hash(\cInput, \embed[\embedKey](\cInput), \embedKey, \cMask)$ (\stepref{esp:tc:output_1}). 
\server locally adds $\langle \embedKey, \sMask, \prfOutput \rangle$ to the stored allowlist from the \explicitCheckPhase if 
$\blocked{\blockedElmts}{\policyLan}{\threshold}(\embedESP(\cInput)) =
\fals$ (\stepref{esp:tc:output}).
$\cMask$ and $\sMask$ therefore guarantee that i) $\cInput$ is hidden 
from \server and \server cannot brute force over the input universe on
the allowlist entries; 
ii) $\client[1]$ cannot predict/generate the authentication token; iii) the 
checked input matches the input stored on the \server allowlist.

\begin{oframed}
	{\small	
		
		\smallskip\noindent\textbf{Parameters:} 
		A set of elements $\setOfEvalPts \assign \{\evalPt[1], \dots,
		\evalPt[\numberOfPointsSim] \} \subset \finiteField$.
		
		\smallskip\noindent\textbf{Inputs:} $\client[1]$'s input is 
		$\prpOutputEmbeddingVec \gets \PRP[\PRPKey[1]] (\cInputVec) = \langle
		\prpOutputEmbeddingComp{1}, \dots, \prpOutputEmbeddingComp{\numberOfPointsSim}
		\rangle$, and $\prpOutputHashVec \gets \PRP[\PRPKey[2]](
		\sMask \hadamardProd \hash(\cInput,
		\embedESP(\cInput), \embedKey)) = \langle \prpOutputHashComp{1},
		\dots,\prpOutputHashComp{\numberOfPointsSim} \rangle$
		obtained from \protocolEmbed. 
		\server's input is a set $\policyLan \assign \{\embed[\embedKey](\lanElmtU)~|~ \lanElmtU 
		\in 
		\blockedElmts \} \subset \metricSpaceESP$  of size
		$\blockListSize$ derived from \blockedElmts;
		a distance threshold $\threshold$; and keys $\PRPKey[1], \PRPKey[2] \in
		{\PRPKeySpace}^{\numberOfPointsSim}$ where $\PRPKey[1] = \langle \PRPKeyEmbeddingVec[1],
		\PRPKeyEmbeddingVec[2] \rangle$ and $\PRPKey[2] = \langle \PRPKeyHashVec[1],
		\PRPKeyHashVec[2] \rangle$.

		{\small \smallskip\noindent\textbf{Test Phase}}
		\begin{enumerate}[nosep, leftmargin=1.7em,labelwidth=*,align=left]
			
			{\small \item \label{esp:tc:lan_S} For $\lElmtVecComp{\setIdx} =  \langle  \lElmtVecComp{\setIdx}{1},
				\dots, \lElmtVecComp{\setIdx}{\numberOfPointsSim}  \rangle  \in \policyLan$,
				\server computes  $\polyFromEmbedding[\lElmtVecComp{\setIdx}]$ by
				interpolating
				the
				set of points $\{ (\evalPt[1], \lElmtVecComp{\setIdx}{1}), \dots,
				(\evalPt[\numberOfPointsSim], \lElmtVecComp{\setIdx}{\numberOfPointsSim} ) \}$.
				
				\item \label{esp:tc:rand_poly_C} $\client[1]$ selects $\blockListSize$ random
				polynomials $\clientsRandomPoly[1], \dots, \clientsRandomPoly[\blockListSize]$
				of degree $\numberOfMasks$, and computes
				$\clientsRandomPolyEvalVecDefn{\setIdx} = \langle
				\clientsRandomPolyEvalVec[\setIdx][1], \dots,
				\clientsRandomPolyEvalVec[\setIdx][\numberOfPointsSim] \rangle$ for $\setIdx
				\in \nats[\blockListSize]$.
				
				\item \label{esp:tc:rand_poly_S} \server selects $\blockListSize$ random
				polynomials $\serversRandomPoly[1], \dots, \serversRandomPoly[\blockListSize]$
				of degree $\numberOfMasks$, and computes
				$\serversRandomPolyEvalVecDefn{\setIdx} = \langle
				\serversRandomPolyEvalVec[\setIdx][1], \dots,
				\serversRandomPolyEvalVec[\setIdx][\numberOfPointsSim] \rangle$ for $\setIdx
				\in \nats[\blockListSize]$.

				\item \label{esp:tc:compute_comb_poly} For $\setIdx \in [1, \blockListSize]$,
				\server and $\client[1]$ call \npaFunc:
				
				\begin{enumerate}[nosep, leftmargin=1em,labelwidth=*,align=left]
					\item $\client[1]$'s inputs are $\clientsRandomPolyEvalVec[\setIdx]$ and
					$\PRP[\PRPKey[1]](\embedESP(\cInput))$.

					\item \server's inputs are $\serversRandomPolyEvalVec[\setIdx]$ and
					$\PRPKeyEmbeddingVec[1]\hadamardProd \lElmtVecComp{\setIdx}$.
					
					\item \server learns 
					\vspace{-0.1cm}
					\begin{align*}
						& \left(\serversRandomPolyEvalVec[\setIdx] \hadamardProd 
						\PRP[\PRPKey[1]](\embedESP(\cInput))\right) + \left(
						\clientsRandomPolyEvalVec[\setIdx] \hadamardProd
						(\PRPKeyEmbeddingVec[1]\hadamardProd \lElmtVecComp{\setIdx})\right)\\
						& = \left(\serversRandomPolyEvalVec[\setIdx] \hadamardProd 
						(\PRPKeyEmbeddingVec[1] \hadamardProd \embedESP(\cInput) +
						\PRPKeyEmbeddingVec[2])\right) + \left(\clientsRandomPolyEvalVec[\setIdx]
						\hadamardProd (\PRPKeyEmbeddingVec[1]\hadamardProd
						\lElmtVecComp{\setIdx})\right)
					\end{align*}
				\end{enumerate}
				
				\item \label{esp:tc:compute_vec} Using $\PRPKeyEmbeddingVec[1]$,
				$\serversRandomPolyEvalVec[\setIdx]$ and $ \PRPKeyEmbeddingVec[2]$,  \server
				computes
				\vspace{-0.1cm}
				\begin{align*}
					\commitVec[\setIdx] & \gets \serversRandomPolyEvalVec[\setIdx] \hadamardProd
					\embedESP(\cInput) + \clientsRandomPolyEvalVec[\setIdx] \hadamardProd
					\lElmtVecComp{\setIdx} \\
					& = \left\langle \serversRandomPoly[\setIdx][\evalPt] \cdot
					\polyFromEmbedding[\cInputVec][\evalPt]  +
					\clientsRandomPoly[\setIdx][\evalPt]
					\cdot \polyFromEmbedding[\lElmtVecComp{\setIdx}][\evalPt] \right\rangle_{\evalPt \in
						\setOfEvalPts}.
				\end{align*}

				\item \label{esp:tc:interpolate} For $\setIdx \in [1, \blockListSize]$, \server tries to compute
				$\gcd(\polyFromEmbedding[\cInputVec],
				\polyFromEmbedding[\lElmtVecComp{\setIdx}])$
				from $\commitVec[\setIdx]$ by interpolating
				the points \cite{Chakraborti2023:DAPSI}. If the interpolation is successful,
				\server sets
				$\blocked{\policyLan}{\threshold}(\cInputVec) = \tru$. Otherwise, it sets
				$\blocked{\policyLan}{\threshold}(\cInputVec) = \fals$.}

%
		\end{enumerate}
		
		\smallskip\noindent\textbf{Commit Phase}
		
		\begin{enumerate}[nosep, leftmargin=1.7em,labelwidth=*,align=left,start=7]
			
			{\small \item \server computes $\randomPolyComb \assign \sum\limits_{\setIdx
					\in \nats[\blockListSize]} \serversRandomPoly[\setIdx]$, and
				$\randomPolyCombEvalVec \gets \langle \randomPolyComb[\evalPt]
				\rangle_{\evalPt
					\in \setOfEvalPts}$.
				

				\item \label{esp:tc:remove_client_random} \server samples
				$\tmpVarAltVec[\setIdx] = \langle \tmpVarAltVec[\setIdx][1], \dots,
				\tmpVarAltVec[\setIdx][\numberOfPointsSim] \rangle \getsr \metricSpaceESP$.
				For $\setIdx \in \nats[\blockListSize], \ptIdx \in \nats[\numberOfPointsSim]$,
				$\client[1]$ and \server call \oleFunc where $\client[1]$'s input is $-
				\clientsRandomPolyEvalVec[\setIdx][\ptIdx]$ and \server's inputs are
				$\lElmtVecComp{\setIdx}{\ptIdx}$
				and $\tmpVarAltVec[\setIdx][\ptIdx]$. After all the calls,  $\client[1]$ learns
				$\tmpVarAltVec[\setIdx] - \clientsRandomPolyEvalVec[\setIdx] \hadamardProd
				\lElmtVecComp{\setIdx}$ for $\setIdx \in \nats[\blockListSize]$.

				\item \label{esp:tc:prp_inv_hash}  \server samples $\tmpVarVec[\setIdx] =
				\langle \tmpVarVec[\setIdx][1], \dots, \tmpVarVec[\setIdx][\numberOfPointsSim]
				\rangle \getsr \metricSpaceESP$. For
				$\setIdx \in \nats[\blockListSize], \ptIdx \in \nats[\numberOfPointsSim]$,
				\server and $\client[1]$
				call \oleFunc, where $\client[1]$'s input is $\prpOutputHashComp{\ptIdx}$, and
				\server's inputs are
				$\frac{\randomPolyComb[\evalPt[\ptIdx]]}{\PRPKeyEmbeddingVec[1][\ptIdx]'}$.
				After all calls, $\client[1]$ learns 
				$(\PRPKeyHashVec[1])^{-1} \hadamardProd \randomPolyCombEvalVec \hadamardProd
				\prpOutputHashVec + \tmpVarVec[\setIdx]$ for $\setIdx \in
				\nats[\blockListSize]$.
				
				\item \label{esp:tc:client_comb} From the step above, $\client[1]$ computes and
				sends to \server
				
				\[
				\genericVec' \gets \sum\limits_{\setIdx \in \nats[\blockListSize]}
				\tmpVarAltVec[\setIdx] + \tmpVarVec[\setIdx]-
				\clientsRandomPolyEvalVec[\setIdx] \hadamardProd \lElmtVecComp{\setIdx} +
				(\PRPKeyHashVec[1])^{-1} \hadamardProd \randomPolyCombEvalVec \hadamardProd
				\prpOutputHashVec.
				\]}

			\item \label{esp:tc:server_unblind_1} \server computes $\genericVec
			\gets (\sum\limits_{\setIdx \in \nats[\blockListSize]} \tmpVarAltVec[\setIdx]
			+ \tmpVarVec[\setIdx]) + \blockListSize \cdot ((\PRPKeyHashVec[1])^{-1}
			\hadamardProd \randomPolyCombEvalVec \hadamardProd \PRPKeyHashVec[2])$.

			\item \label{esp:tc:output_1} \server computes 
			
			\vspace{-0.5cm}
			\begin{align*}
				& \authToken \assign \frac{1}{\blockListSize} \cdot
				\randomPolyCombEvalVec^{-1}
				\hadamardProd \left(\sum\limits_{\setIdx = 1}^{\blockListSize}
				\combVector_\setIdx
				+ \genericVec' - \genericVec \right) \\
				& =  \frac{1}{\blockListSize} \cdot \randomPolyCombEvalVec^{-1}
				\hadamardProd \\
				& \left(\blockListSize \cdot \randomPolyCombEvalVec \hadamardProd
				(\embedESP(\cInput) + (\PRPKeyHashVec[1])^{-1} \hadamardProd \prpOutputHashVec -
				(\PRPKeyHashVec[1])^{-1} \hadamardProd \PRPKeyHashVec[2])\right) \\
				& = \embedESP(\cInput) + \sMask \hadamardProd
				\hash(\cInput, \embedESP(\cInput), \embedKey, \cMask).
			\end{align*}
			
			\item \label{esp:tc:output} If $\blocked{\policyLan}{\threshold}
			(\cInputVec) = \fals$, \server stores $\langle \nonce, \prfOutput, \embedKey, 
			\sMask \rangle$ and outputs $1$ and $\nonce \gets \{0,1\}^{\ast}$ to $\client[1]$. $\client[1]$ keeps
			$\nonce$, $\cInput$, and $\cMask$. Otherwise, \server aborts
			and $\client[1]$ receives a $0$. 
			
		\end{enumerate}
	}
\end{oframed}
\captionof{figure}{\protocolCommitESP: Test-and-Commit protocol}\label{fig:esp_tc}

\subsection{\ImplicitCheckPhaseHeading}
\label{app:ic}
The protocol \protocolValidate (\figref{fig:esp_tc}) realizes \authentication. 
$\client[2]$'s inputs are the re-created outputs of \protocolEmbed, without $\PRP$, with 
\embedKey obtained from \server (\figref{fig:frameworkSH}). 
\server may reveal $\embedKey$ to $\client[2]$ because $\embedKey$ is randomly 
chosen for each $\client[1]$ and an authentication token $\prfOutput$
for that $\client[2]$ will only be stored at \server if \explicitCheckPhase goes through.
Then, $\client[2]$ and \server evaluate \oleFunc with the reconstructed inputs
 and \server's blinding factor $\sMask$ (\stepref{esp:auth:ole}). $\server$ will obtain
a token $\prfOutputAlt$ to be compared to the allowlist (\stepref{esp:auth:prf}).
Same as $\prfOutput$, $\prfOutputAlt$ hides $\cInputAlt$ from \server.
$\client[2]$ learns only a 0/1 output, indicating whether the \implicitCheckPhase
is successful or not.

\begin{oframed}
{\small
		
		\smallskip\noindent\textbf{Inputs:} $\client[2]$'s input is $\cICInput[1], \cICInput[2] \in \metricSpace$.
		\server's input is $\sMask, \prfOutput \in \metricSpace$.
		
		\smallskip\noindent\textbf{Protocol}
		\begin{enumerate}[nosep, leftmargin=1.5em,labelwidth=*,align=left]
			
			\item \label{esp:auth:startSS} $\client[2]$ computes 
			$\langle \cICInputOneComp{1}, \dots, \cICInputOneComp{\numberOfPointsSim}
			\rangle \assign \cICInput[1]$
			and
			$\langle \cICInputTwoComp{1}, \dots, \cICInputTwoComp{\numberOfPointsSim}
			\rangle \assign \cICInput[2]$.

			\item \label{esp:auth:serverSS} $\server$ computes 
			$\langle \sMaskComp{1}, \dots, \sMaskComp{\numberOfPointsSim}
			\rangle \assign \sMask$.
			
			\item \label{esp:auth:ole} For $\ptIdx \in [1,
			\numberOfPointsSim]$, $\client[2]$ and \server call \oleFunc.
			
			\begin{enumerate}
				\item $\client[2]$'s input are $\cICInputOneComp{\ptIdx}$ and 
				$\cICInputTwoComp{\ptIdx}$.
				\item \server's input are $\sMaskComp{\ptIdx}$.
				\item After all calls, $\server$ learns $\prfOutputAlt \assign \cICInput[1] + \sMask \hadamardProd 
				\cICInput[2]$.
			\end{enumerate}
			\item \label{esp:auth:prf} $\server$ checks whether $\prfOutputAlt \equals \prfOutput$.
			If so, $\server$ outputs $1$ to $\client[2]$; otherwise, $\client[2]$ receives a $0$. 
		\end{enumerate}

}
\end{oframed}
	\captionof{figure}{\protocolValidateESP: \ImplicitCheckPhase protocol \label{fig:esp_authenticate}}
\label{fig:protocolIC}	

\section{Proof of Framework Security}
\label{app:proof_framework}

\addText{
In this appendix, we formally prove the framework of 
\figref{fig:frameworkSH} securely realizes the \primitiveNameShort 
design against the adversaries described in 
\secref{sec:framework:threat-model}, under the assumption that the 
constituent functionalities $\embedAndMap, \testAndCommit, 
\authentication$ are secure.  (The proofs of security for these 
functionalities can be found in \appref{app:proofs_pake}.)
A real/ideal proof is compatible in principle; we chose a game-based 
formulation~\cite{BellareRogaway} for clarity and modularity in the 
current presentation, as it lets us cleanly state each adversary’s goals, 
and isolate which primitive properties imply which system-level 
guarantees.
}

We reduce the security of our framework to typical
properties of types of function families, which we recount below.

\begin{defn}[One-time pairwise unpredictability]
  \label{dfn:prp}
  Let $\genericFnFamily: \genericFnKeyspace \times \genericFnDomain
  \rightarrow \genericFnDomain$ be a family of permutations, and
  let \Adv{} be an algorithm that returns a pair of elements from
  $\genericFnDomain$.  Let
  
  \smallskip
  \indent\begin{minipage}[t]{0.5\textwidth}
  \begin{tabbing}
    ****\=****\=\kill
    Experiment $\ExptPRP{\genericFnFamily}(\Adv{})$ \\
    \> $\genericFnKey \getsr \genericFnKeyspace$ \\
    \> $\cInput \leftarrow \Adv{}()$\\
    \> $\genericFieldElmt \gets \genericFnFamily[\genericFnKey](\cInput)$ \\
    \> $(\cInputAlt, \genericFieldElmtAlt) \leftarrow \Adv{}(\genericFieldElmt)$ \\
    \> if $\genericFnFamily[\genericFnKey](\cInputAlt) = \genericFieldElmtAlt$ \\
    \> \> then return $1$ \\
    \> \> else return $0$
  \end{tabbing}
  \end{minipage}
  
   \smallskip
  \noindent Then,
  \begin{align*}
    \Advantage{\PRPsecdef}{\genericFnFamily}{\Adv{}} 
   & = \prob{\ExptPRP(\Adv{}) = 1} \\
    \Advantage{\PRPsecdef}{\genericFnFamily}{\timeBound} 
    & = \max_{\Adv{}} \Advantage
    {\PRPsecdef}{\genericFnFamily}{\Adv{}}
  \end{align*}
  where the maximum is taken over all algorithms \Adv{} that run in
  time at most \timeBound.
\end{defn}

\subsection{Security against \client}
\label{sec:proof_framework:client} 

\subsubsection{In threat model \ref{model:client-mal}}

Assuming \embedAndMap, \testAndCommit, and \authentication 
are secure, then no transcripts are leaked to $\client[1]$ and the 
only information $\client[1]$ learns are the outputs. Since 
\policyLan is not involved in the computation of any outputs, it is 
trivial that no information about \blockedElmts will be leaked 
to $\client[1]$ than the decision ($\cInput \in \blockedElmts$ or not)
implies in one execution of the \explicitCheckPhase phase.

In this threat model, $\client[1]$ is malicious and aims to have a
token stored at \server on some $\cInput \in \blockedElmts$
during \explicitCheckPhase and verify at \implicitCheckPhase. 
As such, the goal of such a $\client[1]$ is to cause the following experiment
to return $1$, where $\client[1]$ is denoted as a triple of algorithms
$(\Adv{1}, \Adv{2}, \Adv{3})$.  For each call to the functionalities,
each input (output) pairs will be denoted in parenthesis, with the
$\client[1]$ input (output) as the first item and the \server input
(output) as the second item. 

\smallskip

\begin{minipage}[t]{\columnwidth}
	\begin{tabbing}
		****\=****\=\kill
		Experiment $\ExptCC[\PRP, \hash]
		(\Adv{1}, \Adv{2}, \Adv{3})$ \\
		\> $\langle \Advstate{1}, \cInput, \cMask \rangle \gets \Adv{1}^{\hash(\cdot)}()$ \\
		\> $\PRPKey[1] \getsr {\PRPKeySpace}^{\numberOfPointsSim}$,
		$\PRPKey[2] \getsr {\PRPKeySpace}^{\numberOfPointsSim}$, \\
		\> $\embedKey \getsr \embedKeySpace$,
		$\sMask \getsr \metricSpace$ \\
		\> $((\prpOutputEmbedding, \prpOutputHash), \cdot) \gets \embedAndMap^{\hash(\cdot)}((\cInput, \cMask), (\embedKey, \PRPKey[1], \PRPKey[2], \sMask))$ \\
		\> $\langle \Advstate{2}, \prpOutputEmbeddingAlt, \prpOutputHashAlt \rangle  \gets \Adv{2}^{\hash(\cdot)}(\Advstate{1}, \prpOutputEmbedding, \prpOutputHash)$ \\
		\> $(\bit, \prfOutput) \gets \testAndCommit^{\hash(\cdot)}((\prpOutputEmbeddingAlt, \prpOutputHashAlt), (\PRPKey[1], \PRPKey[2]))$ \\
		\> if $\prfOutput = \bot$ \\
		\> \> then return 0 \\	
		\> $\langle \cInputAlt, \cMaskAlt \rangle \gets 
		\Adv{3}^{\hash(\cdot), \authentication^{\hash(\cdot)}(\cdot, \sMask)}
		(\Advstate{2}, \embedKey)$ \\ \\
		\> $\prfOutputAlt \gets \embed[\embedKey](\cInputAlt) \linearComb \sMask \hadamardProd \hash(\cInputAlt, \embed[\embedKey](\cInputAlt), \embedKey, \cMaskAlt)$ \\
		\> if \= $\cInputAlt \in \blockedElmts \wedge \prfOutputAlt = \prfOutput$ \\
		****\=****\=\kill
		\> \> then return 1 \\ 
		\> \> else return 0
	\end{tabbing}
\end{minipage}

\noindent Then,
\begin{align*}
	\Advantage{\DCOPRFsecdef}{\PRP, \hash}{\Adv{1}, \Adv{2}, \Adv{3}}
	& = \prob{\ExptCC[\PRP, \hash]
		(\Adv{1}, \Adv{2}, \Adv{3}) = 1} \\
	\Advantage{\DCOPRFsecdef}{\PRP, \hash}{\timeBound, \hashOracleQueries, \forwardOracleQueries}
	& = \max_{\Adv{1}, \Adv{2}, \Adv{3}} \Advantage{\DCOPRFsecdef}
	{\PRP, \hash}{\Adv{1}, \Adv{2}, \Adv{3}}
\end{align*}
where the maximum is taken over all adversaries $(\Adv{1},
\Adv{2}, \Adv{3})$ running in total time \timeBound, with
at most \hashOracleQueries queries to $\hash$, 
and making at most \forwardOracleQueries oracle queries to
$\authentication(\cdot, \sMask)$ from \Adv{3}.
\bigskip

The key distinctions between the prescribed algorithm in
\figref{fig:frameworkSH} and \ExptCC[\PRP, \hash] are that 
(i) the (malicious) $\client[1]$ is permitted to \textit{compute} 
$\cInputAlt, \cMaskAlt$ (in \Adv{3}) and $\prpOutputEmbeddingAlt, 
\prpOutputHashAlt$ (in \Adv{2}), whereas a correct $\client[1]$ simply sets 
$\cInputAlt \gets \cInput$, $\cMaskAlt \gets \cMask$, $\prpOutputEmbeddingAlt \gets 
\prpOutputEmbedding$, and $\prpOutputHashAlt \gets 
\prpOutputHash$; and (ii) the malicious $\client[1]$ is granted oracle access to 
$\authentication(\cdot,\sMask)$ to model its repeated attempts at
\implicitCheckPhase. Note that we omit $\nonce$ in the experiments.
An adversarial $\client[1]$ may attempt to impersonate an honest 
$\client[2]$ by running the 
\implicitCheckPhase with $\server$ on 
multiple guessed nonces before relaying any inputs to the real $\client[2]$.
\addText{
It may also collude with multiple $\client[2]$ to obtain information
stored for other inputs.}
However, both $\embedKey$ and $\sMask$ are freshly and independently 
sampled in every \explicitCheckPhase. Consequently, varying the nonce 
$\nonce$ does not provide $\client[1]$ with any additional leverage to 
probe other entries in the allowlist. The only substantive attack surface 
is therefore the ability to guess an unknown authentication token 
$\prfOutput$ by issuing queries to the non-programmable random 
oracle $\hash$. We show that the advantage of a malicious $\client[1]$ 
in compromising a single \primitiveNameShort session is negligible; hence, 
the probability of successfully attacking any aggregate of sessions is also 
negligible.

\begin{prop}
	\label{prop:framework}
	If $\hash$ is a random oracle, and $\embedAndMap, \testAndCommit, 
   	\authentication$ are secure and output correctly, then
	\begin{align*}
		& \Advantage{\DCOPRFsecdef}{\PRP, \hash}{\timeBound, \hashOracleQueries, \forwardOracleQueries} \\
		& \le \falseAcceptRate{\blockedElmts}{\policyLan}{\threshold}
		+ \frac{2\cdot\setSize{\finiteField}}{\setSize{\PRPKeySpace}}
		+ \frac{\hashOracleQueries+1}{\setSize{\metricSpace}}	
	\end{align*}
	for $\timeBoundAlt = \timeBound + \bigO{1}$.
\end{prop}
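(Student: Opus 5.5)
We bound the success probability of an arbitrary triple $(\Adv{1},\Adv{2},\Adv{3})$ in $\ExptCC[\PRP,\hash]$ by splitting the winning event into three disjoint cases. Each case is charged to one term of the bound. Let $\prfOutput$ be the token stored by $\testAndCommit$ and $\prfOutputAlt$ the value recomputed from $\cInputAlt,\cMaskAlt$.

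\textbf{Case A (honest forwarding).} $\prpOutputEmbeddingAlt=\prpOutputEmbedding$, $\prpOutputHashAlt=\prpOutputHash$ and $(\cInputAlt,\cMaskAlt)=(\cInput,\cMask)$. Then $\PRP[\PRPKey[1]]^{-1}(\prpOutputEmbeddingAlt)=\embed[\embedKey](\cInput)$. A non-$\bot$ token requires $\neg\blocked{\policyLan}{\threshold}(\embed[\embedKey](\cInput))$ with $\cInput\in\blockedElmts$. The input $\cInput$ is fixed by $\Adv{1}$ before $\embedKey\getsr\embedKeySpace$ is drawn, and by security of $\embedAndMap$ (in particular the CRGC privacy of \lemmaref{lem:CRGCf}) nothing about $\embedKey$ reaches $\Adv{1}$ beforehand. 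So this probability is at most $1-\trueRejectRate{\blockedElmts}{\policyLan}{\threshold}=\falseAcceptRate{\blockedElmts}{\policyLan}{\threshold}$, by the definition of the true-reject rate as a minimum over $\blockedElmts$.

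\textbf{Case B (inconsistent inputs to $\testAndCommit$).} $\prpOutputEmbeddingAlt\neq\prpOutputEmbedding$ or $\prpOutputHashAlt\neq\prpOutputHash$. We reduce to the experiment $\ExptPRP$ of \dfnref{dfn:prp}, once for $\PRPKey[1]$ and once for $\PRPKey[2]$; this is the source of the factor $2$. Suppose, say, $\prpOutputEmbeddingAlt\neq\prpOutputEmbedding$. A reduction $\Adv{}$ plays $\Adv{1}$, submits $\embed[\embedKey](\cInput)$ as its first point, and simulates the rest itself: it picks $\embedKey,\sMask,\PRPKey[2]$ and answers $\hash$ by lazy sampling. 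From a winning $(\cInputAlt,\cMaskAlt)$ it outputs $\prpOutputEmbeddingAlt$ together with the predicted preimage $\prfOutputAlt-\PRP[\PRPKey[2]]^{-1}(\prpOutputHashAlt)$, which is exactly $\PRP[\PRPKey[1]]^{-1}(\prpOutputEmbeddingAlt)$.

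We have to phrase this for the inverse permutation. The paper's permutation is affine, $\PRPKeyVec[1]\hadamardProd\genericVec+\PRPKeyVec[2]$, so its inverse is again of this form with uniformly distributed key. Hence \thmref{} of \appref{app:proof_prp} gives conditional predictability at most $|\finiteField|/|\PRPKeySpace|$ (taking $\diffInComps\ge1$). The $\authentication(\cdot,\sMask)$ oracle queries reveal $\cICInput[1]+\sMask\hadamardProd\cICInput[2]$ only to $\server$ and a bit to the client. These bits can be simulated by comparison with the predicted token, which costs at most the same event. The symmetric sub-case handles $\prpOutputHashAlt\neq\prpOutputHash$ with $\PRPKey[2]$. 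Together the two sub-cases contribute $2|\finiteField|/|\PRPKeySpace|$.

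\textbf{Case C (consistent transcript, altered final inputs).} Both outputs are forwarded unchanged but $(\cInputAlt,\cMaskAlt)\neq(\cInput,\cMask)$. Then $\prfOutput=\embed[\embedKey](\cInput)+\sMask\hadamardProd\hash(\cInput,\embed[\embedKey](\cInput),\embedKey,\cMask)$, and a win requires $\prfOutputAlt=\prfOutput$. Since $\sMask$ is uniform and hidden, and $\hash$ is a random oracle, each of the $\hashOracleQueries$ distinct queries $\Adv{3}$ makes on a new point yields a fresh uniform value of $\hash$, and hence of $\prfOutputAlt$. Each such query hits $\prfOutput$ with probability $1/|\metricSpace|$, and one extra unqueried evaluation in the final check gives the additional $+1$. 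This case also covers the collision $\hash(\cInputAlt,\dots)=\hash(\cInput,\dots)$.

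The main obstacle is Case B. There we must rigorously simulate $\testAndCommit$'s output and the $\authentication$ oracle without $\PRPKey[1]$, and justify applying one-time unpredictability to $\PRP^{-1}$ rather than to $\PRP$. If this becomes awkward, the fallback is to argue directly from the affine form: conditioned on the single revealed point, the key components are uniform on a line, so the value at any other point is uniform over $\finiteField$ coordinatewise. Summing the three cases, and noting that the reductions run in time $\timeBound+\bigO{1}$, gives the stated bound.
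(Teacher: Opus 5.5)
\textbf{The gap is in Case~B.} Your Cases~A and~C match the paper's $\falseAcceptRate{\blockedElmts}{\policyLan}{\threshold}$ and random-oracle terms. The core of your Case~B reduction is also sound: a win reveals $\PRP[\PRPKey[1]]^{-1}(\prpOutputEmbeddingAlt)$ at a point $\prpOutputEmbeddingAlt\neq\prpOutputEmbedding$. What fails is the claim that the $\authentication(\cdot,\sMask)$ answers can be simulated at a cost of ``at most the same event.''

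While $\Adv{3}$ runs, your reduction does not know $\prfOutput$, because $\prfOutput$ contains exactly the value $\PRP[\PRPKey[1]]^{-1}(\prpOutputEmbeddingAlt)$ it is trying to predict. No predicted token exists yet; that only comes from the final pair. So the reduction can only answer $0$. A query that should have been answered $1$ is a correct guess of $\prfOutput$. However, the game of \dfnref{dfn:prp} accepts a single output, so the reduction must guess which of the $\forwardOracleQueries+1$ candidates to submit. You therefore get $2(\forwardOracleQueries+1)\setSize{\finiteField}/\setSize{\PRPKeySpace}$, not the stated term.

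This is not slack in the analysis. Suppose $\Adv{2}$ alters one coordinate $j$ of $\prpOutputEmbedding$ and leaves $\prpOutputHash$ alone. Then $\prfOutput$ differs from $\embed[\embedKey](\cInput)+\sMask\hadamardProd h^{*}$, with $h^{*}=\hash(\cInput,\embed[\embedKey](\cInput),\embedKey,\cMask)$, only in coordinate $j$. $\Adv{3}$ knows both vectors. It can submit $\cICInput[2]=h^{*}$ and $\cICInput[1]$ equal to $\embed[\embedKey](\cInput)$ with its $j$-th coordinate shifted by a guessed amount, testing one candidate per query. So your all-zero simulation is wrong with probability about $\forwardOracleQueries/(\setSize{\finiteField}-1)$. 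Your affine-form fallback settles the inversion worry but not this one, since the union over queries remains.

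\textbf{The fix is essentially the paper's decomposition:} split on the final pair rather than on the transcript. The paper separates $\cInputAlt=\cInput$ from $\cInputAlt\neq\cInput$.

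If $(\cInputAlt,\cMaskAlt)=(\cInput,\cMask)$ and the transcript was altered, a win means $\PRP[\PRPKey[1]]^{-1}(\prpOutputEmbeddingAlt)+\PRP[\PRPKey[2]]^{-1}(\prpOutputHashAlt)=\embed[\embedKey](\cInput)+\sMask\hadamardProd h^{*}$. This event is fixed once $\Adv{2}$ outputs. Your reduction can submit $\embed[\embedKey](\cInput)+\sMask\hadamardProd h^{*}-\PRP[\PRPKey[2]]^{-1}(\prpOutputHashAlt)$ at that point, never running $\Adv{3}$ or simulating $\authentication$. No inversion is needed either: this is a forward prediction $\PRP[\PRPKey[1]](y)=\prpOutputEmbeddingAlt$ at a point $y\neq\embed[\embedKey](\cInput)$. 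The case $\prpOutputHashAlt\neq\prpOutputHash$ is symmetric.

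If the final pair differs from $(\cInput,\cMask)$, a win requires the fresh random-oracle value at the new point to hit a target fixed by $\prfOutput$, $\sMask$ and $\embed[\embedKey](\cInputAlt)$. Each oracle point is then one chance of probability $1/\setSize{\metricSpace}$. This assumes $\sMask$ has no zero coordinate, as you and the paper both tacitly do. It holds whatever the $\authentication$ answers reveal and whether or not the transcript was consistent. So the altered-final-pair part of your Case~B belongs with Case~C in the $(\hashOracleQueries+1)/\setSize{\metricSpace}$ term. After this regrouping your argument gives the stated bound.
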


\begin{proof}
	
	Let $(\Adv{1}, \Adv{2}, \Adv{3})$ be a
	\primitiveNameShort adversary that runs in time \timeBound and makes
	at most \hashOracleQueries queries to $\hash$ oracle and 
	\forwardOracleQueries queries to $\authentication$ oracles.
	
	There are 2 scenarios for $\cInputAlt$ from $\Adv{3}$: (i) $\cInputAlt
 	= \cInput$, which means that (malicious) $\client[1]$ has a valid authentication 
	token for a blocked input stored at \server; (ii) $\cInputAlt \ne 
	\cInput$, which indicates that $\client[1]$ is submitting inconsistent
	inputs to \explicitCheckPhase and \implicitCheckPhase.
 
	We start with (i) $\cInputAlt = \cInput$. First note that
	{\small
	\begin{align*}
		\mathrlap{\cprob{\Bigg}{\ExptCC[\PRP, \hash](\Adv{1}, \Adv{2}, \Adv{3}) = 1}{\cInputAlt = \cInput}} \\
		& = \cprob{\Bigg}{\ExptCC[\PRP, \hash](\Adv{1}, \Adv{2}, \Adv{3}) = 1}{\begin{array}{@{}r@{}} \neg\blocked{\policyLan}{\threshold}(\embed[\embedKey](\cInput))\\ \wedge~\cInput\in\blockedElmts\end{array}} \\
		& \hspace{2em} \times \cprob{\big}{\neg\blocked{\policyLan}{\threshold}(\embed[\embedKey](\cInput))}{\cInput\in\blockedElmts} \times \prob{\cInput\in\blockedElmts}\\
		& \hspace{1em} + \cprob{\Bigg}{\ExptCC[\PRP, \hash](\Adv{1}, \Adv{2}, \Adv{3}) = 1}{\begin{array}{@{}r@{}} \blocked{\policyLan}{\threshold}(\embed[\embedKey](\cInput))\\ \wedge~\cInput\in\blockedElmts\end{array}} \\
		& \hspace{2em} \times \cprob{\big}{\blocked{\policyLan}{\threshold}(\embed[\embedKey](\cInput))}{\cInput\in\blockedElmts} \times \prob{\cInput\in\blockedElmts}\\
		& \le \cprob{\big}{\neg\blocked{\policyLan}{\threshold}(\embed[\embedKey](\cInput))}{\cInput\in\blockedElmts} \\
		& \hspace{1em} + \cprob{\Bigg}{\ExptCC[\PRP, \hash](\Adv{1}, \Adv{2}, \Adv{3}) = 1}{\begin{array}{@{}r@{}} \blocked{\policyLan}{\threshold}(\embed[\embedKey](\cInput))\\ \wedge~\cInput\in\blockedElmts\end{array}}
	\end{align*}
	}

The first term is just $\falseAcceptRate{\blockedElmts}{\policyLan}
{\threshold}$, and so for the rest of the proof, we focus on bounding 
the second term. Recall that $\ExptCC[\PRP, \hash] (\Adv{1}, 
\Adv{2}, \Adv{3}) = 1$ implies $\neg\blocked{\policyLan}{\threshold}
(\PRP[\PRPKey[1]]^{-1}(\prpOutputEmbeddingAlt))$, which implies 
$\PRP[\PRPKey[1]]^{-1}(\prpOutputEmbeddingAlt) \neq 
\embed[\embedKey](\cInput)$ because $\blocked{\policyLan}
{\threshold}(\embedESP(\cInput))$. $\Adv{2}$ wins by producing
$\prpOutputEmbeddingAlt$ and $\prpOutputHashAlt$ such that 
$\PRP[\PRPKey[1]]^{-1}(\prpOutputEmbeddingAlt) \linearComb
\PRP[\PRPKey[2]]^{-1}(\prpOutputHashAlt) = \embed[\embedKey]
(\cInput) \linearComb \sMask 
\hadamardProd \hash(\cInput, \embed[\embedKey]
(\cInput), \embedKey) \wedge \prpOutputEmbeddingAlt \ne
\prpOutputEmbedding \wedge \prpOutputHashAlt \ne \prpOutputHash$.
Since $\embedKey$ is kept secret from $\client[1]$ during the
\explicitCheckPhase, $\client[1]$ has to produce 
$\prpOutputEmbeddingAlt$ and $\prpOutputHashAlt$ by querying $\PRP$. 

First, note that $\Adv{1}$ gets one invocation to $\PRP[\PRPKey[1]]
(\cdot)$ and one invocation to $\PRP[\PRPKey[2]](\cdot)$ (both via 
\embedAndMap) to produce $\prpOutputEmbedding$ and 
$\prpOutputHash$, respectively, though $\Adv{2}$ generates 
$\prpOutputEmbeddingAlt$ and $\prpOutputHashAlt$ that are 
passed to $\PRP[\PRPKey[1]]^{-1}(\cdot)$ and 
$\PRP[\PRPKey[2]]^{-1}(\cdot)$. Thus, \Adv{2}'s advantage in 
producing $\PRP[\PRPKey[1]]^{-1}(\prpOutputEmbeddingAlt)$ from
$\prpOutputEmbedding, \prpOutputHash$ is bounded by
$\Advantage{\PRPsecdef}{\PRP}{\timeBoundAlt} = \frac{2\cdot\setSize{\genericFnRange}}{\setSize
{\PRPKeySpace}}$ since $\PRP$ is an unpredictable function. The keys 
\PRPKey[1] and \PRPKey[2] are chosen independently, and if 
$\PRPKey[2] \neq \PRPKey[1]$, \prpOutputHash does not provide any 
information about $\PRP[\PRPKey[1]]^{-1}(\prpOutputEmbeddingAlt)$. 
Due to the random key selection, the probability that $\PRPKey[2] = 
\PRPKey[1]$ is $\frac{\setSize{\finiteField}}{\setSize{\PRPKeySpace}}$.

We now consider the case where $\cInputAlt \neq \cInput$. In this scenario, 
a malicious $\client[1]$ first stores an authentication token $\prfOutput$ 
corresponding to a benign input $\cInput$ at the $\server$, and later attempts 
to find a colliding preimage $\cInputAlt$, where $\cInputAlt \in \blockedElmts$. 
More precisely, the security requirement is that 
$\prfOutput$ is binding in the sense that it is computationally infeasible
to come up with a pair $\cInputAlt, \cMaskAlt$ such that $\cInputAlt \ne
\cInput \wedge \cMaskAlt \ne \cMask \wedge  \embed[\embedKey](\cInputAlt) 
\linearComb \sMask \hadamardProd \hash(\cInputAlt, \embed[\embedKey]
(\cInputAlt), \embedKey, \cMaskAlt) = \prfOutput$.
That is, no efficient adversary should be able to produce 
a different input–mask pair that maps to the same authentication token.
There are two cases:
\begin{enumerate}

\item When $\embed[\embedKey](\cInputAlt) \ne \embed[\embedKey](\cInput)$, 
the probability is $\frac{1}{\setSize{\metricSpace}}$ since $\prfOutput$ is not
known and producing such a pair of $\cInputAlt, \cMaskAlt$ is equivalent to 
guessing $\prfOutput$ over the space $\metricSpace$. The probability does
not enhance with the number of oracle queries given to either $\hash$ or 
$\authentication$.

\item When $\embed[\embedKey](\cInputAlt) = \embed[\embedKey](\cInput)$, 
it means that there is a collision over the embedding function. As $\embedKey$
is given to an adversarial $\client[1]$, it can enumerate over $\universe$ to
find such a collision with probability 1. Then finding a collision over $\prfOutput$
is equivalent to finding the pair $\cInputAlt, \cMaskAlt$ such that $\hash(\cInputAlt, 
\embed[\embedKey](\cInputAlt), \embedKey, \cMaskAlt) = \hash(\cInput, \embed
[\embedKey](\cInput), \embedKey, \cMask)$. As $\hash$ is a random oracle,
such a probability would be $\frac{\hashOracleQueries}{\setSize{\metricSpace}}$.

\end{enumerate}

Summing up all cases above, gives the result: 

\begin{align*}
	\mathrlap{\Advantage{\DCOPRFsecdef}{\PRP, \hash}{\Adv{1}, \Adv{2}, \Adv{3}}}\\
	& = \cprob{\Bigg}{\ExptCC[\PRP, \hash](\Adv{1}, \Adv{2}, \Adv{3}) = 1}{\cInputAlt = \cInput}\\
	& + \cprob{\Bigg}{\ExptCC[\PRP, \hash](\Adv{1}, \Adv{2}, \Adv{3}) = 1}{\cInputAlt \ne \cInput}\\
	& \le \falseAcceptRate{\blockedElmts}{\policyLan}{\threshold}
		+ \frac{2\cdot\setSize{\finiteField}}{\setSize{\PRPKeySpace}}		
		+ \frac{\hashOracleQueries+1}{\setSize{\metricSpace}}
\end{align*}

\end{proof}

\subsection{Security against \server}
\label{sec:proof_framework:server}

We prove security for threat model \ref{model:server-hbc} in this
section. Assuming \embedAndMap, \testAndCommit, and 
\authentication are secure, the method available to \server to 
attack the framework in \figref{fig:frameworkSH} is to guess the 
input $\cInput$ of $\client[1]$. As such, the goal of such a \server is to 
cause the following experiment to return $1$, where \server is 
denoted as algorithm $(\AdvS{1}, \AdvS{2})$. In short,
we require that \server cannot distinguish $\client[1]$'s input
$\cInput$ when $\cInput$ is not blocked by the framework.

\smallskip

\begin{minipage}[t]{\columnwidth}
	\begin{tabbing}
	****\=\kill
	Experiment $\ExptS[\hash](\AdvS{1}, \AdvS{2})$ \\
	\> $\langle \Advstate{1}, \cInput[0], \cInput[1], \sMask, \PRPKey[1], \PRPKey[2], \embedKey \rangle \gets \AdvS{1}^{\hash(\cdot)}()$\\
	\> if \= $\blocked{\policyLan}{\threshold}(\embed[\embedKey](\cInput[0]))
		\vee \blocked{\policyLan}{\threshold}(\embed[\embedKey](\cInput[1]))$ \\
	\> \> return 0 \\
	\> $\cMask \getsr \metricSpace$, 
	     $\bit \getsr \bitRange$ \\
	\> $((\prpOutputEmbedding, \prpOutputHash), \cdot) \gets \embedAndMap^{\hash(\cdot)}((\cInput[\bit], \cMask), (\embedKey, \PRPKey[1], \PRPKey[2], \sMask))$ \\
	\> $(1, \prfOutput) \gets \testAndCommit^{\hash(\cdot)}((\prpOutputEmbedding, \prpOutputHash), (\PRPKey[1], \PRPKey[2]))$ \\
	\> $\cICInput[1] \gets \embed[\embedKey](\cInput[\bit]), \cICInput[2] \gets \hash(\cInput[\bit], \embed[\embedKey](\cInput[\bit]), \embedKey, \cMask)$ \\
	\> $\bitAlt \gets \AdvS{2}^{\hash(\cdot), \authentication((\cICInput[1], \cICInput[2]), \cdot)}(\Advstate{1}, \prfOutput)$ \\
	\> if \= $\bitAlt = \bit$ \\
	\> \> then return 1 \\
        \> \> else return 0
	\end{tabbing}
\end{minipage}

\smallskip

\noindent Then,
\begin{align*}
	\Advantage{\Ssecdef}{}{\AdvS{1}, \AdvS{2}}
	& = \prob{\ExptS[\hash](\AdvS{1}, \AdvS{2}) = 1} \\
	\Advantage{\Ssecdef}{}{\timeBound, 
	\forwardOracleQueries}
	& = \max_{\AdvS{1}, \AdvS{2}} \Advantage{\Ssecdef}
	{}{\AdvS{1}, \AdvS{2}}
\end{align*}
where the maximum is taken over all adversaries $(\AdvS{1}, \AdvS{2})$
running in total time \timeBound, with at most \hashOracleQueries
queries to $\hash$, and at most \forwardOracleQueries oracle queries
to $\authentication((\cICInput[1], \cICInput[2]), \cdot)$ from
\AdvS{2}.

\bigskip

To win the experiment $\ExptS[\hash]$, $\AdvS{2}$ needs to
successfully guess $\bit$, meaning that it distinguishes the input
from the protocol transcripts. $\AdvS{1}$ selects $\cInput[0]$ 
and $\cInput[1]$ from \universe according to an application-specific
distribution that need not be uniform. But both have to be 
not blocked since the functionality we design implies that the server 
learns $\{\lanElmt \in \blockList: \distance\left(\embed[\embedKey]
(\cInput)\right) \le \threshold\}$ when $\blocked{\policyLan}{\threshold}
(\embed[\embedKey](\cInput))$. 

\begin{prop}
\label{prop:serverAdv}
If $\hash$ is a random oracle and $\embedAndMap, \testAndCommit,
\authentication$ are secure and output correctly, then
\begin{align*}
\Advantage{\Ssecdef}{\hash}{\timeBound, \hashOracleQueries, 
\forwardOracleQueries}
& \le \frac{1}{2} + \frac{\hashOracleQueries}{\setSize{\metricSpace}}
\end{align*}
for $\timeBoundAlt = \timeBound + \bigO{1}$.
\end{prop}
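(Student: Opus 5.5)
The plan is a standard game-hopping argument that isolates the only place where $\cInput[\bit]$ enters \server's view: the $\hash$-dependent part of the token. Since \embedAndMap, \testAndCommit and \authentication are assumed secure and \server receives no output from \embedAndMap, \server's entire view in $\ExptS[\hash]$ consists of two things. The first is its own chosen values $\embedKey, \sMask, \PRPKey[1], \PRPKey[2]$ (together with its $\hash$-queries). The second is the values $\prfOutput$ from \testAndCommit and $\prfOutputAlt$ from each of the $\forwardOracleQueries$ calls to \authentication. The experiment already discards the case where either $\cInput[0]$ or $\cInput[1]$ is blocked, so \testAndCommit returns $1$ and leaks nothing beyond $\prfOutput$. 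By correctness of the functionalities,
\[
\prfOutput = \embed[\embedKey](\cInput[\bit]) + \sMask \hadamardProd \hash(\cInput[\bit], \embed[\embedKey](\cInput[\bit]), \embedKey, \cMask),
\]
and every $\prfOutputAlt$ returned by \authentication on \server's chosen mask $\sMask'$ equals $\cICInput[1] + \sMask' \hadamardProd \cICInput[2]$ with the same $\cICInput[1] = \embed[\embedKey](\cInput[\bit])$ and $\cICInput[2] = \hash(\cInput[\bit], \embed[\embedKey](\cInput[\bit]), \embedKey, \cMask)$.

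The first game hop replaces $h^{*} \defeq \hash(\cInput[\bit], \embed[\embedKey](\cInput[\bit]), \embedKey, \cMask)$ by an independent uniform vector in $\metricSpace$. The two games differ only if \server ever queries $\hash$ on a point whose last component equals $\cMask$ (and whose other components are consistent). Because $\cMask \getsr \metricSpace$ is sampled by the experiment and appears nowhere else in \server's view, each $\hash$-query hits it with probability at most $1/\setSize{\metricSpace}$. A union bound then gives a difference of at most $\hashOracleQueries/\setSize{\metricSpace}$. This step is where the non-programmability of $\hash$ suffices: we only need fresh inputs to have uniform, independent outputs, never to reprogram anything.

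In the modified game, \server sees $\embed[\embedKey](\cInput[\bit]) + \sMask \hadamardProd u$ and $\embed[\embedKey](\cInput[\bit]) + \sMask' \hadamardProd u$ for the various $\sMask'$ it chooses, where $u$ is uniform. I would then argue these values are independent of $\bit$. This is the main obstacle. With a single value and $\sMask$ having no zero components, $\sMask \hadamardProd u$ is a one-time pad. However, \server chooses both $\sMask$ and the \authentication masks. Taking two queries with masks $\sMask', \sMask''$ and differencing cancels $u$ only through $(\sMask'-\sMask'') \hadamardProd u$, after which \server can solve for $u$ and recover $\embed[\embedKey](\cInput[\bit])$. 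Which $\cInput$ produced it is then decidable whenever $\embed[\embedKey](\cInput[0]) \neq \embed[\embedKey](\cInput[1])$.

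I would first try to handle this by restricting, as the framework intends, the \authentication oracle to the stored $\sMask$ (the honest server inputs its allowlist mask), so that every response equals $\prfOutput$ and adds no information. With this restriction, and noting that a zero component of $\sMask$ occurs with probability at most $\numberOfPointsSim/\setSize{\finiteField}$ (which I would fold into the $\hash$ term or require $\sMask \in (\finiteField\setminus\{0\})^{\numberOfPointsSim}$), the view in the final game is a uniform vector independent of $\bit$. The success probability there is therefore exactly $1/2$, and summing the hop yields $\frac{1}{2} + \frac{\hashOracleQueries}{\setSize{\metricSpace}}$.
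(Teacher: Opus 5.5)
Your argument is the paper's own argument made precise. The paper likewise says that, because $\vec{m}$ is hidden from the server, $\mathsf{H}(w_b, f_{k_f}(w_b), k_f, \vec{m})$ is uniform, so $\vec{\gamma}$ one-time-pads $f_{k_f}(w_b)$. It also says the only residual advantage is the $q_{\mathsf{H}}/|\mathbb{F}^{\theta}|$ chance that an $\mathsf{H}$-query hits the right $\vec{m}$. Your identical-until-bad hop is the rigorous form of those sentences.

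The paper's proof never accounts for the answers of the $\mathcal{F}_{\mathsf{IC}}$ oracle or for how $\vec{s}$ is chosen. Your objection is correct, and the attack is even simpler than you state. In $\mathbf{Expt}^{\mathsf{S}}$ the adversary itself outputs $\vec{s}$ and supplies the server input to $\mathcal{F}_{\mathsf{IC}}$. Choosing $\vec{s}=\vec{0}$, or making one oracle call with mask $\vec{0}$, therefore already yields $f_{k_f}(w_b)$, and hence $b$ whenever $f_{k_f}(w_0)\neq f_{k_f}(w_1)$. So the stated bound is false as literally formalized. It holds only under the honest-but-curious reading you adopt (oracle evaluated on the stored mask, mask with no zero entries), which the paper assumes tacitly. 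Maliciously secure instantiations would not repair this, despite the paper's closing remark, because the ideal $\mathcal{F}_{\mathsf{IC}}$ itself returns $\vec{z}_1+\vec{s}\,'\times\vec{z}_2$ for any $\vec{s}\,'$ the server supplies.

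One correction to your write-up concerns zero entries of $\vec{s}$. The $\theta/|\mathbb{F}|$ probability of a zero entry cannot be folded into $q_{\mathsf{H}}/|\mathbb{F}|^{\theta}$, which is vastly smaller. That probabilistic bound also does not apply at all when the adversary picks $\vec{s}$. Take your other option and require $\vec{s}\in(\mathbb{F}\setminus\{0\})^{\theta}$. Alternatively, if $\vec{s}$ is sampled honestly from $\mathbb{F}^{\theta}$, add a $\theta/|\mathbb{F}|$ term to the bound.
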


\begin{proof}

First recall that $(\AdvS{1},
\AdvS{2})$ receives no output from \embedAndMap and so 
does not learn $\prpOutputEmbedding$ or $\prpOutputHash$ 
individually. \testAndCommit outputs to server only $\prfOutput 
= \PRP[\PRPKey[1]]^{-1}(\prpOutputEmbedding) \linearComb
\PRP[\PRPKey[2]]^{-1}(\prpOutputHash) = \embed[\embedKey]
(\cInput) \linearComb \sMask \hadamardProd \hash(\cInput, \embed
[\embedKey](\cInput), \embedKey, \cMask)$. 
$\hash$ is a random oracle that maps $\{0,1\}^{\ast}$ into $\metricSpaceESP$,
where $\finiteField$ is a finite field. $\prfOutput \in \finiteField^{\numberOfPointsSim}$ 
is a vector of field elements, so the addition is a linear combination
in the finite field, which is a secure one-time-pad with perfect secrecy.
$\prfOutput$ carries no information about $\cInput$ or $\embed[\embedKey](\cInput)$
since $\cMask$ is kept secret from \server.
No matter the properties of $\hash$, the $(\AdvS{1}, 
\AdvS{2})$ adversary can win the above experiment with
probability at least $\frac{\hashOracleQueries}{\setSize{\metricSpace}}$, simply by
invoking $\hash(\cInput, \embed[\embedKey](\cInputAlt),
\embedKey, \cMaskAlt)$ and
guessing elements $\cInput$ of \universe and the secret $\cMask \in \metricSpace$.
\propref{prop:serverAdv} says that if $\hash$ is a random oracle, then
this is the best that $(\AdvS{1}, \AdvS{2})$ can do.
As such, the probability the adversary 
succeeds is as stated in the proposition.

\addText{
The experiment $\ExptS$ accounts for \server maliciously trying to learn $\cInput$.
However, we do not claim this because doing so would require $\embedAndMap, 
\testAndCommit, \authentication$ to be maliciously secure and they are not in our 
implementation due to efficiency reasons. In other words, through instantiating
$\embedAndMap, \testAndCommit, \authentication$ with maliciously secure
primitives, \primitiveNameShort can be secure against a deviating \server
trying to learn $\client[1]$ and $\client[2]$'s inputs.
}

\end{proof}

\section{Full Proofs of \secref{sec:app}}
\label{app:proofs_pake}

\addText{
In this appendix, we prove our instantiation of the framework.
}

\begin{theorem}[\cite{Chakraborti2023:DAPSI}]
	\label{thm:dapsi}
	Let \finiteField be a finite field of order \fieldOrder. Fix polynomials $\genericPoly[1]$ and $\genericPoly[2]$ of degree $\degreeOfKey{1}$, and an arbitrary set of points $\setOfEvalPts \gets \{\evalPt[1], \dots, \evalPt[\numberOfPointsSim]\}$. Let $\simRandomPoly$ and $\simRandomPolyAlt$ be random polynomials of degree $\geq \degreeOfKey{1}$. Also, let $\genericPolyAlt \gets \prod\limits_{x_i \getsr \finiteField} (x - x_i)$. Then, when $\degreeOfPoly(\gcd(\genericPoly[1], \genericPoly[2])) < \degreeOfKey{1} - \frac{\threshold}{2}$ and $\numberOfPointsSim = \numberOfPointsSimDefn$,

	\begin{multline}
		\sum\limits_{Y \gets \finiteField^{\numberOfPointsSim}} | \prob{  \langle \simRandomPolyEval[\evalPt[\ptIdx]] \cdot \genericPolyEval[\evalPt[\ptIdx]][1] + \simRandomPolyAltEval[\evalPt[\ptIdx]] \cdot \genericPolyEval[\evalPt[\ptIdx]][2] \rangle_{\ptIdx = 1}^{\numberOfPointsSim} = Y} \\ 
		-   \prob{  \langle \simRandomPolyEval[\evalPt[\ptIdx]] \cdot \genericPolyEval[\evalPt[\ptIdx]][1] + \simRandomPolyAltEval[\evalPt[\ptIdx]] \cdot \genericPolyAltEval[\evalPt[\ptIdx]] \rangle_{\ptIdx = 1}^{\numberOfPointsSim} = Y} | \leq \frac{1}{\setSize{\finiteField}}
	\end{multline}	
\end{theorem}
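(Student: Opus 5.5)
The plan is to show that both distributions in the statement are within $\frac{1}{\setSize{\finiteField}}$ of the uniform distribution on $\finiteField^{\numberOfPointsSim}$, and to conclude by the triangle inequality (possibly at the cost of a factor $2$ absorbed into the bound). Write $g \assign \gcd(\genericPoly[1], \genericPoly[2])$ and $g_0 \assign \degreeOfPoly(g)$. I take $\degreeOfPoly(\simRandomPoly) = \degreeOfPoly(\simRandomPolyAlt) = \degreeOfKey{1}$, which is the minimum the statement allows; larger degrees only add randomness and should be handled the same way.

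First, by \lemmaref{lemma:kissener_uniformly_random}, $\simRandomPoly\,\genericPoly[1] + \simRandomPolyAlt\,\genericPoly[2] = g \cdot u$ with $u$ a uniformly random polynomial of degree $2\degreeOfKey{1} - g_0$. Since both sides have degree at most $2\degreeOfKey{1}$, the polynomial $g\cdot u$ is uniform on the subspace $V_g \assign g \cdot \finiteField[x]_{\le 2\degreeOfKey{1}-g_0}$. The evaluation vector $Y$ is the image of a uniform element of $V_g$ under the linear map $\mathrm{ev}: h \mapsto \langle h(\evalPt[\ptIdx])\rangle_{\ptIdx}$. Hence $Y$ is exactly uniform on $\mathrm{ev}(V_g)$, and it is uniform on $\finiteField^{\numberOfPointsSim}$ iff $\mathrm{ev}$ restricted to $V_g$ is surjective.

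Second, I compute this rank. The kernel of $\mathrm{ev}$ on $\finiteField[x]_{\le 2\degreeOfKey{1}}$ is $Z \cdot \finiteField[x]_{\le 2\degreeOfKey{1}-\numberOfPointsSim}$, where $Z = \prod_{\ptIdx}(x-\evalPt[\ptIdx])$. I will assume the points of $\setOfEvalPts$ are disjoint from the image of $\mapFn$, so that $\gcd(g, Z)=1$. Then $V_g \cap \ker = gZ\cdot \finiteField[x]_{\le 2\degreeOfKey{1}-\numberOfPointsSim-g_0}$, and
\[
\dim \mathrm{ev}(V_g) = (2\degreeOfKey{1}-g_0+1) - \max\bigl(0,\, 2\degreeOfKey{1}-\numberOfPointsSim-g_0+1\bigr).
\]
This equals $\numberOfPointsSim$ exactly when $g_0 \le 2\degreeOfKey{1}+1-\numberOfPointsSim$.

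Plugging in $\numberOfPointsSim = \numberOfPointsSimDefn$, the condition becomes $g_0 \le \frac{\threshold}{2} - 1$. This is the step I expect to be the main obstacle: the hypothesis is stated as $g_0 < \degreeOfKey{1} - \frac{\threshold}{2}$, so I must reconcile the paper's parametrization with this rank condition. Most likely $\threshold$ in the DAPSI parametrization counts shared roots rather than the $\symmUR$ distance, in which case the two conditions coincide after substitution. I would first re-derive the substitution carefully and, if needed, restate the hypothesis in the form "$g_0 \le 2\degreeOfKey{1}+1-\numberOfPointsSim$", which is the form the argument actually uses.

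Third, for the second distribution, with $\genericPolyAlt$ having uniformly random roots, the same argument applies with $g' = \gcd(\genericPoly[1], \genericPolyAlt)$. Conditioned on $g'=1$, the vector is exactly uniform. The event $g' \ne 1$ requires a random root of $\genericPolyAlt$ to hit one of the $\le\degreeOfKey{1}$ roots of $\genericPoly[1]$; I would bound its probability using the fact, cited in the proof of \propref{prop:symmURFromRand}, that $\prob{\gcd(\genericPoly, \randomPoly)\ne 1}\le 1/\setSize{\finiteField}$. The same fact covers the event that $\genericPolyAlt$ hits a point of $\setOfEvalPts$, should that not already be excluded. Summing the $\ell_1$ differences then gives at most $\frac{1}{\setSize{\finiteField}}$, up to the constant from the union bound.
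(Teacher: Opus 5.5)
\textbf{Main gap: the obstacle you flag is fatal for the statement as written.} The paper gives no proof of this theorem; it is quoted from Chakraborti et al., so there is no in-paper argument to compare against. Judged on its own, your linear-algebra setup is right, but the obstacle you flag cannot be ``reconciled'': your own rank computation shows the statement is false as written.

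With $\theta = 2\delta - T/2 + 2$ one has $2\delta+1-\theta = T/2-1$. For $g_0 > T/2-1$ the evaluation map is injective on $V_g$, so the first vector is uniform on a subspace of dimension $2\delta-g_0+1<\theta$. The second vector, by contrast, is exactly uniform on $\mathbb{F}^{\theta}$ whenever $\gcd(p_1,p')=1$. Hence for every $g_0$ with $T/2\le g_0<\delta-T/2$ the sum in the statement is about $2(1-1/|\mathbb{F}|)$. This holds when $\bar r,\bar r'$ have degree $\delta$, which the statement allows and the protocol uses. Concretely, $\delta=4$, $T=2$ gives $\theta=9=2\delta+1$, the hypothesis allows $g_0=1$, and the first vector lies in a fixed $8$-dimensional subspace of $\mathbb{F}^9$.

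No reading of $T$ repairs this. If $T/2$ counts required shared roots, then $\theta$ matches the condition $g_0<T/2$, but that differs from the stated $g_0<\delta-T/2$ unless $T=\delta$. So your fallback is not a re-parametrization but a correction of the theorem, and you should present it that way. Either keep the hypothesis (with $T$ the $\mathsf{SUR}$ threshold, as elsewhere in the paper) and set $\theta=\delta+T/2+2$, so that $2\delta+1-\theta=\delta-T/2-1$ and your condition becomes exactly $g_0<\delta-T/2$. Or keep $\theta$ and assume $g_0<T/2$. Likewise, your assumption that the evaluation points avoid the roots of $g$ is necessary, not cosmetic: a common root at some $x_k$ pins that coordinate to $0$. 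So the ``arbitrary'' points in the statement must also be restricted.

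\textbf{Second gap: the error term.} The fact $\Pr[\gcd(p,r)\ne 1]\le 1/|\mathbb{F}|$ quoted in the proof of \propref{prop:symmURFromRand} concerns $r$ with uniformly random coefficients, and even there the honest bound is $\delta/|\mathbb{F}|$. For $p'$ built from $\delta$ uniformly random roots, the chance that one of them hits one of the $\delta$ roots of $p_1$ is about $\delta^2/|\mathbb{F}|$. That is a $\delta$-dependent factor rather than a constant, and your triangle inequality doubles it.

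A tighter route comes from your own computation. Under the corrected hypothesis the first vector is exactly uniform. The second is exactly uniform unless $\deg\gcd(p_1,p')>2\delta+1-\theta$, so the $\ell_1$ distance is at most twice the probability of that event. That is below $1/|\mathbb{F}|$ when $2\delta+1-\theta\ge 1$ and $|\mathbb{F}|\ge\delta^4$. When $\theta=2\delta+1$, however, a single collision already gives roughly $2\delta^2/|\mathbb{F}|$, so in general the bound must be stated as $O(\delta^2/|\mathbb{F}|)$ rather than $1/|\mathbb{F}|$.
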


\begin{theorem*}
	Assuming that there is a protocol that securely realizes a garbled
	circuit, \protocolEmbed securely realizes \embedAndMap against a
	semi-honest server $\server$, and a malicious client $\client[1]$.
\end{theorem*}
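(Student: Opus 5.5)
The plan is a simulation-based argument in the hybrid model where the garbled circuit evaluation (together with the oblivious transfers for $\client[1]$'s input wires) is treated as an ideal two-party secure function evaluation. We handle the two corruption cases separately. In both, we first observe that \protocolEmbed is exactly one evaluation of a circuit computing $(\cInput,\cMask),(\embedKey,\sMask,\PRPKey[1],\PRPKey[2]) \mapsto \bigl(\PRP[\PRPKey[1]](\embed[\embedKey](\cInput)),\ \PRP[\PRPKey[2]](\sMask\hadamardProd\hash(\cInput,\embed[\embedKey](\cInput),\embedKey,\cMask))\bigr)$. Here $\hash$ is hard-wired as a fixed circuit, as justified in \secref{sec:app:embed-and-map:npro}, so no programming is needed. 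Its output goes to $\client[1]$ only. This is precisely the input/output behaviour of \embedAndMap in \figref{fig:frameworkIF}.

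For the semi-honest \server, we build a simulator that receives \server's inputs $(\embedKey,\sMask,\PRPKey[1],\PRPKey[2])$ and no output. It runs the honest garbler code and answers the OT sender role. The argument has two parts:
\begin{itemize}
\item \server's view consists only of its own garbling randomness and its OT-sender view.
\item By the receiver privacy of OT, the OT-sender view is independent of $\client[1]$'s choice bits $(\cInput,\cMask)$.
\end{itemize}
The reusable CRGC sections are evaluated locally by $\client[1]$ and send nothing back, so they add nothing to \server's view. The simulated view is therefore indistinguishable from the real one, with distinguishing advantage bounded by the OT advantage.

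For the malicious $\client[1]$, the simulator acts as the evaluator's counterpart.
\begin{enumerate}
\item It extracts $\client[1]$'s effective input $(\cInput,\cMask)$ from the choice bits in the (maliciously secure) OT calls that precede the non-reusable Yao section.
\item It sends $(\cInput,\cMask)$ to \embedAndMap and obtains $(\prpOutputEmbedding,\prpOutputHash)$.
\item It produces a simulated garbled circuit that decodes to that output.
\end{enumerate}
Yao's privacy covers the non-reusable sections (\cite[Lemma~4]{CRGC}), and \lemmaref{lem:CRGCf} covers the reusable sections: a PRGC leaks no more about the garbler's input $\embedKey$ than Yao does, so the simulated obfuscated tables are distributed identically to the real ones. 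Malicious behaviour by the evaluator cannot alter the garbled tables, because \server is the garbler. Hence the only deviations available to $\client[1]$ are the following.
\begin{itemize}
\item \emph{Choice of OT inputs.} This is captured by extraction.
\item \emph{Feeding inconsistent values into different segments.} We treat this as substitution of another input $\cInputAlt$, exactly as argued in \secref{sec:app:embed-and-map:reusable}. Its effect on the downstream framework is already accounted for in \propref{prop:framework}.
\end{itemize}
Because the non-reusable section is garbled once per session, $\client[1]$ obtains outputs on a single input, which matches the single ideal call.

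The main obstacle is the malicious-client case across the reusable/non-reusable boundary. There we must argue both that an input can be extracted and that labels carried over from the CRGC sections cannot be used to obtain $\embed[\embedKey]$ on a second input. I would handle this by making the extraction point the OT step of the first non-reusable section: every path to the output passes through it, so the outputs are determined by the extracted labels. Any mismatch with the bits fed into earlier reusable sections can then be absorbed into the effective input, since only the final decoded outputs reach $\client[1]$.
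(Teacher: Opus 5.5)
Your opening paragraph is, in substance, the paper's entire proof. The paper notes that all of \embedAndMap is computed by a single garbled circuit with semi-honest garbler \server and malicious evaluator \client[1]. It then concludes security in the \otFunc-hybrid model, with the theorem's hypothesis supplying the ideal garbled-circuit functionality \otFunc. Once the evaluation is an ideal SFE whose input/output behaviour is exactly that of \embedAndMap, both simulators are trivial, and the paper stops there.

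The concrete layer you add underneath does not go through. It also sits awkwardly with the hybrid model you announced: if the circuit evaluation is an ideal SFE, there are no OT choice bits to extract from.

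The failing step is the malicious-client extraction at the first non-reusable section. The bits \client[1] feeds into those OTs are not $(\cInput,\cMask)$. They are intermediate wire values, up to the garbler's flips, that \client[1] computed locally in the reusable CRGC sections. Since the server-side cost is independent of $\setSize{\cInput}$, these values are a fixed-size compression of an arbitrary-length file (the TLSH state, the chaining state of the in-circuit \hash). The simulator, however, must hand \embedAndMap a genuine pair $({\cInput}^{*},\cMask)$, and \embedAndMap applies both $\embed[\embedKey]$ and \hash to that same ${\cInput}^{*}$.

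A malicious evaluator can submit an intermediate state that is the image of no file, or whose embedding part and hash part come from different files. Then no consistent effective input exists. Even when one does exist, recovering a preimage of a compressed state is infeasible. So absorbing the mismatch into the effective input is precisely what cannot be done. Appealing to \propref{prop:framework} changes the claim rather than proving it, since that experiment itself runs \embedAndMap on a consistent $(\cInput,\cMask)$. Separately, \lemmaref{lem:CRGCf} is stated for a semi-honest evaluator and gives garbler-input indistinguishability, not identically distributed simulated tables.

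You have two ways forward. Keep the argument at the hybrid level and let the hypothesis carry this weight, as the paper does; its own treatment of inconsistent segment inputs in \secref{sec:app:embed-and-map:reusable} is likewise only informal. Or supply a genuinely new ingredient, such as extractability of the reusable sections' inputs, or a weakened functionality that accepts arbitrary intermediate states.
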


\begin{proof}
	The entire functionality is implemented in a garbled circuit 
construction that is generated by the semi-honest \server. The 
party $\client[1]$ playing the role of the evaluator is malicious, however 
it does not get outputs from the circuit unless it correctly evaluates 
it. Assuming that there is a construction realizing \otFunc with a
semi-honest garbler and a malicious evaluator, \protocolEmbedESP 
can be proved to be secure in the \otFunc-hybrid model. 
\end{proof}

\begin{theorem*}
	Assuming that there are protocols that securely realize \oleFunc and \npaFunc, 
	\protocolCommit securely realized \testAndCommit against
	a semi-honest server $\server$, and a malicious client $\client[1]$ in the $(\oleFunc, \npaFunc)$-hybrid model. 

\end{theorem*}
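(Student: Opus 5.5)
We prove the statement by the standard simulation paradigm in the $(\oleFunc, \npaFunc)$-hybrid model, handling the two corruption cases separately. In each case we build a simulator that interacts with the ideal functionality \testAndCommit, plays the role of the hybrid functionalities towards the corrupted party, and produces a view indistinguishable from the real execution of \protocolCommit (\figref{fig:esp_tc}). Because every cross-party interaction in \protocolCommit goes through an \npaFunc or \oleFunc call, each party's view consists only of its own randomness and the outputs of those calls. It therefore suffices to show that the joint distribution of these outputs can be sampled from the ideal output alone.

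\emph{Malicious $\client[1]$.} The simulator extracts $\client[1]$'s inputs from its calls to the hybrid functionalities. From the \npaFunc calls of Step~\ref{esp:tc:compute_comb_poly} it reads $\clientsRandomPolyEvalVec[\setIdx]$ and the submitted $\prpOutputEmbeddingAlt$. From the \oleFunc calls of Step~\ref{esp:tc:prp_inv_hash} it reads $\prpOutputHashAlt$. It forwards $(\prpOutputEmbeddingAlt, \prpOutputHashAlt)$ to \testAndCommit.

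$\client[1]$ receives nothing from \npaFunc. From the \oleFunc calls in Steps~\ref{esp:tc:remove_client_random}--\ref{esp:tc:prp_inv_hash} it receives values of the form $\tmpVarAltVec[\setIdx] - \clientsRandomPolyEvalVec[\setIdx]\hadamardProd\lElmtVecComp{\setIdx}$ and $(\PRPKeyHashVec[1])^{-1}\hadamardProd\randomPolyCombEvalVec\hadamardProd\prpOutputHashAlt + \tmpVarVec[\setIdx]$. Each of these is masked by the fresh uniform vectors $\tmpVarAltVec[\setIdx], \tmpVarVec[\setIdx]$, so the simulator replaces them with uniform elements of $\metricSpaceESP$. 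The result is perfectly indistinguishable. Finally, the simulator returns the ideal bit (and $\nonce$) obtained from \testAndCommit.

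We still have to check that the real server's decision and token agree with the ideal ones even when $\client[1]$ deviates, for example by sending an inconsistent $\genericVec'$ in Step~\ref{esp:tc:client_comb}. To handle this, we observe that the ideal functionality defines $\prfOutput$ as $\PRP[\PRPKey[1]]^{-1}(\prpOutputEmbeddingAlt) \linearComb \PRP[\PRPKey[2]]^{-1}(\prpOutputHashAlt)$. We then argue that any additive deviation in $\genericVec'$ shifts the real $\authToken$ by a known offset $\Delta$. This is equivalent to substituting a different $\prpOutputHashAlt$, which the simulator can compute and submit instead.

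\emph{Semi-honest \server.} The simulator receives \server's input and either its output $\prfOutput$ or the abort set. It must generate the \npaFunc outputs $\commitVec[\setIdx]$ and the vector $\genericVec'$.

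For non-matching $\lElmtVecComp{\setIdx}$, \thmref{thm:dapsi} with $\numberOfPointsSim = \numberOfPointsSimDefn$ shows that $\serversRandomPolyEvalVec[\setIdx]\hadamardProd\cInputVec + \clientsRandomPolyEvalVec[\setIdx]\hadamardProd\lElmtVecComp{\setIdx}$ is $1/\setSize{\finiteField}$-close to the same expression with $\cInputVec$ replaced by evaluations of a random degree-$\degreeOfKey{1}$ root polynomial $\genericPolyAlt$. So the simulator samples such a polynomial and computes $\commitVec[\setIdx]$ itself. For matching elements, the ideal abort output reveals the relevant near neighbors, and \propref{prop:symmURFromRand} lets the simulator produce a consistent vector. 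Given $\prfOutput$, the simulator sets $\genericVec'$ to the unique value that makes Step~\ref{esp:tc:output_1} yield $\prfOutput$. In the real protocol $\genericVec'$ is also determined this way, since the masks $\tmpVarVec[\setIdx], \tmpVarAltVec[\setIdx]$ are \server's own. A hybrid argument over the $\blockListSize$ blocklist entries gives total distance $\blockListSize/\setSize{\finiteField}$.

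The main obstacle is the malicious-client extraction. We must show that every deviation in the commit phase collapses to an ideal-input substitution, and that the Test phase decision uses the same $\prpOutputEmbeddingAlt$ extracted from \npaFunc. We would first try to write the server's final computation as an affine function of all of $\client[1]$'s hybrid inputs. Any deviation then becomes a modified $\prpOutputHashAlt$, which the ideal functionality tolerates by definition.
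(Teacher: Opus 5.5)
Your two-case structure, your simulation of non-matching entries via \thmref{thm:dapsi}, and your uniform simulation of $\client[1]$'s \oleFunc outputs all match the paper. However, the step you flag as the main obstacle does not go through as proposed.

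Adding $\Delta$ to $\genericVec'$ in \stepref{esp:tc:client_comb} shifts the value $\authToken$ computed in \stepref{esp:tc:output_1} by $\frac{1}{\blockListSize}{\randomPolyCombEvalVec}^{-1}\hadamardProd\Delta$. The substituted input you would have to hand to \testAndCommit is therefore $\prpOutputHashAlt+\frac{1}{\blockListSize}\PRPKeyHashVec[1]\hadamardProd{\randomPolyCombEvalVec}^{-1}\hadamardProd\Delta$. This depends on the honest server's secrets $\PRPKeyHashVec[1]$ and $\randomPolyCombEvalVec$ (the evaluations of $\sum_{\setIdx}\serversRandomPoly[\setIdx]$), and your simulator has neither.

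Worse, suppose $\client[1]$ feeds $-r^{\ast}_{\setIdx,k}\neq -r_{\setIdx,k}$ into \stepref{esp:tc:remove_client_random}, where $r_{\setIdx,k}$ is what it gave \npaFunc. Then $\authToken$ moves by $\frac{1}{\blockListSize}{\randomPolyCombEvalVec}^{-1}\hadamardProd\sum_{\setIdx}(\vec{r}_{\setIdx}-\vec{r}^{\,\ast}_{\setIdx})\hadamardProd\lElmtVecComp{\setIdx}$, even when $\genericVec'$ is the honest sum of the values received. So this is not an additive deviation in $\genericVec'$ at all, and the offset involves the blocklist itself.

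The missing idea, which is exactly what the paper's simulator uses, is that $\randomPolyCombEvalVec$ and $\PRPKeyHashVec[1]$ are uniform and completely hidden from $\client[1]$: everything it receives is padded by $\tmpVarVec[\setIdx],\tmpVarAltVec[\setIdx]$. Hence in every coordinate touched by a deviation, the effective input is uniformly random from $\client[1]$'s point of view. The simulator must therefore \emph{detect} deviations and \emph{resample} the affected coordinates of $\prpOutputHashAlt$ uniformly before querying \testAndCommit, rather than compute a substitution. Detection means comparing the \stepref{esp:tc:remove_client_random} inputs with the $\vec{r}_{\setIdx}$ extracted from \npaFunc, and comparing $\genericVec'$ with the sum of the returned values. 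Relatedly, you leave as an obligation that the real interpolation-based test agrees with the ideal \symmUR test on the extracted $\prpOutputEmbeddingAlt$. That is precisely where the paper invokes \propref{prop:symmURFromRand}, at a cost of $1/\setSize{\finiteField}$ per blocklist entry.

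Your server-side treatment of blocklist entries within the threshold also has a gap. For such an entry $\lElmtVecComp{j}$, the real server interpolates from $\commitVec[j]$ the polynomial $\gcd(\polyFromEmbedding[\cInputVec],\polyFromEmbedding[\lElmtVecComp{j}])$. That tells it exactly which roots, hence which bit positions, are shared, which is strictly more than the near-neighbour set \testAndCommit returns on abort. A simulator that only knows $\lElmtVecComp{j}$ is near cannot produce a vector from which \server interpolates the right common roots. \propref{prop:symmURFromRand} does not construct such a vector; it only says the distance can be read off the randomized combination.

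The paper's simulator instead takes the gcd as given (it is assumed to be on \server's output tape). It builds $\simPolyFromEmbedding[\cInput]$ from the gcd's $d$ roots plus $\degreeOfKey{1}-d$ fresh random ones. It then uses \lemmaref{lemma:kissener_uniformly_random} to show that the real and simulated combinations both equal this gcd times a uniformly random polynomial of the same degree. You need that lemma and that extra leakage.

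By contrast, your commit-phase simulation for the semi-honest server is sound and more careful than the paper's. You program $\genericVec'$ so that \stepref{esp:tc:output_1} reproduces the ideal $\prfOutput$, which is independent of the $\commitVec[\setIdx]$ thanks to the $\hash(\cdot,\cMask)$ mask. The paper instead samples fresh random-oracle outputs without tying the simulated view to the ideal output.
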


\begin{proof}
	We will show that there is a polynomial time simulator that simulates $\client[1]$'s and \server's views.
	
	\smallskip\noindent   
	{\bf When \server is corrupt:} The simulator obtains $\PRPKey[1] \assign \langle \PRPKeyVec[1], \PRPKeyVec[2] \rangle$  and $\PRPKey[2] \assign 
	\langle \PRPKeyHashVec[1], \PRPKeyHashVec[2] \rangle$,  and $\serversRandomPoly[1], \dots, \serversRandomPoly[\blockListSize]$  from \server's  random tape.
	Let $\PRPKeyVec[1] \assign \langle \PRPKeyVec[1][1], \dots, \PRPKeyVec[1][\numberOfPointsSim] \rangle$ and 
	$\PRPKeyVec[2] \assign \langle \PRPKeyVec[2][1], \dots, \PRPKeyVec[2][\numberOfPointsSim] \rangle$. Also, let 
	$\PRPKeyHashVec[1] \assign \langle \PRPKeyHashVec[1][1], \dots, \PRPKeyHashVec[1][\numberOfPointsSim] \rangle$ and 
	$\PRPKeyHashVec[2] \assign \langle \PRPKeyHashVec[2][1], \dots, \PRPKeyHashVec[2][\numberOfPointsSim] \rangle$.
	
	 For $\setIdx \in [1, \blockListSize]$, the simulator obtains $\lanElmt_\setIdx \in \policyLan$ from $\server$'s input tape; Recall \server is assumed to be semi-honest. 
	 Let $\lanElmt_\setIdx = \langle \lElmtVecComp{\setIdx}{1}, \dots, \lElmtVecComp{\setIdx}{\numberOfPointsSim} \rangle$ and let $\polyFromEmbedding[\lanElmt_\setIdx]$ be the polynomial obtained by interpolating the points $\{ (\evalPt[1], \lElmtVecComp{\setIdx}{1}), \dots, (\evalPt[\numberOfPointsSim], \lElmtVecComp{\setIdx}{\numberOfPointsSim}) \}$.

	First consider the case where \server's output tape has $\gcd(\polyFromEmbedding[\cInput], \polyFromEmbedding[\lanElmt_\setIdx])$ for $\setIdx \in [1, \blockListSize]$. There are two cases here 
	
	\begin{enumerate}
		\item \label{case:gcd_not_found} \textit{For $\rootIdx \neq \setIdx$:} When \server's output tape does not have $\gcd(\polyFromEmbedding[\cInput],  \polyFromEmbedding[\lanElmt_\rootIdx])$, then the degree of $\gcd(\polyFromEmbedding[\cInput],  \polyFromEmbedding[\lanElmt_\rootIdx]) < \degreeOfKey{1} - \threshold/2$. In this case, 
		the simulator sets $\simOp{\genericSetVar[\cInput]} \assign \{ \genericRand[1], \dots, \genericRand[\degreeOfKey{1}] \}$, where $\genericRand[1], \dots, \genericRand[\degreeOfKey{1}] \getsr \finiteField$, and the polynomial $\simPolyFromEmbedding[\cInput] \assign \prod\limits_{\genericSetElmt \in \simOp{\genericSetVar[\cInput]}} (x - \genericSetElmt)$. The simulator simulates \npaFunc returning evaluations of 
		$\langle \PRPKeyVec[1][\ptIdx] \cdot \left(   \serversRandomPoly[\rootIdx][\evalPt[\ptIdx]] \cdot \simPolyFromEmbedding[\cInput][\evalPt[\ptIdx]]  + \simClientsRandomPoly[\rootIdx][\evalPt[\ptIdx]]  \cdot \polyFromEmbedding[\lanElmt_\rootIdx][\evalPt[\ptIdx]]  \right) \\
+ \serversRandomPoly[\rootIdx][\evalPt[\ptIdx]] \cdot \PRPKeyVec[2][\ptIdx] \rangle_{\ptIdx = 1}^{\numberOfPointsSim}$ 
		to \server, where $\simClientsRandomPoly[\rootIdx]$ is a random polynomial of degree $\degreeOfKey{1}$. The simulation is indistinguishable from the 
		real protocol execution: $\simClientsRandomPoly[\rootIdx]$ is sampled from the same distribution as $\clientsRandomPoly[\rootIdx]$, and 
		from  \thmref{thm:dapsi}, $\left\langle \serversRandomPoly[\rootIdx][\evalPt[\ptIdx]] \cdot \simPolyFromEmbedding[\cInput][\evalPt[\ptIdx]]  + \simClientsRandomPoly[\rootIdx][\evalPt[\ptIdx]]  \cdot \polyFromEmbedding[\lanElmt_\rootIdx][\evalPt[\ptIdx]]  \right\rangle_{\ptIdx = 1}^{\numberOfPointsSim}$ is indistinguishable from $\left\langle \serversRandomPoly[\rootIdx][\evalPt[\ptIdx]] \cdot \polyFromEmbedding[\cInput] [\evalPt[\ptIdx]] + \clientsRandomPoly[\rootIdx][\evalPt[\ptIdx]] \cdot \polyFromEmbedding[\lanElmt_\rootIdx][\evalPt[\ptIdx]]  \right\rangle_{\ptIdx = 1}^{\numberOfPointsSim} $.

		\item \label{case:gcd_found} \textit{For $\rootIdx = \setIdx$:} When \server's output tape has $\gcd(\polyFromEmbedding[\cInput], \polyFromEmbedding[\lanElmt_\rootIdx])$, let $\genericSetVar[\cInput \cap \lanElmt_\rootIdx]$ be the set of the roots of $\gcd(\polyFromEmbedding[\cInput], \polyFromEmbedding[\lanElmt_\rootIdx])$ and $\setSize{\genericSetVar[\cInput \cap \lanElmt_\rootIdx]} = d$. Then, 
		the simulator computes a set $\simOp{\genericSetVar[\cInput]} \assign \{ \genericRand[1], \dots, \genericRand[\degreeOfKey{1} - d] \} \cup \genericSetVar[\cInput \cap \lanElmt_\rootIdx]$ where $\genericRand[1], \dots, \genericRand[\degreeOfKey{1} - d] \notin \genericSetVar[\cInput \cap \lanElmt_\rootIdx]$. The simulator sets $\simPolyFromEmbedding[\cInput] \assign \prod\limits_{\genericSetElmt \in \simOp{\genericSetVar[\cInput]}} (x - \genericSetElmt)$. The simulator simulates \npaFunc returning evaluations of 
		$\langle \PRPKeyVec[1][\ptIdx] \cdot \left(   \serversRandomPoly[\rootIdx][\evalPt[\ptIdx]] \cdot \simPolyFromEmbedding[\cInput][\evalPt[\ptIdx]]  + \simClientsRandomPoly[\rootIdx][\evalPt[\ptIdx]]  \cdot \polyFromEmbedding[\lanElmt_\rootIdx][\evalPt[\ptIdx]]  \right) \\
+ \serversRandomPoly[\rootIdx][\evalPt[\ptIdx]] \cdot \PRPKeyVec[2][\ptIdx] \rangle_{\ptIdx = 1}^{\numberOfPointsSim}$ 
		to \server, where $\simClientsRandomPoly[\rootIdx]$ is a random polynomial of degree $\degreeOfKey{1}$. Let $\serversRandomPoly[\rootIdx] \cdot \polyFromEmbedding[\cInput]  + \clientsRandomPoly[\rootIdx] \cdot \polyFromEmbedding[\lanElmt_\rootIdx]  = \gcd(\polyFromEmbedding[\cInput], \polyFromEmbedding[\lanElmt_\rootIdx]) \cdot R(x)$ and 
		$\serversRandomPoly[\rootIdx] \cdot \simPolyFromEmbedding[\cInput]  + \simClientsRandomPoly[\rootIdx]  \cdot \polyFromEmbedding[\lanElmt_\rootIdx] = \gcd(\polyFromEmbedding[\cInput], \polyFromEmbedding[\lanElmt_\rootIdx]) \cdot \simOp{R}(x)$ 
		Since $\serversRandomPoly[\rootIdx]$, $\clientsRandomPoly[\rootIdx]$ and $\simClientsRandomPoly[\rootIdx]$ are all random polynomials, 
		due to \lemmaref{lemma:kissener_uniformly_random}, $R(x)$ and  $\simOp{R}(x)$ are both random polynomials of the same degree, and are therefore indistinguishable. Thus,  $\serversRandomPoly[\rootIdx] \cdot \polyFromEmbedding[\cInput]  + \clientsRandomPoly[\rootIdx] \cdot \polyFromEmbedding[\lanElmt_\rootIdx]$ is indistinguishable from 	$\serversRandomPoly[\rootIdx] \cdot \simPolyFromEmbedding[\cInput]  + \simClientsRandomPoly[\rootIdx]  \cdot \polyFromEmbedding[\lanElmt_\rootIdx]$, and consequently the simulation is indistinguishable 
		from the real execution. 
		
	\end{enumerate}
	
	In the case where there is no $\lanElmt_\setIdx$ where \server's output tape has $\gcd({\polyFromEmbedding[\cInput], \polyFromEmbedding[\lanElmt_\setIdx]})$, the simulator follows case \eqnref{case:gcd_not_found} for all $\setIdx \in [1, \blockListSize]$.

	In the commit phase, the simulator implements the random oracle, and sets $\langle \simOp{\hashComp{1}}, \dots, \simOp{\hashComp{\numberOfPointsSim}} \rangle \getsr 
	\metricSpaceESP$. The simulator computes $\langle  \simOp{\prpOutputHashComp{1}}, \dots, \simOp{\prpOutputHashComp{\numberOfPointsSim}}   \rangle \assign \PRP[\PRPKey[2]](\langle \simOp{\hashComp{1}}, \dots, \simOp{\hashComp{\numberOfPointsSim}} \rangle)$, where the simulator gets $\PRPKey[2]$ from the random tape of \server. The simulator simulates the calls to \oleFunc in \stepsref{esp:tc:remove_client_random}{esp:tc:prp_inv_hash} with inputs 
	$\simClientsRandomPoly[\setIdx][\evalPt[\ptIdx]]$ and $\simOp{\prpOutputHashComp{\ptIdx}}$ respectively. The simulator does not need to simulate an output to \server. 
	
	Then, for \stepref{esp:tc:client_comb} the simulator returns 
	$\simOp{\genericVec'} \assign 
	\langle \sum\limits_{\setIdx = 1}^{\blockListSize} \tmpVar_{\setIdx, \ptIdx} + \tmpVar'_{\setIdx, \ptIdx} -  
	\simClientsRandomPoly[\setIdx] \cdot \polyFromEmbedding[\lanElmt_\setIdx][\evalPt[\ptIdx]]+\frac{\randomPolyComb[\evalPt[\ptIdx]]}{\PRPKeyHashVec[1][\ptIdx]} \cdot \simOp{\prpOutputHashComp{\ptIdx}}  \rangle_{\ptIdx = 1}^{\numberOfPointsSim}$ to \server. $\simOp{\genericVec'}$ is indistinguishable from $\genericVec'$ as $\simClientsRandomPoly[\setIdx]$ is identically distributed 
	to $\clientsRandomPoly[\setIdx]$, and $\langle \simOp{\hashComp{1}}, \dots, \simOp{\hashComp{\numberOfPointsSim}} \rangle$ is indistinguishable from $\langle \hashComp{1}, \dots, \hashComp{\numberOfPointsSim} \rangle$ assuming that $\hash(\cdot)$ is a random oracle.

	\smallskip\noindent
	{\bf When $\client[1]$ is corrupt:} 
	We will prove the protocol secure in the (\oleFunc, \npaFunc)-hybrid model. The simulator does the following
	
	\begin{enumerate}
		\item During \stepref{esp:tc:compute_comb_poly} of \protocolCommitESP, the simulator obtains \genericAdv's inputs
		$\langle \clientsRandomPoly[\setIdx][\evalPt[\ptIdx]] \rangle_{\ptIdx = 1}^{\numberOfPointsSim}$ and $\prpOutputEmbeddingAlt$. The simulator simulates 
		\npaFunc with $\client[1]$'s inputs. There is no output to \genericAdv from this step.  
		
		\item In \stepref{esp:tc:remove_client_random}, the simulator gets 
		$\clientsRandomPoly[\setIdx][\evalPt[\ptIdx]]^{\ast}$ from \genericAdv's input to \oleFunc in  for $\setIdx \in [1, \blockListSize], \ptIdx \in [1, \numberOfPointsSim]$. If $\clientsRandomPoly[\setIdx][\evalPt[\ptIdx]]^{\ast} = \clientsRandomPoly[\setIdx][\evalPt[\ptIdx]]$ (obtained earlier), then the simulator sends $\clientsRandomPoly[\setIdx][\evalPt[\ptIdx]]$ to \oleFunc, obtains $out_{\setIdx, \ptIdx}$ from \oleFunc, and simulates \oleFunc returning $out_{\setIdx, \ptIdx}$ to $\client[1]$. 
		Otherwise, the simulator sets $\simOp{out_{\setIdx, \ptIdx}} \getsr \finiteField$ and simulates \oleFunc returning $\simOp{out_{\setIdx, \ptIdx}}$ to $\client[1]$. The simulator outputs whatever \genericAdv outputs. 
		
		\item In \stepref{esp:tc:prp_inv_hash}, the simulator gets $ \prpOutputHashComp{1}', \dots, \prpOutputHashComp{\numberOfPointsSim}'$ as inputs to \oleFunc.  The simulator simulates returning the output of \oleFunc to $\client[1]$, and outputs whatever \genericAdv outputs.

			\item The simulator gets $\genericVec'$ from \genericAdv in \stepref{esp:tc:client_comb}. The simulator checks whether $\genericVec'$ has been correctly computed as the sum of the values returned to $\client[1]$ in the previous steps. If so, the simulator sets $\prpOutputHash'' \assign \prpOutputHash'$. Otherwise, for each $\rootIdx \in [1, \numberOfPointsSim]$ where $\genericVec'_{\rootIdx}$ has not been correctly computed, $\prpOutputHashComp{\rootIdx}'' \getsr \finiteField$.

		\item The simulator sends to \testAndCommit the inputs $\prpOutputEmbeddingAlt$ and $\prpOutputHash''$. The functionality returns $\prfOutput$ or $\bot$. If the output of \testAndCommit is $\bot$, then send 0 to $\client[1]$. Otherwise, send 1 to $\client[1]$. Output whatever $\genericAdv$ outputs.

%
		
	\end{enumerate}

	The simulation is indistinguishable from the real execution of the protocol. In \stepref{esp:tc:rand_poly_C}, 
	$out_{\setIdx, \ptIdx}$ and $\simOp{out_{\setIdx, \ptIdx}}$ are indistinguishable. This is because in the real execution of $\protocolCommitESP$, $out_{\setIdx, \ptIdx} = \tmpVar'_{\setIdx, \ptIdx} - 
	\clientsRandomPoly[\setIdx][\evalPt[\ptIdx]] \cdot \polyFromEmbedding[\lanElmt_\setIdx][\evalPt[\ptIdx]] \getsr \finiteField$, when $\tmpVar'_{\setIdx, \ptIdx} \getsr \finiteField$, which is indistinguishable from $\simOp{out_{\setIdx, \ptIdx}} \getsr \finiteField$. In \stepref{esp:tc:rand_poly_S},, the simulator simply simulates 
	\oleFunc using \genericAdv's input and thus follows the protocol exactly.

	In \stepref{esp:tc:compute_vec},  first note that the ideal functionality outputs $\bot$ when $\symmURDist{\embed[\embedKey](\cInput)}{\lanElmt\ } \le \threshold$, which equivalent to checking that the degree of $\gcd(\polyFromEmbedding[\cInput], \polyFromEmbedding[\lanElmt]) \ge \degreeOfKey{1} - 
	\frac{\threshold}{2}$. On the other hand, the simulation outputs $\bot$ when the degree of $\combPoly \assign \gcd(\polyFromEmbedding[\cInput], \serversRandomPoly[\setIdx] \cdot  \polyFromEmbedding[\cInput]  + \clientsRandomPoly[\setIdx]   \cdot \polyFromEmbedding[\lanElmt]) \ge \degreeOfKey{1} - 
	\frac{\threshold}{2}$. From 
	\propref{prop:symmURFromRand}, we have 
	
	\[
	\prob{\symmURDist{\embedESP(\cInput)}{\lanElmt\ } \neq 
	2\cdot \degreeOfKey{1} - \degreeOfPoly(\combPoly)} \le 1/\setSize{\finiteField}
	\]
	
%

	Thus, with overwhelming probability, the output of the simulation to $\client[1]$ and the real protocol execution are identical.
	
	Next, note that when $\genericVec'$ has been correctly computed, the simulator provides $\prpOutputEmbeddingAlt$ and $\prpOutputHashAlt$ to 
\testAndCommit, which returns $\prfOutput \assign \PRP[\PRPKey[1]]^{-1}\left( \prpOutputEmbeddingAlt \right) + \PRP[\PRPKey[2]]^{-1}\left( \prpOutputHashAlt\right)$. In the real execution, $\prfOutput$ is identically computed. When \genericAdv provides inconsistent value of $\genericVec'$, either due to the fact that $\genericVec'$ is not correctly computed as a sum of the results returned before to $\client[1]$, or 
	because $\clientsRandomPoly[\setIdx][\evalPt[\ptIdx]]^{\ast} \neq \clientsRandomPoly[\setIdx][\evalPt[\ptIdx]]$  for some $\ptIdx, \setIdx$, 
	then the simulator sends to \testAndCommit 
$\prpOutputHash'' \neq \prpOutputHashAlt$. Let 
$\prpOutputHash''$ and $\prpOutputHashAlt$ differ in the $\rootIdx$-th component. Then, $\PRP[\PRPKey[1]]^{-1}(\prpOutputEmbeddingAlt) + \PRP[\PRPKey[2]]^{-1}(\prpOutputHashAlt)$ and $\PRP[\PRPKey[1]]^{-1}(\prpOutputEmbeddingAlt) + \PRP[\PRPKey[2]]^{-1}(\prpOutputHash'')$ also differ in the $\rootIdx$-component, and the value of the $\rootIdx$-th component of $\PRP[\PRPKey[2]]^{-1}(\prpOutputHash'')$ is uniformly random to $\client[1]$. Thus, the value $\prfOutput$ in the real protocol is indistinguishable from the output of the simulation. 


\end{proof}

\begin{theorem*}
	Assuming that there is a protocol to securely realize \oleFunc,
	\protocolValidate securely realizes \authentication against a
	semi-honest server \server and a malicious 
	client $\client[1]$.
\end{theorem*}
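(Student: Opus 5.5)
We work in the \oleFunc-hybrid model and build a simulator for each corrupted party, as was done for \protocolCommit. \protocolValidate makes only $\numberOfPointsSim$ independent calls to \oleFunc and then a local comparison, so each simulator only has to reproduce the \oleFunc outputs and the final bit.

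\smallskip\noindent\textbf{Corrupt (semi-honest) \server.} The simulator receives \server's inputs $\sMask, \prfOutput$ and the ideal output of \authentication. The view of \server in \protocolValidate consists of the \oleFunc outputs $\cICInputOneComp{\ptIdx} + \sMaskComp{\ptIdx}\cdot\cICInputTwoComp{\ptIdx}$ for $\ptIdx \in \nats[\numberOfPointsSim]$, that is, exactly the vector $\prfOutputAlt$. The functionality \authentication hands this vector to \server. The simulator therefore splits the ideal output $\prfOutputAlt$ componentwise and returns $\prfOutputAlt[\ptIdx]$ as the simulated output of the $\ptIdx$-th \oleFunc call. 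It then computes the comparison $\prfOutputAlt \equals \prfOutput$ locally, just as \server would. The simulated view is identical to the real one, so no computational assumption is needed here beyond the security of \oleFunc.

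\smallskip\noindent\textbf{Corrupt (malicious) client.} The client's role is the \oleFunc sender, and it receives nothing from \oleFunc. The simulator extracts the client's inputs $(\cICInputOneComp{\ptIdx}^{\ast}, \cICInputTwoComp{\ptIdx}^{\ast})$ from each \oleFunc call, which it can do because it plays \oleFunc in the hybrid model. It assembles these into $\cICInput[1]^{\ast}, \cICInput[2]^{\ast}$ and submits them to \authentication. It then returns the bit it receives as the message of \stepref{esp:auth:prf}. Since the real output bit is $[\cICInput[1]^{\ast} + \sMask\hadamardProd\cICInput[2]^{\ast} = \prfOutput]$, which is exactly what \authentication computes on the extracted inputs, the two views coincide. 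This holds even when the client uses inconsistent or arbitrary values across calls, because each call is bound by extraction to one fixed input pair.

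\smallskip\noindent\textbf{Expected subtlety.} The main obstacle is not the simulation itself but making sure that the malicious client's possible selective-failure or malformed behaviour is handled. The client might abort in some calls, or give inputs that do not come from honestly computing \embed and \hash. An abort can be simulated by sending $\bot$. Non-honest inputs are allowed by \authentication, and the binding argument of \propref{prop:framework} covers them. The simulator does not need to reason about how the values were computed. I would close by invoking composition with a maliciously secure OLE (e.g.\ \cite{Ghosh2017:OLE}) to move from the hybrid model to the real protocol.
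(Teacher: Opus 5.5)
Your proof is correct and has the same skeleton as the paper's. Both work in the \oleFunc-hybrid model, and for a corrupt client both extract its \oleFunc inputs, forward them to \authentication, and return the resulting bit.

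The substantive difference is the corrupt-server simulator. The paper's simulator returns fresh uniform field elements $r_k$ as the \oleFunc outputs, and appeals to \hash being a random oracle to argue that these are distributed like the token. You instead return, componentwise, the vector $\vec{\gamma}'$ that \authentication already hands to \server. This makes the simulation perfect and removes any dependence on \hash.

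Your choice is the more robust one. The distinguisher sees \server's input $\vec{\gamma}$ together with its view. When the client's inputs match the stored token, the real \oleFunc outputs equal $\vec{\gamma}$ exactly, whereas independent uniform $r_k$ essentially never do. A marginal-distribution argument therefore cannot give indistinguishability of the joint view in exactly the case the protocol is designed for.

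On the client side, the paper also argues that malformed inputs yield output $0$ except with probability $1/\setSize{\finiteField}$. As you observe, this is not needed for the simulation, since \authentication evaluates the same predicate on the extracted inputs. That argument belongs with the binding claim of \propref{prop:framework}.
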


\begin{proof}
	We will show that there is a polynomial time simulator that indistinguishably simulates \server's and $\client[1]$'s views of the protocol in the ideal world simulation.
	
	\smallskip\textbf{When \server is corrupt:}
	It is straightforward to simulate \server's view when interacting 
  	with \oleFunc. Specifically, for each $\ptIdx \in [1, \numberOfPointsSim]$, 
	the simulator simulates \oleFunc with \server's input $\sMask_\ptIdx$ 
	(which it gets from \server's input tape), and outputs a value 
	$\genericRand[\ptIdx] \getsr \finiteField$. The simulation is 
	indistinguishable from the protocol execution if \oleFunc can be 
	simulated since $\server$'s output, $\prfOutput$, in the real protocol 
	execution is identically distributed to the simulated output 
	assuming that $\hash(\cdot)$ is a random oracle.

	
	\textbf{When $\client[1]$ is corrupt:}
	In the protocol, $\client[1]$ receives 1 from \server when the output 
	of \oleFunc is equal to $\prfOutput$, otherwise it receives 0. Since 
	$\prfOutput = \embed[\embedKey](\cInput) + \sMask \times \hash
	(\cInput, \embed[\embedKey](\cInput), \embedKey, \cMask)$ is 
	uniformly distributed in $\metricSpace$, the probability that for 
	some $\cICInput[1] \neq \embed[\embedKey](\cInputAlt) $ and 
	$\cICInput[2] \neq \hash(\cInput, \embed[\embedKey](\cInput), 
	\embedKey, \cMask)$, $\cICInput[1] + \sMask \times \cICInput[2] = 
	\prfOutput$ is $\le 1/\setSize{\finiteField}$. Therefore, with 
	overwhelming probability, when $\client[1]$ provides incorrect inputs 
	to \oleFunc in the real protocol execution, the output to $\client[1]$ 
	in the real protocol execution is 0. 

	In the ideal world, the simulator obtains the inputs $\cICInput[1]$ and 
	$\cICInput[2]$ to \oleFunc, and sends these values to \authentication. 
	The output to $\client[1]$ is identically distributed in the real world and 
	the ideal world. Thus, $\client[1]$'s view in the simulation is 
	indistinguishable assuming that there is a simulator that can simulate 
	\oleFunc against a malicious sender.

\end{proof}

\end{document}